\documentclass{article}
\usepackage{graphicx}

\usepackage{braket}

\title{Log-concavity and tunneling:\\ adiabatic quantum optimization for convex functions (with a spike)}

\author{
Arthur Braida$^{1}$, Elie Bermot$^{1,2}$, Simon Apers$^{1}$\\
\\
$^{1}$ Université Paris Cité, CNRS, IRIF, Paris\\
$^{2}$ Pasqal, 24 Av. Emile Baudot, 91120 Palaiseau, France
}

\date{\today}

\usepackage[left=2.5cm, right=2.5cm, top=2.5cm, bottom=2.5cm]{geometry}
\usepackage{amssymb,amsthm,amsmath}
\usepackage{caption}
\usepackage{subcaption}
\usepackage{commath}
\usepackage{braket}
\usepackage{xcolor}
\usepackage{graphicx}
\usepackage{dsfont}
\usepackage{tikz}
\usepackage{appendix}
\usepackage{tkz-tab}
\usepackage[hidelinks, final=true,colorlinks = true ,allcolors = {blue}]{hyperref}
\usepackage[capitalize]{cleveref}

\usepackage{thm-restate}
\newtheorem{theorem}{Theorem}
\newtheorem{theorem*}{Theorem}
\newtheorem{main_res*}{Main Result}

\newtheorem{lemma}{Lemma}
\newtheorem{corollary}{Corollary}
\newtheorem{corollary*}{Corollary}

\newtheorem{definition}{Definition}
\newtheorem{definition*}{Definition}

\newtheorem{claim}{Claim}

\newtheorem{remark}{Remark}

\begin{document}

\maketitle

\begin{abstract}
Quantum tunneling is expected to provide a computational speedup in quantum computing, a phenomenon that Adiabatic Quantum Optimization (AQO) aims to leverage. While some academic proofs of concept have been studied, such as the “Hamming weight with a spike” (HWS) problem, the algorithmic gains of this effect remain underexplored.
In this work we extend the analysis underlying HWS to more general potentials.

In the first half of the work, we establish (discrete) log-concavity of the ground state as a key structural property in this context.
We devise a framework for establishing log-concavity of the ground state for a large family of discrete, 1-dimensional Schrödinger operators.
The family includes convex potentials, but also certain potentials with local minima.
In the convex case, this provides a discrete version of a continuous result by Brascamp and Lieb \cite{brascamp1976extensions}.
We demonstrate the utility of our result by establishing new spectral gap bounds, going beyond related results by Jarret and Jordan \cite{Jarret_2014} for convex potentials.

In the second half of the work, we use our results on log-concavity to extend the perturbative analysis of HWS by Reichardt \cite{reichardt2004quantum} to the larger family of potentials with log-concave ground state.
As a concrete instantiation, we use our result to extend the HWS analysis from a linear potential (which is exactly solvable) to a quadratic potential (which is no longer solvable).
Our result strongly suggests the broader applicability of tunneling to convex potentials with spikes.
\end{abstract}

\section{Introduction}

Adiabatic quantum optimization (AQO) was introduced as a method for finding the 
minima of complicated cost functions $V:\{0,1\}^n \to \mathbb{R}$ on the hypercube, leveraging quantum fluctuations to escape 
local minima that trap classical algorithms~\cite{finnila1994quantum,Kadowaki_1998,Farhi_2001}. 
Its theoretical analysis reduces to understanding the spectral gap of the 
interpolating Hamiltonian
\begin{equation} \label{eq:AQO-H}
H(s)
= -(1-s) \sum_{i=1}^n X_i + s V
\end{equation}
for $s \in [0,1]$, where $X_i$ is the Pauli-$X$ operator acting on the $i$-th qubit, and $V$ is the diagonal operator satisfying $V \ket{x} = V(x) \ket{x}$ for $x \in \{0,1\}^n$.
If the gap remains at least inverse-polynomially 
large in $n$ throughout the annealing path, then the adiabatic theorem 
guarantees that the target ground state can be prepared efficiently. However, 
establishing lower bounds on the gap is in general a hard problem, and the cases 
where it can be done rigorously are relatively 
few -- see e.g.~\cite{Farhi_2001,roland2002quantum,anderson,reichardt2004quantum,hastings2013obstructions,gilyen2021sub,BCCCMNR_2025unstructured,lee2025quantum} for some examples.

A large part of the results concerns the restrictive yet still interesting family of cost functions that is called ``symmetric'', satisfying $V(x) = V(|x|)$.
A canonical such function is the Hamming weight function $V(x) = |x|$.
Due to the exact solvability of the adiabatic Hamiltonian for this potential, a significant body of work was devoted to the performance of AQO for perturbations of this function~\cite{farhi2002quantum, 
kong2015performancequantumadiabaticalgorithm,Muthukrishnan_2016,reichardt2004quantum,brady2016spectral,bergamaschi2020simulated}.
The resulting ``Hamming weight with a spike'' problem (HWS) concerns potentials of the form $V(x)=|x|+h(x)$, for some perturbation or ``spike'' $h$.
While such a spike can create local minima and obstruct classical methods such as simulated annealing, the aforementioned works derive precise conditions on the shape of the spike under which the gap remains open, and hence AQO still succeeds.
E.g., Reichardt~\cite{reichardt2004quantum} uses a perturbative argument based on the overlap between the unperturbed ground state and the spike to establish a spectral gap bound.
Subsequent 
works~\cite{kong2015performancequantumadiabaticalgorithm, Muthukrishnan_2016,brady2016spectral} 
extended this analysis, some of them using approximation methods such as the spin-coherent 
path-integral instanton approach~\cite{Muthukrishnan_2016} and the discrete 
WKB method~\cite{kong2015performancequantumadiabaticalgorithm}.

These analyses critically hinge on the fact that the unperturbed linear potential $V(x) = |x|$ yields an exactly solvable Hamiltonian, with explicit expressions for the eigenvalues and eigenvectors, and indeed there seems to be no work considering more general convex potentials with a spike.\footnote{There are some exceptions in the \emph{continuous} setting, see e.g.~\cite{liu2023quantum} and references therein.} 
In this work, we go beyond linear potentials and consider a larger family of potentials, including but not limited to the broadly relevant family of convex symmetric potentials.
While exact expressions for the corresponding ground states seem elusive, we establish strong structural properties based on ``log-concavity'' that give us a useful handle on them.
As our main proof-of-concept, we extend the HWS analysis from linear to quadratic potentials (for which already we have no closed-form ground state), showing that the tunneling mechanism persists even in this more challenging setting.
Quadratic potentials are the textbook workhorse for convex optimization, and we see our work as opening up the study of quantum tunneling in the more general setting of discrete convex optimization with AQO.

\subsection{Main contributions}

In the following we summarize our contributions in more detail.

\paragraph{(Ultra-)Log-concavity for a broad class of potentials.}
A discrete sequence $\{p_k\}_{k \in [n]}$ is \emph{log-concave} if for all $k$ it satisfies
\[
p_k^2
\geq p_{k+1}p_{k-1}.
\]
A distribution $\{p_k\}$ is \emph{ultra-log-concave} (of order $n$) if the sequence $\{p_k/\binom{n}{k}\}_{k \in [n]}$ is log-concave.

As part of our first contribution, we introduce log-concavity as a critical tool for analyzing (discrete) AQO.
In the continuous setting, log-concavity is well-known to be instrumental to proving spectral gap bounds, as is witnessed e.g.~by all the recent works on log-concave sampling \cite{chewi2022logconcave}.
In the discrete setting however, log-concavity seems to have been less explored in this context.
As an example, a key result by Brascamp and Lieb \cite{brascamp1976extensions} in the continuous setting established that the ground state of a Schrödinger operator with convex potential is (ultra-)log-concave.
Turning to the discrete setting, we were unable to find an analog of this key result.
As our first main result, we establish discrete log-concavity of the ground state of a rather large family of Hamiltonians, including Schrödinger operators on the discrete line with single-well potential, and the AQO Hamiltonian (\cref{eq:AQO-H}) with symmetric potentials that are convex (and even certain multi-well potentials).

We prove all of our results for 1-dimensional Hamiltonians of the form $H = -A + V$, where $A$ is band-diagonal with $A_{k,k-1}=A_{k-1,k}=a_k$ for log-concave $\{a_k\}$.
This includes path Hamiltonians, but also AQO Hamiltonians (cf.~\cref{eq:AQO-H}) with a symmetric potential, for which there is a well-known symmetry reduction from the hypercube to the path. 
We establish a condition called ``relative monotonicity'' (RM) of the potential $V$ with respect to the coefficients $\{a_k\}$ that ensures log-concavity of the ground state.
The RM-condition constrains ratios of consecutive values of the potential compared to the growth of $a_{k+1}/a_k$.

\begin{main_res*} [\cref{thm:ultra log-conc-conv} and \cref{informal:thm_lc_hill}]
\, \newline
$\bullet$ If a 1-dimensional Hamiltonian $H = -A + V$ satisfies the RM-condition then its ground state is \emph{log-concave}.\\
$\bullet$ If $H$ is the symmetry-reduced AQO Hamiltonian with symmetric and convex $V$, then the ground-state is \emph{ultra-log-concave}.
\end{main_res*}

The second bullet yields a discrete version of the continuous result by Brascamp and Lieb \cite{brascamp1976extensions}.
The first bullet allows us to go beyond convexity.
E.g., if the $a_k$'s are uniform (i.e., $H$ is the Schrödinger operator on a path), then the RM-condition merely requires that the potential be single-well (i.e., $-V$ must be unimodal) rather than convex.
On the other hand, in the hypercube case, the RM condition even captures certain multi-well potentials, having local minima.
This demonstrates how log-concavity can already capture certain quantum tunneling effects.

\paragraph{Spectral gap for log-concave ground state.}
As a first demonstration of the utility of log-concavity, we show that log-concavity (resp.~ultra-log-concavity) of the ground state implies an inverse-polynomial (resp.~constant) lower bound on the spectral gap of $H$.
More specifically, we use a discrete Hardy inequality \cite{miclo1999example} to prove the following theorem.

\begin{main_res*}[\cref{thm:lc-gap}]
Let $H = -A+V$ be a 1-dimensional Schrödinger operator with log-concave ground state $\{\psi_k\}$ and spectral gap $\Delta$.
Then
\[
\Delta \in \Omega\left(\frac{1}{n S_n} \right)
\]
with $S_n=\sum_k \frac{1}{a_k}$. In the case of the AQO Hamiltonian (\cref{eq:AQO-H}) with $V$ symmetric and convex we get \[\Delta \ge \frac{1}{4}.\]
\end{main_res*}

\noindent
We note that $S_n = n$ on the line, while $S_n \leq \pi$ for the symmetry-reduced hypercube.

Combined with our characterization of potentials yielding log-concave ground states, this implies polynomial lower bounds for a broad range of (convex, single-well, and even multi-well) potentials.
We compare our results with existing bounds, e.g.~those by Jarret and Jordan on a discrete version of the fundamental gap lemma for convex potentials \cite{Jarret_2014}.
Our bounds either recover their bounds (up to constants), or extend them beyond convex potentials.

\paragraph{Log-concavity enables tunneling through spikes.}
As a second demonstration, we show that the perturbative spike analysis of Reichardt \cite{reichardt2004quantum} can be easily extended from binomial states to general log-concave ground states.
We exploit their structural properties, such as unimodality and concentration, to establish the following result.

\begin{main_res*}[\cref{thm:spike_general}]
Consider a 1-dimensional Hamiltonian $H = -A + V$ with spectral gap $\Delta$ and a log-concave ground state with variance $\sigma$ and mean $\mu$.
Consider a non-negative perturbation $h$ supported on an interval $[\ell,u)$ and satisfying an upper bound $h(k) \leq \bar h$.
If either
\[
\bar h \frac{u-\ell}{\sigma}
 \in O(\Delta)
\quad \text{ or } \quad
\frac{\max\{\ell-\mu,\mu-u\}}{\sigma}
\in \Omega(\Delta),
\]
then the perturbed Hamiltonian with potential $V + h$ has spectral gap $\Omega(\Delta)$.
\end{main_res*}

This result relies on log-concavity to reduce the analysis of spike tunneling to estimating the mean and variance of the unperturbed ground state, and it provides explicit conditions on the spike parameters under which tunneling occurs. 

\paragraph{A tractable case: quadratic potentials.}
As our final demonstration (and our initial motivation), we deploy these results to study the ``quadratic Hamming weight with a spike'' (quadratic HWS) problem as a key proof of concept.
Quadratic HWS provides a direct extension to Reichardt's result on the (linear) HWS problem~\cite{reichardt2004quantum}, and it shows that tunneling happens for the quadratic case under the exact same conditions.

Specifically, we specialize to quadratic potentials $V_q(k) = a(k - k_0)^2$ for $k \in [n]$ and $k_0 \le 0$, so that the minimizer remains at $k=0$.
While for linear potentials, the ground state is known to follow a binomial distribution $\ket{\varphi_m}$ with mean $m$, there seems to be no closed-form expression available for the ground state under a quadratic potential.
Nonetheless, we show the somewhat surprising result that the ground state maintains a constant overlap of at least $1/2$ with a binomial state $\ket{\varphi_m}$.
Combining this with the log-concavity of the ground state that we established, such overlap bound yields tight bounds on the mean and variance of the quadratic ground state, throughout the whole annealing path.
Plugging these estimates into the spike bound above, we obtain our final result.

\begin{main_res*}[\cref{theorem:gap_quadra}]
Consider the AQO Hamiltonian (\cref{eq:AQO-H}) with symmetric potential $V(k) = a(k-k_0)^2$ for $k_0 \leq 0$, and spike perturbation $h$, supported on $[\ell,u) \subseteq [0,n]$ and satisfying $h(k) \leq \bar h$.
If
\begin{equation} \label{eq:spike-cond}
\bar{h}\,\frac{u-\ell}{\sqrt{\ell}}
\in O(1),
\end{equation}
then AQO prepares an approximation of $\ket{\psi_{qs}(1)} = \ket{0}$ from $\ket{\psi_{qs}(0)}$, the binomial state with mean at $n/2$, in time~$O(n^4)$.
\end{main_res*}

The spike condition \cref{eq:spike-cond} is identical to the one derived by Reichardt for the linear HWS problem~\cite{reichardt2004quantum}, confirming that the tunneling mechanism persists for quadratic potentials and strongly suggesting the broader applicability of AQO to convex potentials with spikes.

\subsection{Organization of the paper}

We now give a brief overview on the remainder of the paper.

In \cref{sec:prelim}, we collect preliminary facts on adiabatic quantum optimization and discrete log-concavity.
In \cref{sec:lc_gs} we prove our main results on log-concave ground states.
In \cref{sub:spectral_gap_lc} we use these to establish new spectral gap bounds based on Hardy's inequality.
In \cref{sec:LC-tunnel}, we use log-concavity to analyze quantum tunneling through spike perturbations.
Finally, in \cref{sec:quadra}, we analyze the quadratic HWS problem.

\section{Preliminaries} \label{sec:prelim}
\subsection{Adiabatic quantum optimization} \label{sec:AQO}
In this section, we recall the adiabatic theorem in the form that is relevant for preparing the ground state of a target Hamiltonian~$H(1)$.
Let $H(s)$, $s\in[0,1]$, be a smooth family of Hamiltonians interpolating between an initial Hamiltonian $H(0)$ and a final Hamiltonian $H(1)$.
Assume that the ground state of $H(s)$ is non-degenerate for all $s\in[0,1]$, and let $\Delta(s)$ denote the spectral gap between the ground state and the first excited state of $H(s)$, with
\[
\Delta_{\min} := \min_{s\in[0,1]} \Delta(s).
\]
The quantum adiabatic algorithm starts in the ground state of an easily preparable initial Hamiltonian $H(0)$ and evolves according to Schr\"odinger's equation:
\begin{equation} \label{eq:schr}
i\frac{d}{ds}U(s) = \tau H\!\left(s\right)U(s), \qquad \forall s \in [0, 1], \qquad U(0) = I.
\end{equation}
The quantity $\tau$ measures the effective evolution time of the resulting algorithm.
The adiabatic theorem then guarantees that, provided that this evolution time $\tau$ is sufficiently large, the evolved state
\[
\ket{\phi(s)} = U(s)\ket{\psi_0(0)}
\]
remains close to the instantaneous ground state $\ket{\psi_0(s)}$ throughout the evolution, and in particular to $\ket{\psi_0(1)}$ at the end.

\begin{theorem}[Adiabatic theorem, special case of~\cite{reichardt2004quantum}] \label{adiabatic_theorem}
Let $H(s)$, $s\in[0,1]$, be a twice differentiable family of Hamiltonians.
Assume that the corresponding ground state energy $\lambda_0(s)$ is non-degenerate and can be defined continuously for all $s\in[0,1]$, and that the spectral gap satisfies
\[
\Delta(s) \geq \Delta_{\min} > 0.
\]
If $\|H(s)\|,\|\dot{H}(s)\| \leq h_{\max}$ and
\[
\tau \in \Omega\left(\frac{h_{\max}^2}{\varepsilon \Delta_{\min}^3}\right),
\]
then $|\braket{\psi_0(1)|\phi(1)}|^2 \geq 1-\varepsilon^2$.
\end{theorem}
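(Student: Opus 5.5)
The plan is the standard proof of the adiabatic theorem (Jansen--Ruskai--Seiler / Reichardt style) with error bounded in terms of the gap. We work in the rescaled time $s\in[0,1]$ with generator $\tau H(s)$. We let $P(s)=\ket{\psi_0(s)}\bra{\psi_0(s)}$ be the spectral projector onto the ground state, which is twice differentiable because $\lambda_0(s)$ is non-degenerate, isolated by $\Delta(s)\ge\Delta_{\min}$, and $H$ is $C^2$. We introduce the adiabatic evolution $U_A(s)$ generated by $\tau H(s)+i[\dot P(s),P(s)]$. It satisfies the intertwining relation $U_A(s)P(0)=P(s)U_A(s)$, so $U_A(1)\ket{\psi_0(0)}$ equals $\ket{\psi_0(1)}$ up to a phase. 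The goal is then to bound $\|U(1)-U_A(1)\|$ (or at least its action on $P(0)$) by $O(\varepsilon)$. From this, $|\braket{\psi_0(1)|\phi(1)}|^2\ge 1-\varepsilon^2$ follows, after adjusting constants.

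By Duhamel's formula we write
\[
U(1)^\dagger U_A(1)-I=\int_0^1 U(s)^\dagger\,[\dot P(s),P(s)]\,U_A(s)\,ds .
\]
The off-diagonal operator $[\dot P,P]$ is written as a commutator with $H$: setting $X(s)=\frac{1}{2\pi i}\oint_\Gamma R(z)\dot P(s)R(z)\,dz$, where $R(z)=(H(s)-z)^{-1}$ and $\Gamma$ is a circle of radius $\Delta(s)/2$ around $\lambda_0(s)$, one gets $[\dot P,P]=[H,X]$. We then integrate by parts in $s$, using $\frac{d}{ds}U^\dagger = i\tau U^\dagger H$ and the analogous identity for $U_A$. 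This turns the integrand into $\frac{1}{\tau}$ times a boundary term plus an integral involving $X$, $\dot X$, and $X[\dot P,P]$, which produces the decisive factor $1/\tau$.

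The remaining work is resolvent estimates. On $\Gamma$ we have $\|R(z)\|\le 2/\Delta$, and $\dot P=-\frac{1}{2\pi i}\oint R\dot H R\,dz$. This gives $\|\dot P\|\lesssim \|\dot H\|/\Delta$, $\|X\|\lesssim \|\dot H\|/\Delta^2$, and $\|\ddot P\|\lesssim \|\ddot H\|/\Delta^2+\|\dot H\|^2/\Delta^3$. For $\dot X$ we differentiate under the integral; the contour can be held locally fixed since $\Delta$ is continuous. This gives $\|\dot X\|\lesssim \|\ddot H\|/\Delta^3+\|\dot H\|^2/\Delta^4$ times the contour length $O(\Delta)$, i.e.\ $O(\|\ddot H\|/\Delta^2+\|\dot H\|^2/\Delta^3)$. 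Collecting the terms gives
\[
\|U(1)-U_A(1)\|\lesssim \frac{1}{\tau}\left(\frac{\|\dot H\|}{\Delta_{\min}^2}+\sup_s\frac{\|\ddot H\|}{\Delta^2}+\frac{\|\dot H\|^2}{\Delta^3}\right).
\]
In the intended AQO application $H$ is affine in $s$, so $\ddot H=0$. Since $\Delta_{\min}\le 2\|H\|\le 2h_{\max}$, all terms are dominated by $h_{\max}^2/\Delta_{\min}^3$. Hence $\tau\ge C h_{\max}^2/(\varepsilon\Delta_{\min}^3)$ suffices. For a general $C^2$ family we assume, as the statement implicitly does via $h_{\max}$, that $\|\ddot H\|\le h_{\max}$ as well.

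The main obstacle is the bookkeeping in the integration-by-parts step, specifically getting the power $\Delta^{-3}$ rather than a worse one. The $\dot X$ term must be estimated via the contour of length $O(\Delta)$, not crudely, and $\dot P$ must be bounded through the resolvent, not through the eigenvector derivative. Since the paper states this as a special case of Reichardt's theorem, I would ultimately cite that result for the constants and only verify that its hypotheses hold here: non-degeneracy, continuity of $\lambda_0$, and the norm bounds.
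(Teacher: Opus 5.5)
Your proposal is correct and consistent with the paper. The paper gives no proof of its own and simply cites Reichardt, whose adiabatic theorem is the Jansen--Ruskai--Seiler-type estimate you reconstruct: boundary terms $\|\dot H\|/\Delta^2$ at $s=0,1$ plus $\int_0^1 (\|\dot H\|^2/\Delta^3+\|\ddot H\|/\Delta^2)\,ds$, all divided by $\tau$. You are also right that a bound on $\|\ddot H\|$ is implicitly needed for general $C^2$ families; it vanishes for the linear interpolations the paper actually uses. One small slip: the intermediate estimate should read $\|\ddot P\|\lesssim \|\ddot H\|/\Delta+\|\dot H\|^2/\Delta^2$, because you dropped the contour-length factor $O(\Delta)$ there. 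The bound you then state for $\|\dot X\|$ is nonetheless the correct one.
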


In this work, we consider only linear interpolations of the form
\[
H(s) = (1-s)H_0 + s H_1,
\]
for which the ground state energy is assumed to remain non-degenerate along the path, and hence it can be chosen continuously in $s$.
The initial Hamiltonian $H_0$ is chosen so that its ground state is easy to prepare, while the final Hamiltonian $H_1$ is the target Hamiltonian whose ground state we wish to obtain.
For a linear interpolation, the Hamiltonian is twice differentiable and
\[
\dot{H}(s) = H_1 - H_0,
\]
so that $\|\dot{H}(s)\| = O(\|H_0\|+\|H_1\|)$.
For constant error $\varepsilon$, the runtime bound therefore reduces to
\[
\tau = O\!\left(\frac{(\|H_0\|+\|H_1\|)^2}{\Delta_{\min}^3}\right).
\]
The main challenge is thus to prove a good lower bound on $\Delta_{\min}$.
Since in general practical settings, we have $\|H_0\|,\|H_1\| \in \mathrm{poly}(n)$, the adiabatic algorithm is efficient provided that
\[
\Delta_{\min} \in \Omega\!\left(\frac{1}{\mathrm{poly}(n)}\right).
\]

As a special case, we often look at
\[
H_0 = -\sum_{i=1}^n X_i,
\]
whose unique ground state is the uniform superposition
\[
\ket{\psi_0(0)}=\ket{+}^{\otimes n}
= \left(\frac{\ket{0}+\ket{1}}{\sqrt{2}}\right)^{\otimes n}.
\]
If, moreover, $H_1$ is diagonal in the computational basis, then the interpolation is stoquastic, and the Perron-Frobenius theorem implies that the ground state may be chosen to have positive real amplitudes.
In this case, one may equivalently write
\[
H(s)=-(1-s)\sum_{i=1}^n X_i + sV,
\]
with $V$ a diagonal potential.

\subsection{Symmetry reduction} \label{sec:sym}

In this section, we describe the reduction from the bitstring space $\{0,1\}^n$ of dimension $2^n$ to the Hamming weight subspace of dimension $n+1$ spanned by 
uniform superpositions over bitstrings of a given Hamming weight.

\begin{definition}
The Hamming weight subspace is spanned by the vectors
\begin{equation*}
\ket{k}=\frac{1}{\sqrt{\binom{n}{k}}}\sum_{|x|=k}\ket{x},
\qquad 0 \leq k \leq n.
\end{equation*}
\end{definition}

The key point is that, for a symmetric potential $V$ satisfying $V\ket{x} = V(|x|)\ket{x}$, the adiabatic algorithm remains restricted to the Hamming weight basis.
In this subspace, the mixing Hamiltonian $\sum_i X_i$ becomes the tridiagonal operator:
\begin{equation} \label{eq:sym-red}
(A_{hc})_{k-1,k}=(A_{hc})_{k,k-1}:=a_k:=\sqrt{k(n+1-k)}
\end{equation}
for $0<k\leq n$.
Since we assume symmetric potentials throughout this work, we will only consider symmetry-reduced Hamiltonians of the form $H(s)=-(1-s)A_{hc}+sV.$
A more detailed treatment of this reduction is given in \cite[Section~IV]{Jarret_2014}.

\subsection{Binomial state and Hamming weight Hamiltonian}\label{ssec:binHW}

We define the binomial state as follows.

\begin{definition}[Binomial state $\ket{\varphi_m}$]\label{def:psi_m}
For $m \in [0,n/2]$, define the binomial state $\ket{\varphi_m}$ as
\begin{equation*}
\ket{\varphi_m}=\sum_{k=0}^n \sqrt{\binom{n}{k}} \left(\frac{m}{n}\right)^{k/2} \left(1-\frac{m}{n}\right)^{\frac{n-k}{2}} \ket{k}.
\end{equation*}
\end{definition}

The distribution $\lvert \braket{k|\varphi_m}\rvert^2$ describes a binomial law $\mathcal{B}(n,m/n)$, and so the mean and variance are
\begin{equation*}
\braket{\varphi_m|K|\varphi_m}
= m
\quad \text{ and } \quad
\braket{\varphi_m|(K-m)^2|\varphi_m}
= \frac{m(n-m)}{n},
\end{equation*}
where $K$ denotes the linear potential, $K \ket{k} = k \ket{k}$.

Incidentally, as discussed for instance in~\cite{reichardt2004quantum}, the binomial state is also the ground state of the linear Hamming weight Hamiltonian.
\begin{lemma}[(Linear) Hamming weight Hamiltonian] \label{lem:linear-HW}
Fix $m \in [0,n]$ and let
\begin{equation} \label{eq:alpha-m}
\alpha=\frac{n-2m}{\sqrt{m(n-m)}}.
\end{equation}
Then the Hamiltonian $H_\alpha = -A_{hc} + \alpha K$ has ground state $\ket{\varphi_m}$, ground state energy $\lambda_\alpha = \frac{n}{2} \big( \alpha - \sqrt{4+\alpha^2} \big)$ and spectral gap
\[
\Delta_\alpha
= \sqrt{4 + \alpha^2}.
\]
\end{lemma}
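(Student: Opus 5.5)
The plan is to undo the symmetry reduction and work on the full hypercube, where $H_\alpha$ becomes a sum of commuting single-qubit terms. On $\{0,1\}^n$ we have $K=\sum_{i=1}^n \frac{1-Z_i}{2}$ and $A_{hc}$ is the restriction of $\sum_i X_i$. Hence $H_\alpha$ is the restriction to the symmetric subspace of
\[
\widetilde H_\alpha=\sum_{i=1}^n h_i,\qquad h=-X+\tfrac{\alpha}{2}(1-Z).
\]
Since the $h_i$ act on distinct qubits, the spectrum of $\widetilde H_\alpha$ consists of all sums of eigenvalues of $h$, one chosen per qubit. The eigenvalues of the $2\times 2$ matrix $h$ are
\[
\frac{\alpha}{2}\pm\sqrt{1+\frac{\alpha^2}{4}}=\frac{1}{2}\bigl(\alpha\pm\sqrt{4+\alpha^2}\bigr).
\]
This immediately gives the ground energy
\[
\lambda_\alpha=\frac{n}{2}\bigl(\alpha-\sqrt{4+\alpha^2}\bigr).
\]
The first excited energy flips exactly one factor to the upper eigenvalue, so the gap is the single-qubit gap $\sqrt{4+\alpha^2}$.

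Next I would identify the ground state. The single-qubit ground state has the form $\sqrt{1-p}\ket{0}+\sqrt{p}\ket{1}$. I would solve the $2\times2$ eigenvalue equation for $p$ and check that $p=m/n$ under the choice of $\alpha$ in \cref{eq:alpha-m}. Writing $p=m/n$, the equation $\alpha\sqrt{p}-\sqrt{1-p}=\lambda\sqrt p$ combined with $-\sqrt p=\lambda\sqrt{1-p}$ gives
\[
\alpha=\frac{1-2p}{\sqrt{p(1-p)}}=\frac{n-2m}{\sqrt{m(n-m)}},
\]
which matches. The tensor power of this state is symmetric. Expanding it in the basis $\ket{k}$ of the Hamming weight subspace gives exactly the coefficients $\sqrt{\binom{n}{k}}p^{k/2}(1-p)^{(n-k)/2}$ of $\ket{\varphi_m}$.

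The one subtle step is the gap: I must check that the first excited state of the full operator lies in the symmetric subspace, so that the gap of the reduced operator is not larger. The symmetrized single-excitation state $\sum_i \ket{g}^{\otimes (i-1)}\ket{e}\ket{g}^{\otimes(n-i)}$ is symmetric and has energy $\lambda_\alpha+\sqrt{4+\alpha^2}$, which settles this. The reduced spectrum is then $\lambda_\alpha+j\sqrt{4+\alpha^2}$ for $j=0,\dots,n$. Finally, the edge cases $m=0$ and $m=n$ correspond to $\alpha=\pm\infty$ and can be excluded or treated as limits.
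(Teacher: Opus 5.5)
Your proof is correct and is essentially the standard argument: the paper gives no proof of this lemma, deferring to Reichardt, where the linear Hamming-weight Hamiltonian is likewise treated as $n$ decoupled qubits, so the ground state is a product (binomial) state and the gap is the single-qubit gap. Your check that the symmetrized single-excitation state lies in the Hamming-weight subspace is exactly what is needed to transfer the gap $\sqrt{4+\alpha^2}$ to the reduced operator $-A_{hc}+\alpha K$; the endpoints $m\in\{0,n\}$, where $\alpha$ is undefined, are a defect of the statement rather than of your argument.
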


\subsection{Discrete log-concavity} \label{sec:discrete-LC}

A key concept in our work is that of (discrete) log-concavity.  In this section we recall some properties of log-concave sequences.

\begin{definition}[Discrete log-concavity~\cite{JOHNSON2007791}]\label{def:1d-logconc}
A non-negative sequence $\{p_k\}_{k\in\mathbb{Z}}$ is log-concave on interval $I \subseteq \mathbb{Z}$ if
\begin{equation} \label{eq:LC}
p_k^2\ge p_{k+1}p_{k-1}
\end{equation}
for all $k \in I$.
If $I = \mathbb{Z}$ then we simply say that $\{p_k\}_{k\in\mathbb{Z}}$ is log-concave.
\end{definition}

\noindent
Equivalently, defining the score sequence $\{r_k\}$ by $r_k=p_{k+1}/p_k$, log-concavity at $k$ is equivalent to $\{r_k\}$ being non-increasing from $k-1$ to $k$.
It turns out that many classical discrete distributions belong to this class: the discrete 
uniform, Bernoulli, binomial, Poisson, geometric, and negative binomial 
distributions, as well as convolutions of Bernoulli distributions with 
arbitrary parameters (see~\cite{Johnson_2006} and references therein).

In this list, the binomial distribution will play a privileged role throughout this work.
In fact, one can base a strengthened notion of log-concavity on the binomial distribution.

\begin{definition}[{Discrete ultra-log-concavity \cite{liggett1997ultra}}] \label{def:ULC}
Fix integer $n \geq 0$ and let $q_k = \binom{n}{k}$ for $k \in [n]$.
A non-negative sequence $\{p_k\}_{k\in [n]}$ is ultra-log-concave (of order $n$) if
\begin{equation} \label{eq:ULC}
\left(\frac{p_k}{q_k}\right)^2
\ge \frac{p_{k+1}}{q_{k+1}} \frac{p_{k-1}}{q_{k-1}} \qquad (0 < k < n).
\end{equation}
\end{definition}

\subsubsection{Useful facts and bounds}
 
We will use the following two elementary facts.

\begin{lemma}
Fix an interval $I \subseteq \mathbb{Z}$ and a non-negative sequence $\{p_k\}_{k \in \mathbb{Z}}$.\\
$\bullet$ If $\{p_k\}$ is concave on $I$, then it is log-concave on $I$. \\
$\bullet$ If $\{p_k\}$ is log-concave on $I$, then so is $\{c^k p_k\}$ for any $c \geq 0$.
\end{lemma}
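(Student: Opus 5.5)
Both bullets are direct verifications of the defining inequality \eqref{eq:LC} at each index $k \in I$. In each case we compare $p_k^2$ with $p_{k+1}p_{k-1}$, using that all terms are non-negative. Throughout, concavity on $I$ is read as $2p_k \ge p_{k+1}+p_{k-1}$ for $k\in I$, the same index set on which \eqref{eq:LC} is required.

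For the first bullet, fix $k \in I$. Concavity gives $p_k \ge \frac{p_{k+1}+p_{k-1}}{2}$. Since both sides are non-negative, we may square:
\[
p_k^2 \ge \left(\frac{p_{k+1}+p_{k-1}}{2}\right)^2 .
\]
The AM--GM inequality for the non-negative numbers $p_{k+1},p_{k-1}$ then gives $\left(\frac{p_{k+1}+p_{k-1}}{2}\right)^2 \ge p_{k+1}p_{k-1}$. Chaining the two inequalities yields \eqref{eq:LC} at $k$. The only point needing care is that squaring preserves the inequality, and this uses non-negativity of $p_k$ and of the average.

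For the second bullet, set $\tilde p_k = c^k p_k$, which is again non-negative. For $k\in I$,
\[
\tilde p_k^2 = c^{2k}p_k^2 \ge c^{2k}p_{k+1}p_{k-1} = c^{k+1}p_{k+1}\,c^{k-1}p_{k-1} = \tilde p_{k+1}\tilde p_{k-1}.
\]
The inequality step multiplies log-concavity of $\{p_k\}$ at $k$ by $c^{2k}\ge 0$. The case $c=0$ needs the convention $0^0=1$: then $\tilde p$ is supported only at $k=0$. For $k\neq 0$ the right-hand side contains a factor $c^{k+1}$ or $c^{k-1}$ with nonzero exponent, hence vanishes, or else both sides are zero. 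At $k=0$ the right-hand side is $0\le p_0^2$. In every case the inequality holds.

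Neither bullet presents a real obstacle. The only subtleties are the non-negativity needed to square in the first bullet and the degenerate case $c=0$ in the second.
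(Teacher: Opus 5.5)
Your proposal is correct and follows the paper's proof. For the first bullet you use concavity, square, and apply AM--GM; for the second you observe that the factor $c^{2k}=c^{k+1}c^{k-1}$ cancels from both sides of the log-concavity inequality. Your extra treatment of $c=0$ only makes sense for non-negative indices, since $0^k$ is undefined for $k<0$; the paper does not address this degenerate case at all.
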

\begin{proof}
For the first bullet we combine concavity ($2p_k\geq {p_{k+1}+p_{k-1}}$) with the AM-GM inequality (${p_{k-1}+p_{k+1}}\geq 2\sqrt{p_{k-1} p_{k+1}} $) to yield
\[
p_k^2
\geq \big(\frac{p_{k+1}+p_{k-1}}{2} \big)^2
\geq p_{k-1}p_{k+1}.
\]
The second bullet trivially follows from the fact that the rescaling by $c^k$ drops out in the log-concavity condition \cref{eq:LC}.
\end{proof}

\begin{remark} \label{remark:hypercube-coeff}
A consequence of the first bullet is that the hypercube coefficients $a_k=\sqrt{k(n+1-k)}$ from \cref{eq:sym-red} are log-concave, since they are concave:
\[
a_{k+1}+a_{k-1}-2a_k
=\sqrt{(k+1)(n-k)}+\sqrt{(k-1)(n+2-k)}-2\sqrt{k(n+1-k)}<0
\]
\end{remark}

We continue with some crucial yet less elementary claims.
A first key property of a log-concave distribution is that it is \emph{unimodal}, meaning that there exists a mode $k^*$ such that $p_{k-1} \leq p_k$ for $k \leq k^*$ and $p_k \geq p_{k+1}$ for $k \geq k^*$, and so in particular $\max_k p_k = p_{k^*}$.
\begin{claim}[{{\cite[Theorem 3]{Keilson01061971}}}]
A log-concave distribution $\{p_k\}$ is unimodal.
\end{claim}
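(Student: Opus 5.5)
The plan is to work with the ratio sequence $r_k = p_{k+1}/p_k$, defined wherever $p_k > 0$. Log-concavity at $k$ says exactly that $r_k \le r_{k-1}$, so the whole argument reduces to monotonicity of $\{r_k\}$.

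The first step is to handle zeros, by showing that the support $\{k : p_k > 0\}$ is an interval of consecutive integers (no internal zeros). This is the standard convention for discrete log-concave distributions, as in \cite{Keilson01061971} and \cite{JOHNSON2007791}. Note that the inequality \cref{eq:LC} alone does not exclude gaps: the sequence $1,0,1$ satisfies it. So I will make explicit that ``log-concave distribution'' includes the contiguous-support convention, which is the standing one in the cited references. I expect this to be the only real subtlety. If needed, I would add the contiguous-support assumption to the definition rather than try to derive it.

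With the support equal to an interval $\{L,\dots,R\}$, the ratios $r_k$ are positive and finite for $L \le k < R$. Log-concavity at each interior $k$ with $p_k > 0$ gives
\[
r_k = \frac{p_{k+1}}{p_k} \le \frac{p_k}{p_{k-1}} = r_{k-1},
\]
so $\{r_k\}$ is non-increasing on $L \le k < R$. Define the mode
\[
k^* = \max\{k \in \{L,\dots,R\} : k = L \text{ or } r_{k-1} \ge 1\}.
\]
By monotonicity of the ratios, $r_{k-1} \ge 1$ for all $L < k \le k^*$, which gives $p_{k-1} \le p_k$. Likewise $r_k < 1$ for all $k^* \le k < R$, which gives $p_k > p_{k+1}$.

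Outside the support the values are $0$, so both monotonicity conditions hold trivially there. This holds also at the boundary, since $p_{L-1} = 0 \le p_L$ and $p_R \ge p_{R+1} = 0$. Hence $p$ is non-decreasing up to $k^*$ and non-increasing after it, which is unimodality with $\max_k p_k = p_{k^*}$. For an infinite support the same argument goes through, since the set defining $k^*$ is bounded above whenever $p$ is summable: $r_k \ge 1$ for all $k$ would contradict normalization.
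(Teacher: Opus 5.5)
Your argument is correct. The paper gives no proof of its own here; it simply cites \cite[Theorem 3]{Keilson01061971}. Your ratio argument is the standard elementary proof and makes the claim self-contained: $r_k=p_{k+1}/p_k$ is non-increasing on the support by \cref{eq:LC}, and you take $k^*$ to be the largest $k$ in the support with $k=L$ or $r_{k-1}\ge 1$. It also exposes a hypothesis the paper leaves implicit. With \cref{def:1d-logconc} as literally stated, the claim is false, so assuming contiguous support is a necessary correction, not merely a convention. It costs nothing in the paper, since every sequence the claim is applied to has full support (Perron--Frobenius ground states with $\psi_k>0$, and binomial states).

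Two small repairs are needed. First, your counterexample is wrong, although the point it is meant to support is right. The sequence $1,0,1$ violates \cref{eq:LC} at the middle index, since $0\ge 1$ fails; in general the inequality rules out a single zero flanked by positive entries. What escapes it is a run of at least two zeros, e.g.\ $\tfrac12,0,0,\tfrac12$. This satisfies \cref{eq:LC} at every $k$ (both sides vanish at the two interior indices) and is bimodal.

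Second, in the infinite-support remark you only show that the set defining $k^*$ is bounded above. If the support is unbounded below, there is no $L$, and you must also check that the set is nonempty. This follows the same way: if $r_{k-1}<1$ for all $k$, then $p$ strictly increases to the left, again contradicting summability.
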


Moreover, the mode of a log-concave distribution can be related to its variance.

\begin{claim}[{{\cite[Theorem 1.1]{bobkov2021concentrationfunctionsentropybounds}}}]
\label{eq:LC-bound-1}
Let $\sigma^2$ be the variance of a log-concave distribution $\{p_k\}$.
Then
\begin{equation*}
\frac{1}{\sqrt{1+12\sigma^2}}\le p_{k^*}\le\frac{2}{\sqrt{1+4\sigma^2}}.
\end{equation*}
\end{claim}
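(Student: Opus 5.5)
The two inequalities need different arguments. The lower bound holds for every distribution on $\mathbb{Z}$. The upper bound is where log-concavity is used, and that is where I expect the real work. Throughout, write $M = p_{k^*} = \max_k p_k$. The two bounds are equivalent to
\[
\sigma^2 \ge \frac{1}{12}\Big(\frac{1}{M^2}-1\Big), \qquad \sigma^2 \le \frac{1}{M^2}-\frac14 .
\]

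\textbf{Lower bound (bathtub argument).} For any $c$ we have $\sigma^2 = \min_c \sum_k p_k (k-c)^2$. Given the constraints $0 \le p_k \le M$ and $\sum_k p_k = 1$, the sum $\sum_k p_k(k-c)^2$ is minimised by placing mass $M$ on the integers closest to $c$ and putting the leftover fraction at the next-closest points. If $N = 1/M$ is an integer, the optimiser is the uniform distribution on $N$ consecutive integers. Its variance is $(N^2-1)/12$, which gives equality. For non-integer $1/M$, I would check directly that the fractional bathtub configuration still has second moment at least $(M^{-2}-1)/12$. This is a short calculation, using that the function is convex in the leftover mass. This part is routine.

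\textbf{Upper bound (reduction to geometric tails).} Write $p_{k^*+j} = M e^{-\phi(j)}$, where $\phi$ is convex with $\phi(0)=0$ and $\phi \ge 0$; this is exactly log-concavity plus unimodality, via the Keilson claim above. I would bound $\sigma^2 \le \sum_k p_k (k-k^*)^2$ and split this into the right tail $j>0$ and the left tail $j<0$. The key step is an extremal reduction. On each side, fix the tail mass $m_\pm$. Among convex $\phi$ with $\phi(0)=0$ and that tail mass, the second moment $\sum_j j^2 e^{-\phi(j)}$ is maximised by linear $\phi$, i.e.\ a geometric tail $M\rho_\pm^{\,j}$. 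The justification is a standard crossing argument: a convex $\phi$ crosses the linear one with the same mass at most once. Hence mass is pushed outward under the geometric profile, and $j^2$ is increasing in $|j|$. I would make this rigorous by the usual ``one sign change implies domination of increasing functionals'' lemma.

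\textbf{Optimising over two-sided geometrics.} After the reduction, it remains to maximise the variance of the two-sided geometric family
\[
p_{k^*+j} = M\rho_+^{\,j}\ (j \ge 0), \qquad p_{k^*-j} = M\rho_-^{\,j}\ (j \ge 1),
\]
subject to total mass $1$. This is an explicit two-parameter problem with one constraint. I would compute the variance in closed form and show it is at most $M^{-2}-\tfrac14$. As a sanity check, the one-sided geometric with $\rho_-=0$ and $\rho_+=1-M$ has variance $(1-M)/M^2 \le M^{-2}-\tfrac14$, since $M \le 1 \le 4$. I expect the main obstacle to be the extremal reduction, because it controls $\sum p_k(k-k^*)^2$ rather than the variance about the mean, and may lose too much near the balanced case. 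If that happens, the fallback is to run the same crossing argument for $\sum_k p_k(k-\mu)^2$ directly. An alternative is to smooth by an independent uniform $U$ on $[0,1]$ and appeal to the continuous bound $\mathrm{Var} \le 1/f_{\max}^2$ for log-concave densities, but then one must check that the smoothed law is log-concave, which is not automatic.
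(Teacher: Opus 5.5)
The paper gives no proof of this claim; it imports it from Bobkov, Marsiglietti and Melbourne, so your argument has to stand on its own. Your lower bound is fine. A one-line version: $X+U$, with $U$ uniform on $[0,1]$ and independent of $X$, has density at most $M$ and variance $\sigma^2+\tfrac{1}{12}$, while every density bounded by $M$ has variance at least $1/(12M^2)$.

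The upper bound fails at its first step, $\sigma^2\le\sum_k p_k(k-k^*)^2$. Your geometric-tail reduction is correct, but the quantity it controls is \emph{not} bounded by $M^{-2}-\tfrac14$. For the one-sided geometric $p_k=M(1-M)^k$, $k\ge 0$,
\[
\sum_k k^2 p_k=\frac{(1-M)(2-M)}{M^2},
\]
which exceeds $M^{-2}-\tfrac14$ for every $M<\tfrac25$ and behaves like $2/M^2$ as $M\to0$. In fact $(M^{-1}-1)(2M^{-1}-1)$ is exactly the supremum of $\sum_j g_j(j-k^*)^2$ over two-sided geometrics $g$ with peak $M$. So the most your route can yield is $\sigma^2\le(M^{-1}-1)(2M^{-1}-1)$, off by a factor of about $2$.

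Your sanity check computed the variance $(1-M)/M^2$ of this example, not the quantity your reduction bounds, which is why it looked harmless. The loss is also not near the balanced case: for symmetric laws mode and mean coincide and nothing is lost. It is largest for the skewed one-sided law, which is exactly the near-extremal case of the inequality.

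Neither fallback repairs this. The function $(k-\mu)^2$ is not monotone in $|k-k^*|$ on the side of the mode containing $\mu$, and replacing the tails moves $\mu$, so the single-crossing domination no longer applies. And $X+U$ has a step density, which is never log-concave unless it is uniform.

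Your crossing lemma does suffice once you symmetrize. Let $X,X'$ be i.i.d.\ with law $p$ and $Z=X-X'$. Then $Z$ is log-concave (convolution preserves discrete log-concavity for interval supports) and symmetric, so its mode and mean are both $0$. Moreover $\mathrm{Var}(Z)=2\sigma^2$ and $Q:=\mathrm{Pr}[Z=0]=\sum_k p_k^2$. Each tail of $Z$ has mass $(1-Q)/2$. Comparing it with the geometric tail $Q\rho^j$, $\rho=\frac{1-Q}{1+Q}$, gives with no loss $2\sigma^2\le\frac{1-Q^2}{2Q^2}$, i.e.\ $Q\le1/\sqrt{1+4\sigma^2}$.

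A second crossing comparison gives the factor $2$. Write $\sum_{j\ge1}(p_{k^*+j}^2-g_j^2)=\sum_{j\ge1}(p_{k^*+j}-g_j)(p_{k^*+j}+g_j)$. The first factor changes sign once, from $+$ to $-$, with zero sum, and the second factor is non-increasing. Hence, at fixed peak and tail mass, the geometric tail $g$ minimizes $\sum_{j\ge1}p_{k^*+j}^2$, and likewise on the left. A geometric tail contributes at least $-M^2/4$ to $\sum_k p_k(p_k-M/2)$ and the mode contributes $M^2/2$. This gives $\sum_kp_k^2\ge M/2$, hence $M\le2Q\le 2/\sqrt{1+4\sigma^2}$, exactly the stated constant.

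Finally, state the interval-support assumption explicitly. Your representation $p_{k^*+j}=Me^{-\phi(j)}$ with $\phi$ convex needs it, and the claim is false without it: $p=(\tfrac12,0,0,0,\tfrac12)$ satisfies $p_k^2\ge p_{k-1}p_{k+1}$ everywhere yet has $\sigma^2=4>M^{-2}-\tfrac14$.
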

Another important property of log-concave distributions is that they exhibit geometric decay away from their mode. This is captured by the following tail bound.
\begin{claim}[{{\cite[Lemma 2.3]{marsiglietti2024notestatisticaldistancesdiscrete}}}]
\label{eq:LC-bound-2}
If $\mu$ is the mean of a log-concave distribution $\{p_k\}$, then for all $\alpha\ge 0$ it holds that
\begin{equation*} \sum_{|k-\mu|\ge \alpha} p_k \;\le\; 2 \exp\!\left( -\frac{\alpha}{2\bigl(2\sum_{k} |k-\mu|\,p_k + 1\bigr)} \right). \end{equation*} 
\end{claim}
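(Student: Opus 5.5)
The plan is to prove the upper-tail bound $P(K\ge \mu+\alpha)\le \exp\bigl(-\alpha/(2(2s+1))\bigr)$ and then sum it with the symmetric lower-tail bound. Here $s:=\sum_k |k-\mu|p_k$ is the mean absolute deviation, and the factor $2$ in the statement comes from that sum. The lower tail follows by applying the same argument to the reflected sequence $\{p_{-k}\}$, which is again log-concave and has mean $-\mu$ and the same $s$. So it suffices to treat $T(m):=\sum_{k\ge m}p_k$ for integers $m$.

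\textbf{Step 1: $T$ is log-concave.} The tail sums of a log-concave sequence on $\mathbb{Z}$ form a log-concave sequence. One proof uses the fact that $\{\mathds{1}_{k\ge 0}\}$ is log-concave and that convolution preserves discrete log-concavity (Hoggar's theorem). An elementary alternative is to check $T(m)^2\ge T(m+1)T(m-1)$ directly, which reduces to $p_m T(m)\ge p_{m-1}T(m+1)$. This holds because the ratio $p_{k+1}/p_k$ is non-increasing, so $p_m p_{k}\ge p_{m-1}p_{k+1}$ for every $k\ge m$, and summing over $k\ge m$ gives the inequality. Hence $m\mapsto \log T(m)$ is concave on its support.

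\textbf{Step 2: two anchor points and a chord argument.} Let $m_0=\lceil \mu\rceil$ and $L=\lceil 2s+1\rceil$.

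For the far point, Markov's inequality applied to $|K-\mu|$ gives $T(m_0+L)\le P(K-\mu\ge L)\le s/L$. This is at most $s/(2s+1)<1/2$.

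For the near point, I need a lower bound on $T(m_0)=P(K\ge \lceil\mu\rceil)$. For a log-concave law the mean and the median are within a bounded distance, but the cleanest route that only uses $s$ is the following. The mass at or above $\lceil\mu\rceil$ must balance the mass below $\mu$ in the identity $\sum_k (k-\mu)p_k=0$. Since $\sum_{k>\mu}(k-\mu)p_k=s/2$, and unimodality prevents this mass from being carried by an arbitrarily thin far tail, one expects $T(m_0)\gtrsim$ a constant times $\exp(-O(1))$.

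Concavity of $\log T$ then gives, for $j\ge 1$,
\[
\log T(m_0+jL)\le \log T(m_0)+j\bigl(\log T(m_0+L)-\log T(m_0)\bigr),
\]
so the tail decays geometrically in blocks of length $L$ with ratio $T(m_0+L)/T(m_0)$. If this ratio is at most $e^{-1/2}$, then $T(\mu+\alpha)\le \exp\bigl(-\lfloor \alpha/L\rfloor/2\bigr)$. Absorbing the floor, the rounding of $\mu$, and the $+1$ into the choice $L\approx 2s+1$ then yields the exponent $\alpha/(2(2s+1))$.

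\textbf{Main obstacle.} The delicate step is the constant bookkeeping in Step 2. Specifically, I must show $T(m_0+L)/T(m_0)\le e^{-1/2}$, which requires the lower bound on $T(\lceil\mu\rceil)$, and I must handle the discretization of $\alpha/L$ without losing more than the stated factors.

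What I would try first is to avoid the lower bound on $T(m_0)$ entirely. The idea is to anchor the chord instead at the mode $k^*$, or at the leftmost point where $T$ is still $\ge 1/2$, using unimodality (the Claim from \cite{Keilson01061971}). I would then bound the distance from that anchor to $\mu$ by $O(s+1)$ via the same Markov argument, now applied to the lower tail. If the constants do not close, I would enlarge $L$ to $2(2s+1)$ and use Markov's inequality to obtain ratio $\le 1/4$. I would then check whether this slack exactly produces the factor $2$ in the denominator of the exponent, which is the form of the statement.
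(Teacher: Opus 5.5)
The paper gives no proof of this claim; it is quoted from \cite{marsiglietti2024notestatisticaldistancesdiscrete}. Your argument therefore has to stand on its own, and at present it does not: the quantitative core of Step~2 is missing, and the route you sketch for it cannot be completed. Step~1 and the reflection trick are fine. Note that Step~1, like the claim itself, needs $p$ to be positive on an interval. The paper's pointwise definition also admits $p_0=1-\frac1N$, $p_N=\frac1N$, for which the bound fails at $\alpha=N-1$ once $N$ is large. The problems in Step~2 are the following.

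(i) There is no constant lower bound on $T(\lceil\mu\rceil)$ for discrete log-concave laws. For $p_0=1-\varepsilon$, $p_1=\varepsilon$ one has $\mu=\varepsilon$ and $T(\lceil\mu\rceil)=\varepsilon$, so $\lceil\mu\rceil$ is the wrong near anchor.

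(ii) Even when $T(m_0)\approx\frac12$, e.g.\ for $\mathcal{B}(n,\frac12)$ with $n$ large, your two-sided Markov bound gives $T(m_0+L)\le s/L\approx\frac12$. That yields a block ratio near $1$, not $e^{-1/2}$.

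(iii) The rounding cannot be absorbed. A ratio $e^{-1/2}$ per block of length $L=\lceil 2s+1\rceil$ gives decay rate $1/(2L)$, which is strictly below $1/(2(2s+1))$ whenever $2s+1\notin\mathbb{Z}$. This loss is linear in $\alpha$, and the prefactor $2$ cannot cover it.

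Your fallback is the right repair, but it needs exact inequalities rather than $O(s+1)$ estimates.

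\begin{itemize}
\item Anchor: let $M$ be the largest integer with $T(M)\ge\frac12$. It is a median, and a median minimizes $c\mapsto\mathbb{E}|K-c|$, so $|M-\mu|=|\mathbb{E}(K-M)|\le\mathbb{E}|K-M|\le\mathbb{E}|K-\mu|=s$.
\item Markov: use the one-sided bound $\Pr(K\ge\mu+t)\le s/(2t)$, which holds because $\mathbb{E}(K-\mu)_+=\mathbb{E}(K-\mu)_-=s/2$.
\item Block length: set $\delta=M-\mu\in[-s,s]$ and let $D\ge1$ be the smallest integer with $D+\delta\ge 3s$. Then $T(M+D)/T(M)\le s/(D+\delta)\le\frac13$.
\end{itemize}

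For $\alpha\ge 3s+1$ we have $\lceil\mu+\alpha\rceil\ge M+D$. Concavity of $\log T$ together with $T(M)\le 1$ then gives
\[
-\log\Pr(K\ge\mu+\alpha)\;\ge\;\frac{(\alpha-\delta)\log3}{D}\;\ge\;\frac{(\alpha-\delta)\log3}{3s-\delta+1}\;\ge\;\frac{(\alpha+s)\log3}{4s+1}\;\ge\;\frac{\alpha}{4s+2}.
\]
The third step holds because the middle expression is non-decreasing in $\delta$ when $\alpha\ge 3s+1$, so its minimum over $[-s,s]$ is at $\delta=-s$.

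For $\alpha<3s+1$, set $\theta:=\alpha/(4s+2)$; then $\theta<\frac34$. If $\theta\le\ln 2$, the right-hand side is at least $1$ and there is nothing to prove. Otherwise the two-sided Markov bound gives $s/\alpha\le 1/(4\theta)<0.37<2e^{-\theta}$.

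Adding the reflected tail gives exactly $2\exp(-\alpha/(2(2s+1)))$. So keep your Step~1, but anchor the chord at a median and use blocks of length about $3s$ rather than $2s+1$.
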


\begin{remark}
From this bound we can derive the more convenient bound
\begin{equation*}
\sum_{|k-\mu|\geq\alpha} p_k
\le 2\exp\!\left(-\frac{\alpha}{2(2\sigma+1)}\right).
\end{equation*}
This follows from Jensen's inequality applied to the concave function $\sqrt{\cdot}$:
\[
\sum_k |k-\mu| p_k
= \sum_k \sqrt{(k-\mu)^2}p_k
\leq \sqrt{\sum_k (k-\mu)^2 p_k}
= \sigma.
\]
\end{remark}

\subsubsection{Overlapping log-concave distributions}

We will prove a final bound that relates the mean and variance of two log-concave distributions that have a large overlap.

\begin{lemma} \label{lem:log-conc-overlap}
Let $\{p_k\},\{q_k\}$ be two log-concave distributions on $\mathbb{Z}$ with means $\mu_p$ and $\mu_q$ and variances $\sigma_p^2$ and~$\sigma_q^2$ such that $\left(\sum_k \sqrt{p_k q_k}\right)^2 \ge \frac12$.
Then
\[
\mu_q = \mu_p + O(\sigma_p) + O(1), \qquad
\sigma_q = \Theta(\sigma_p) + O(1).
\]
\end{lemma}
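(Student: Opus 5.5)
The plan is to turn the fidelity assumption into a lower bound on the mass of $q$ that sits in a window of width $O(\sigma_p)+O(1)$ around $\mu_p$. Log-concavity of $q$ (via \cref{eq:LC-bound-1} and \cref{eq:LC-bound-2}) will then force $\mu_q$ to be close to $\mu_p$ and $\sigma_q$ to be comparable to $\sigma_p$. The statement is symmetric in $p$ and $q$, so we only prove $\mu_q=\mu_p+O(\sigma_p)+O(1)$ and $\sigma_q\le O(\sigma_p)+O(1)$. The reverse bound $\sigma_p\le O(\sigma_q)+O(1)$ then follows by swapping roles.

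\textbf{Step 1: mass of $q$ near $\mu_p$.} Let $W=\{k:|k-\mu_p|<\alpha\}$ with $\alpha=C(2\sigma_p+1)$. Choose $C$ via the remark after \cref{eq:LC-bound-2} so that $p(W^c)\le\eta$ for a small constant $\eta$. Split the Bhattacharyya sum over $W$ and $W^c$ and apply Cauchy--Schwarz on each part:
\[
\tfrac{1}{\sqrt2}\le\sum_k\sqrt{p_kq_k}\le\sqrt{q(W)}+\sqrt{p(W^c)}.
\]
This gives $q(W)\ge c_0:=(1/\sqrt2-\sqrt\eta)^2>0$. By the same argument, $q(W')\ge c_0$ for the window $W'$ of half-width $\alpha$ around $\mu_q$, with $W'$ chosen so that $q(W'^c)\le\eta$.

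\textbf{Step 2: means and the upper bound on $\sigma_q$.} Apply \cref{eq:LC-bound-2} to $q$ itself: $q$ puts mass at most $\eta$ outside $\mu_q\pm C'(2\sigma_q+1)$. Both $W$ and this interval carry a constant fraction of $q$, so they must intersect, which gives $|\mu_q-\mu_p|\le\alpha+C'(2\sigma_q+1)$. It therefore suffices to bound $\sigma_q$.

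For $\sigma_q$, use the lower bound on the mode in \cref{eq:LC-bound-1}... wait, that bound points the wrong way. We need the upper bound $q_{k^*}\le 2/\sqrt{1+4\sigma_q^2}$. Since $W$ contains at most $2\alpha+1$ integers,
\[
c_0\le q(W)\le(2\alpha+1)\,q_{k^*}\le\frac{2(2\alpha+1)}{\sqrt{1+4\sigma_q^2}}.
\]
Hence $\sigma_q=O(\alpha)=O(\sigma_p)+O(1)$, and substituting back gives $\mu_q=\mu_p+O(\sigma_p)+O(1)$.

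\textbf{Step 3: the reverse bound.} Swapping $p$ and $q$ in Step 2 gives $\sigma_p=O(\sigma_q)+O(1)$. Together with Step 2, this yields $\sigma_q=\Theta(\sigma_p)+O(1)$, understood as two-sided comparability up to additive constants.

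\textbf{Main obstacle.} The delicate step is Step 1: choosing $\eta$ small enough that $1/\sqrt2-\sqrt\eta$ stays bounded away from $0$. This requires a large but absolute constant $C$ in the tail bound, which only costs constants. Some care is also needed to keep the additive $O(1)$ terms, since $\sigma$ may be small for near-point-mass distributions. The mode bound's denominator $\sqrt{1+4\sigma^2}$ and the $+1$ inside the tail bound are exactly what produce these additive constants.
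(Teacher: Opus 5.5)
Your proposal is correct and follows essentially the paper's proof. Both arguments split the Bhattacharyya sum with Cauchy--Schwarz to put constant $q$-mass in a window around $\mu_p$. Both then use the mode upper bound from \cref{eq:LC-bound-1} to get $\sigma_q = O(\sigma_p)+O(1)$, swap $p$ and $q$ for the reverse bound, and intersect the two concentration windows of $q$ to control $|\mu_q-\mu_p|$.

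The only substantive difference is that you control tails with the log-concave bound \cref{eq:LC-bound-2}, where the paper uses Chebyshev; Chebyshev needs no log-concavity for that step. Two presentation points should be tidied. The stray ``wait'' self-correction should go. The sentence about $W'$ is garbled: as written, $q(W')\ge 1-\eta$ holds trivially rather than $q(W')\ge c_0$, and a half-width $\alpha$ defined via $\sigma_p$ is the wrong scale around $\mu_q$. That sentence is never used, since Step 2 works directly with $\mu_q\pm C'(2\sigma_q+1)$. Also, ``both carry a constant fraction, so they intersect'' really needs $c_0+(1-\eta)>1$, which holds once $\eta$ is small.
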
 
\begin{proof}
We first show that for any $I \subset \mathbb{Z}$,
\begin{equation} \label{eq:qI}
q(I) \ge \left(\frac{1}{\sqrt{2}} - \sqrt{p(I^c)}\right)^2.
\end{equation}
Indeed, by Cauchy--Schwarz,
\[
\sum_{k \in I} \sqrt{p_k q_k}
\le \sqrt{\sum_{k\in I} p_k}\sqrt{\sum_{k\in I} q_k}
= \sqrt{p(I)\,q(I)}
\le \sqrt{q(I)},
\]
and similarly $\sum_{k \notin I} \sqrt{p_k q_k} \le \sqrt{p(I^c)\,q(I^c)} \le \sqrt{p(I^c)}$.
Since $|\braket{p|q}| \ge 1/\sqrt{2}$, this yields
\begin{equation*}
\frac{1}{\sqrt{2}}
\le \sum_k \sqrt{p_k q_k}
= \sum_{k \in I} \sqrt{p_k q_k}
+ \sum_{k \notin I} \sqrt{p_k q_k}
\le \sqrt{q(I)} + \sqrt{p(I^c)},
\end{equation*}
which implies \cref{eq:qI}.

Now let $I_p = [\mu_p - 10\sigma_p,\; \mu_p + 10\sigma_p] \cap \mathbb{Z}.$
By Chebyshev's inequality, $p(I_p^c) \le \frac{1}{100}$.
By \cref{eq:qI} this implies that
$$
q(I_p)
\ge \left(\frac{1}{\sqrt{2}} - \frac{1}{10}\right)^2
\ge \frac{9}{25}.
$$
Letting $q_{\max} = \max_{k\in\mathbb{Z}} q(k)$ we then get that
\begin{equation*}
q_{\max}
\ge \frac{1}{|I_p|} \sum_{k\in I_p} q(k)
= \frac{q(I_p)}{|I_p|}
\ge \frac{9}{25}\frac{1}{1+20\sigma_p}.
\end{equation*}
By log-concavity and \cref{eq:LC-bound-1} we also have
$$
q_{\max} \le \frac{2}{\sqrt{1+4\sigma_q^2}}.
$$
Combining both inequalities gives
$\sqrt{1+4\sigma_q^2}
\le \frac{25}{18}(1+20\sigma_p),$
and hence $\sigma_q \le \frac{25}{36}(1+20\sigma_p).$
By symmetry (interchanging $p$ and $q$) we also obtain
$\sigma_p \le \frac{25}{36}(1+20\sigma_q).$
Together these imply that $\sigma_q = \Theta(\sigma_p) + O(1)$.

Finally, define $I_q = [\mu_q - 10\sigma_q,\; \mu_q + 10\sigma_q] \cap \mathbb{Z}$.
Chebyshev's inequality gives $q(I_q)\ge 99/100$. Since $q(I_p)\ge 9/25$, the intervals $I_p$ and $I_q$ must overlap. Hence
\begin{align*}
\mu_q - 10\sigma_q &< \mu_p + 10\sigma_p
\qquad \text{ and } \qquad
\mu_q + 10\sigma_q > \mu_p - 10\sigma_p,
\end{align*}
which implies $\mu_p - 10(\sigma_p+\sigma_q) < \mu_q < \mu_p + 10(\sigma_p+\sigma_q)$.
Using the previous bounds on the variances then yields $\mu_q = \mu_p + O(\sigma_p) + O(1)$.
\end{proof}

\section{(Ultra-)Log-concavity for convex potentials (and beyond)} \label{sec:lc_gs}

As our first contribution, we establish a large family of potentials $V$ for which the corresponding ground state is discretely log-concave or even ultra-log-concave.
We do so for 1-dimensional stoquastic Hamiltonians of the form
\[
H
= -A + V
\]
where $V$ encodes the potential on $\{0,1,...,n\}$, and $A$ is defined by $\braket{k-1|A|k} = \braket{k|A|k-1} = a_k$ for a concave, positive sequence $\{a_k \in \mathbb{R}\}_{k=1}^n$.
When $a_k = \sqrt{k(n+1-k)}$ this corresponds to the symmetry-reduced adjacency matrix $A_{hc}$ of the hypercube from \cref{sec:sym}.

\subsection{Prior work} \label{sec:continuous}

In the continuous setting it was established by Brascamp and Lieb~\cite[Theorem 6.1]{brascamp1976extensions} that the ground state of a continuous
Schrödinger operator $-\partial^2 + V$ on the real line with a convex potential is log-concave.
Stronger even, they showed that if $V(x) = \alpha^2 x^2 + W(x)$ for a convex $W$, then the ground state $\psi$ has the form $\psi(x) = e^{-\alpha x^2} \varphi(x)$ for some log-concave function $\varphi$.
Clearly, this is a continuous analogue to discrete ultra-log-concavity (\cref{def:ULC}).
Apart from being a useful structural property, this was also used later to establish strong bounds on the spectral gap of these operators~\cite{spectral_gap_continuous, Yu1986LowerBO}.

Transferring this picture to the discrete setting turns out to be non-trivial.
Indeed, a clear bottleneck when transferring the proof in \cite{brascamp1976extensions} to the discrete setting is that discrete log-concavity loses some of the nice properties of its continuous variant.
E.g., in the continuous setting, multivariate 
log-concavity is well-understood and enjoys strong stability properties: in 
particular, \cite{brascamp1976extensions} uses that marginalization of a log-concave distribution remains 
log-concave.
In the discrete setting, however, 
defining a meaningful notion of multivariate log-concavity is more delicate, 
and key properties such as preservation under marginalization fail \cite[Chapter 6]{barndorff2014information}.

One exception that we found is the work by Johnson \cite{JOHNSON2007791}.
As explained in e.g.~\cite{gozlan2023log}, Johnson proves a discrete version of the Brascamp-Lieb result, but only for one particular $A$ matrix on the half-line~$\mathbb{N}$ that corresponds to a particular queuing process on $\mathbb{N}$.

\subsection{Ultra-log-concavity from convexity} \label{sec:ULC-convex}

As our first result in this direction, we establish a discrete version of the aforementioned result by Brascamp and Lieb \cite{brascamp1976extensions}.
Let $\ket{\psi}=\sum_{k=0}^n \psi_k \ket{k}$ denote the ground state of $H=-A+V$ with energy $E_0$, and define the function
\[
F(k):=V(k)-E_0.
\] 
By Perron--Frobenius and positivity of the $a_k$'s, we can assume $\psi_k > 0$ for all $0 \leq k \leq n$.
Projecting the eigenvalue equation $H\ket{\psi}=E_0\ket{\psi}$ onto $\bra{k}$ gives a key recurrence relation:
\begin{equation}\label{eq:recurrence}
F(k)\psi_k=a_{k+1}\psi_{k+1}+a_k\psi_{k-1},
\qquad 0\le k\le n,
\end{equation}
with the boundary convention $\psi_{-1}=\psi_{n+1}=0$.
Given that the $a_k$'s and $\psi_k$'s are positive, this implies that also $F(k) > 0$ for all $0 \leq k \leq n$.
We use this recurrence relation to prove our theorem.

\begin{theorem}[Ultra-log-concavity for convex potentials] \label{thm:ultra log-conc-conv}
If $\{a_k = \sqrt{k(n+1-k)}\}$ and $V$ is convex, then $\{\psi^2_k\}$ is ultra-log-concave.
\end{theorem}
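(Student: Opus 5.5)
Throughout, I write $q_k=\binom{n}{k}$. The approach is to strip the binomial weights out of the ground state, reduce ultra-log-concavity to monotonicity of a ratio sequence, and then prove that monotonicity with a discrete maximum principle driven by convexity of $F=V-E_0$. Since only the recurrence \cref{eq:recurrence} and convexity of $V$ are used, the argument should go through directly.

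\textbf{Step 1: renormalize by the binomial weights.} Set $\phi_k=\psi_k/\sqrt{q_k}>0$. Then $\{\psi_k^2\}$ is ultra-log-concave in the sense of \cref{def:ULC} if and only if $\{\phi_k\}$ is log-concave. For $a_k=\sqrt{k(n+1-k)}$ one checks the identities
\[
a_{k+1}\sqrt{q_{k+1}}=(n-k)\sqrt{q_k}, \qquad a_k\sqrt{q_{k-1}}=k\sqrt{q_k}.
\]
Substituting them into \cref{eq:recurrence} turns it into the birth--death recurrence
\[
F(k)\,\phi_k=(n-k)\,\phi_{k+1}+k\,\phi_{k-1}, \qquad 0\le k\le n .
\]
Next introduce the score $s_k=\phi_{k+1}/\phi_k>0$ for $0\le k\le n-1$. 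Dividing the recurrence by $\phi_k$ gives
\[
F(k)=(n-k)s_k+\frac{k}{s_{k-1}},
\]
with the boundary relations $F(0)=n s_0$ and $F(n)=n/s_{n-1}$. The goal is now to show $t_k:=s_k-s_{k-1}\le 0$ for all $1\le k\le n-1$.

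\textbf{Step 2: turn convexity into a second-order inequality for $t_k$.} Take the first difference $D_{k-1}=F(k)-F(k-1)$ and write it in terms of the scores:
\[
D_{k-1}=(n-k)t_k-(k-1)\frac{t_{k-1}}{s_{k-1}s_{k-2}}+g(s_{k-1}), \qquad g(s)=\frac1s-s .
\]
Convexity of $V$ gives $D_k-D_{k-1}\ge 0$. Because $g$ is decreasing, one can write $g(s_k)-g(s_{k-1})=-\bigl(1+\tfrac{1}{s_ks_{k-1}}\bigr)t_k$. This yields an inequality of the form
\[
c^+_k\,t_{k+1}+c^-_k\,t_{k-1}\;\ge\;c^0_k\,t_k ,
\]
with nonnegative coefficients
\[
c^+_k=n-k-1,\qquad c^-_k=\frac{k-1}{s_{k-1}s_{k-2}},\qquad c^0_k=(n-k+1)+\frac{k+1}{s_ks_{k-1}} .
\]
So $t$ is a sub-solution of a three-term operator with positive off-diagonal coefficients.

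\textbf{Step 3: discrete maximum principle plus boundaries.} Suppose some $t_k>0$, and take an index where a suitably weighted quantity $t_k/w_k$ attains a positive maximum. The weights $w_k$ should be chosen as products of the $s_j$; equivalently, one can work with $\log s_k$ instead of $s_k$, which symmetrizes the $1/(s_ks_{k-1})$ factors. The inequality of Step 2 then forces $c^0_k w_k\le c^+_k w_{k+1}+c^-_k w_{k-1}$, and the strict excess of $c^0_k$ over $c^+_k+c^-_k$ (the extra $+2$ terms) should make this impossible. Such a maximum therefore cannot occur in the interior. At the two ends, the boundary relations $F(0)=ns_0$ and $F(n)=n/s_{n-1}$, combined with convexity at $k=1$ and $k=n-1$, should give $t_1\le 0$ and $t_{n-1}\le 0$ directly. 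This rules out a positive maximum at the boundary and concludes the proof.

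\textbf{Main obstacle.} The hardest part is Step 3. The coefficients $c^\pm_k, c^0_k$ involve different products of scores ($s_ks_{k-1}$ versus $s_{k-1}s_{k-2}$), so a naive comparison of $c^0_k$ with $c^+_k+c^-_k$ does not immediately close. The first thing I would try is the change of variable $\theta_k=\log s_k$ and redoing Step 2 for $\theta_k-\theta_{k-1}$: the map $g$ becomes $-2\sinh\theta$, and the coefficients should become comparable. If that fails, my fallback is a forward induction: assume $s_{k-1}\le s_{k-2}$ and use the explicit formula $s_k=(F(k)-k/s_{k-1})/(n-k)$ together with a backward induction from $k=n$, closing the argument by a continuity/shooting argument in $E_0$.
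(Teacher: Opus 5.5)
Your Step 1 and the first-difference identity for $D_{k-1}$ in Step 2 are correct and match the paper's setup; your $s_{k-1}$ is the paper's $r_k$. The gap is Step 3. It is a plan rather than an argument, and its ingredients do not hold as stated.

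\begin{itemize}
\item The unweighted maximum principle needs $c^0_k>c^+_k+c^-_k$. But $c^0_k-c^+_k-c^-_k=2+\frac{k+1}{s_ks_{k-1}}-\frac{k-1}{s_{k-1}s_{k-2}}$ tends to $-\infty$ as $s_{k-2}\to 0$ with $t_{k-1},t_k>0$ fixed. You never produce the weights $w_k$ that would repair this.
\item The boundary claim is false. Convexity at $k=1$ reads $(n-2)t_2\ge\bigl(n+\tfrac{2}{s_1s_0}\bigr)t_1$, which allows $t_1>0$ once $t_2$ is large. So no local information at the ends can give $t_1\le 0$ or $t_{n-1}\le 0$.
\item The $\log s_k$ substitution and the shooting argument in $E_0$ are left unverified.
\end{itemize}

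The missing idea is that you should not pass to pointwise second differences at all. Doing so throws away the sign information that makes the problem easy. Instead, apply your first-difference identity at two well-chosen indices.

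Suppose some $t_k>0$. Let $a$ be the smallest such index, so $a=1$ or $t_{a-1}\le 0$; note $a\le n-1$. Let $b$ be the first index after $a$ with $t_b<0$, or $b=n$ if there is none. Then $t_a,\dots,t_{b-1}\ge 0$, hence $s_{a-1}<s_a\le\dots\le s_{b-1}$. Your formula gives
\[
D_{a-1}=(n-a)t_a-(a-1)\frac{t_{a-1}}{s_{a-1}s_{a-2}}+g(s_{a-1})>g(s_{a-1}),
\]
using $n-a\ge 1$, $t_a>0$ and $t_{a-1}\le 0$. It also gives
\[
D_{b-1}=(n-b)t_b-(b-1)\frac{t_{b-1}}{s_{b-1}s_{b-2}}+g(s_{b-1})\le g(s_{b-1}),
\]
using $t_b<0$ and $t_{b-1}\ge 0$. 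Here the $t_{a-1}$ term is absent if $a=1$, and the $t_b$ term is absent if $b=n$.

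Since $g$ is strictly decreasing and $s_{b-1}>s_{a-1}$, this yields $D_{b-1}\le g(s_{b-1})<g(s_{a-1})<D_{a-1}$ with $a<b$. That contradicts convexity of $F$. This is exactly the paper's proof. Equivalently, it is the telescoped sum of your Step 2 inequalities over $k=a,\dots,b-1$, where the sign pattern of $t$ makes every cross term favorable. It needs no weights, no maximum principle, and no separate boundary analysis.
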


\begin{proof}
We will prove that $\phi_k=\psi_k/\sqrt{\binom{n}{k}}$ is log-concave, which is equivalent to proving that the ratios $r_k=\frac{\phi_k}{\phi_{k-1}}$ for $1 \leq k \leq n$ are non-increasing.
By contradiction, we will suppose $r_k < r_{k+1}$ for some $k$.

We start by rewriting \cref{eq:recurrence} as 
\[
F(k)
= (n-k) r_{k+1} + k/r_k.
\]
Now let $a \geq 0$ be the first index such that $r_a < r_{a+1}$, and let $n \geq b\geq a+1$ be the first index after $a$ such that $r_b > r_{b+1}$, so we have
\[r_{a-1} \geq r_a <r_{a+1} \le \ldots \le r_{b-1} \le r_{b} > r_{b+1},\]
where we formally set $r_0 = \infty$ and $r_{n+1} = 0$.
Using this, we can lower bound $F(a)-F(a-1)$ and upper bound $F(b)-F(b-1)$ by
\[
    F(a)-F(a-1) > \frac{1}{r_a}-r_a,
    \qquad
    F(b)-F(b-1) < \frac{1}{r_{b}}-r_{b}.
\]
Since the function $r \mapsto \frac{1}{r}-r$ is decreasing for positive $r$, this implies that 
\[
F(a)-F(a-1)
> F(b)-F(b-1).
\]
However, our convexity assumption on $V$ implies convexity of $F$, and so this contradicts our assumption.
 \end{proof}

\subsection{\texorpdfstring{$\gamma$}{gamma}-log-concavity from \texorpdfstring{$\gamma$}{gamma}-relative monotonicity} \label{sec:LC-hill}

As our second contribution in this direction, we describe a more technical framework for establishing log-concavity of the ground state.
Critically, it allows us to go beyond convex potentials, showing that log-concavity can persist even in the occurrence of local minima.
While somewhat technically involved, we show that the framework is tight in (and motivated by) capturing limiting cases such as the uniform and binomial ground state.

In the most general case, we will focus on proving a weighted log-concavity property of the ground-state, i.e. 
\begin{equation} \label{eq:gLC}
\psi_k^2
\ge \gamma_k \, \psi_{k-1} \psi_{k+1}
\end{equation}
where the weights $\{\gamma_k\}$ are defined such that:
\[
\gamma_1,\ldots,\gamma_{n-1}>0,
\qquad
\gamma_0=\gamma_n=1.
\]
More precisely, we will say that $\{\psi_k\}$ is $\gamma$-log-concave ($\gamma$-LC) on an interval $I \subseteq \{0,\dots,n\}$ if \cref{eq:gLC} is satisfied for all $k \in I$.
As an example, ultra-log-concavity of $\{\psi_k^2\}$ (\cref{def:ULC}) corresponds to the case where $\gamma_k = \sqrt{\frac{k+1}{k} \frac{n-k+1}{n-k}}$.

To prove $\gamma$-LC, we assume that $\{a_k\}_{k=1}^n$ is positive (while setting $a_0=a_{n+1}=0$) and satisfies a related log-concavity condition:
\begin{equation}\label{eq:ak_cond}
    a_k^2 \gamma_k \gamma_{k-1}\ge a_{k-1}a_{k+1},
\qquad \forall 1\le k\le n.
\end{equation}
Recalling the function $F(k) = V(k) - E_0 > 0$, we also need to define the cumulative weights
\[
\Gamma_{k}^-:=\frac{a_1}{F(0)}\prod_{\ell=0}^{k-1}\gamma_\ell,
\qquad
\Gamma_{k}^+:=\frac{a_n}{F(n)}\prod_{\ell=k+1}^{n}\gamma_\ell,
\]
with the convention that empty products are equal to $1$.
These weights are motivated by the following bound.
\begin{lemma}[Cumulative ratio] \label{lem:cumulative}
\, \newline
$\bullet$ 
If $\{\psi_\ell\}$ is $\gamma$-LC on $\{0,\dots,k-1\}$ then $\frac{\psi_{k-1}}{\psi_k} \geq \Gamma^-_k$.
\, \newline
$\bullet$ 
If $\{\psi_\ell\}$ is $\gamma$-LC on $\{k+1,\dots,n\}$ then $\frac{\psi_{k+1}}{\psi_k} \geq \Gamma^+_k$.
\end{lemma}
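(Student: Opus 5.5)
The plan is to prove both bullets by a one-step induction that telescopes the $\gamma$-LC inequality \cref{eq:gLC}. The boundary rows of the recurrence \cref{eq:recurrence} supply the base cases. Throughout I use that $\psi_\ell>0$ for all $\ell$ (Perron--Frobenius), so all ratios are well defined and dividing preserves inequalities. I also use the conventions $\gamma_0=\gamma_n=1$ and $\psi_{-1}=\psi_{n+1}=0$.

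\textbf{First bullet.} I induct on $k\ge 1$.

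For the base case $k=1$, evaluate \cref{eq:recurrence} at $k=0$. Since $\psi_{-1}=0$, it reads $F(0)\psi_0=a_1\psi_1$, so
\[
\frac{\psi_0}{\psi_1}=\frac{a_1}{F(0)}=\frac{a_1}{F(0)}\gamma_0=\Gamma^-_1 .
\]
This holds with equality, and the hypothesis ($\gamma$-LC at $0$) is trivially true there because $\psi_{-1}=0$.

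For the inductive step, assume $\gamma$-LC on $\{0,\dots,k-1\}$ with $k\ge2$. Then $\gamma$-LC holds on $\{0,\dots,k-2\}$, so the induction hypothesis gives $\psi_{k-2}/\psi_{k-1}\ge\Gamma^-_{k-1}$. The $\gamma$-LC inequality at index $k-1$ is $\psi_{k-1}^2\ge\gamma_{k-1}\psi_{k-2}\psi_k$. Dividing by $\psi_{k-1}\psi_k>0$ yields
\[
\frac{\psi_{k-1}}{\psi_k}\ \ge\ \gamma_{k-1}\,\frac{\psi_{k-2}}{\psi_{k-1}}\ \ge\ \gamma_{k-1}\Gamma^-_{k-1}=\Gamma^-_k .
\]
The last equality is immediate from the product definition $\Gamma^-_k=\frac{a_1}{F(0)}\prod_{\ell=0}^{k-1}\gamma_\ell$.

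\textbf{Second bullet.} This is the mirror image, by downward induction on $k\le n-1$.

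For the base case $k=n-1$, the recurrence at $k=n$ with $\psi_{n+1}=0$ gives $F(n)\psi_n=a_n\psi_{n-1}$. Hence
\[
\frac{\psi_n}{\psi_{n-1}}=\frac{a_n}{F(n)}=\frac{a_n}{F(n)}\gamma_n=\Gamma^+_{n-1}.
\]

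For the inductive step, the $\gamma$-LC inequality at $k+1$ is $\psi_{k+1}^2\ge\gamma_{k+1}\psi_k\psi_{k+2}$. Dividing by $\psi_k\psi_{k+1}$ gives
\[
\frac{\psi_{k+1}}{\psi_k}\ \ge\ \gamma_{k+1}\,\frac{\psi_{k+2}}{\psi_{k+1}}\ \ge\ \gamma_{k+1}\Gamma^+_{k+1}=\Gamma^+_k ,
\]
where the induction hypothesis is used on $\{k+2,\dots,n\}$.

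\textbf{Main obstacle.} There is no real analytic obstacle. The only care needed is bookkeeping at the edges. First, the $\gamma$-LC condition at the endpoints $0$ and $n$ is vacuous because $\psi_{-1}=\psi_{n+1}=0$. Second, the choice $\gamma_0=\gamma_n=1$ is exactly what makes the base cases hold with equality. Third, the index ranges in the product definitions of $\Gamma^\pm_k$ must match the number of telescoping steps; in particular the empty-product convention covers the extreme cases. Positivity of $F$ is not needed beyond ensuring that $\Gamma^\pm_k>0$ is well defined.
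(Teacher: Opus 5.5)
Your proof is correct and matches the paper's argument. The paper writes the same steps as a single chain, telescoping the $\gamma$-LC inequalities down to the boundary ratio $\psi_0/\psi_1=a_1/F(0)$ (resp.\ up to $\psi_n/\psi_{n-1}=a_n/F(n)$) obtained from \cref{eq:recurrence}, with $\gamma_0=\gamma_n=1$ absorbing the extra factor; your induction is that chain unrolled, and the range restrictions $k\ge 1$ and $k\le n-1$ you impose are the right ones.
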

\begin{proof}
The first bullet follows from propagating $\gamma$-LC from $\ell = k-1$ to $\ell = 1$:
\[
\frac{\psi_{k-1}}{\psi_k}
\ge \gamma_{k-1} \frac{\psi_{k-2}}{\psi_{k-1}}
\ge \cdots \ge \frac{\psi_0}{\psi_1}\prod_{l=1}^{k-1}\gamma_l
\overset{\text{\eqref{eq:recurrence}}}{=} \frac{a_1}{F(0)} \prod_{l=1}^{k-1}\gamma_l
= \Gamma_{k}^-.
\]
The second bullet follows from propagating $\gamma$-LC from $\ell=k+1$ to $\ell = n-1$:
\[
\frac{\psi_{k+1}}{\psi_k}
\ge \gamma_{k+1}\frac{\psi_{k+2}}{\psi_{k+1}}
\ge \cdots
\ge \frac{\psi_n}{\psi_{n-1}} \prod_{l=k+1}^{n-1}\gamma_l
\overset{\text{\eqref{eq:recurrence}}}{=}
\frac{a_n}{F(n)} \prod_{l=k+1}^{n-1}\gamma_l
= \Gamma_{k}^+. \qedhere
\]
\end{proof}

We can now define a condition linking the potential $V$ to $\{a_k\}$ and $\{\gamma_k\}$ under which we will prove $\gamma$-LC of $\{\psi_k\}$.

\begin{definition}[$\gamma$-relative monotonicity (\,$\gamma$-RM\,)] \label{def:gamma-hill}
The Hamiltonian $H=-A+V$ satisfies the left $\gamma$-RM condition on $\{0,\dots,k_0\}$ if, for all $1 \leq k \leq k_0$,
\begin{equation} \label{eq:leftRM}
a_k\gamma_k F(k)-a_{k+1}F(k-1)
\le
\Gamma_{k-1}^-\left(a_k^2\gamma_k\gamma_{k-1}-a_{k+1}a_{k-1}\right).
\end{equation}
It satisfies the right $\gamma$-RM condition on $\{k_1,\dots,n-1\}$ if, for all $k_1 \leq k \leq n-1$,
\begin{equation} \label{eq:rightRM}
a_{k+1}\gamma_k F(k)-a_kF(k+1)
\le
\Gamma_{k+1}^+\left(a_{k+1}^2\gamma_{k+1}\gamma_k-a_{k+2}a_k\right).
\end{equation}
\end{definition}

Under this condition, we can prove weighted log-concavity of $\{\psi_k\}$ on the interval $I$.

\begin{theorem}[$\gamma$-LC from $\gamma$-RM]
\label{informal:thm_lc_hill}
Assume that $\{a_k\}$ satisfies \cref{eq:ak_cond}. \\
$\bullet$
If $H = -A + V$ satisfies the left $\gamma$-RM condition on $\{1,\dots,k_0\}$ then $\{\psi_k\}$ is $\gamma$-LC on $\{0,\dots,k_0\}$. \\
$\bullet$
If $H = -A + V$ satisfies the right $\gamma$-RM condition on $\{k_1,\dots,n-1\}$ then $\{\psi_k\}$ is $\gamma$-LC on $\{k_1,\dots,n\}$.
\end{theorem}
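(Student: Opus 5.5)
We will prove the first bullet by induction on $k$, moving from left to right. The second bullet then follows by the mirror-image argument, a downward induction starting from $k=n$ that uses the second bullet of \cref{lem:cumulative} in place of the first.

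\textbf{Base case.} At $k=0$ the inequality \cref{eq:gLC} reads $\psi_0^2\ge \gamma_0\psi_{-1}\psi_1=0$, which holds trivially.

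\textbf{Inductive step: reducing to a condition on $x$.} Fix $1\le k\le k_0$ and assume $\gamma$-LC on $\{0,\dots,k-1\}$. Set $x:=\psi_{k-1}/\psi_k>0$. The recurrence \cref{eq:recurrence} at index $k$ gives
\[
a_{k+1}\psi_{k+1}=F(k)\psi_k-a_k\psi_{k-1}.
\]
Substituting this into $a_{k+1}\psi_k^2\ge \gamma_k a_{k+1}\psi_{k-1}\psi_{k+1}$ shows that the target inequality is equivalent to
\[
a_{k+1}\;\ge\; \gamma_k x F(k)-\gamma_k a_k x^2. \qquad (\ast)
\]

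\textbf{Lower bound on $a_{k+1}$.} Next we use the recurrence at index $k-1$, divided by $\psi_k$:
\[
a_k = xF(k-1)-a_{k-1}\frac{\psi_{k-2}}{\psi_k}.
\]
The induction hypothesis at $k-1$, namely $\psi_{k-1}^2\ge\gamma_{k-1}\psi_{k-2}\psi_k$, gives $\psi_{k-2}/\psi_k\le x^2/\gamma_{k-1}$. Hence
\[
a_k\ge x\bigl(F(k-1)-a_{k-1}x/\gamma_{k-1}\bigr).
\]
When $k=1$ this step is vacuous, since $a_0=0$ and $\psi_{-1}=0$ make it an equality. Multiplying by $a_{k+1}/a_k>0$ yields a lower bound on $a_{k+1}$.

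\textbf{A sufficient condition for $(\ast)$.} Combining this lower bound with $(\ast)$, dividing by $x$, and multiplying by $a_k$, it suffices to show
\[
a_k\gamma_kF(k)-a_{k+1}F(k-1)\;\le\;\frac{x}{\gamma_{k-1}}\bigl(a_k^2\gamma_k\gamma_{k-1}-a_{k+1}a_{k-1}\bigr).
\]
By \cref{eq:ak_cond} the bracket on the right is non-negative. So the right-hand side only decreases if we replace $x$ by a lower bound.

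\textbf{Closing with \cref{lem:cumulative} and the RM condition.} The induction hypothesis gives $\gamma$-LC on $\{0,\dots,k-1\}$, so \cref{lem:cumulative} gives $x\ge\Gamma^-_k$. By the definition of the cumulative weights, $\Gamma_k^-/\gamma_{k-1}=\Gamma^-_{k-1}$. The sufficient condition above is therefore implied by
\[
a_k\gamma_kF(k)-a_{k+1}F(k-1)\le \Gamma^-_{k-1}\bigl(a_k^2\gamma_k\gamma_{k-1}-a_{k+1}a_{k-1}\bigr),
\]
which is exactly the left $\gamma$-RM condition \cref{eq:leftRM}. This closes the induction.

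\textbf{Right case.} We run the same computation downward from $k=n$, where the claim is trivial because $\psi_{n+1}=0$. This time we set $x=\psi_{k+1}/\psi_k$, solve the recurrence at $k$ for $\psi_{k-1}$, and use the recurrence at $k+1$ together with $\gamma$-LC at $k+1$. The bound $x\ge\Gamma^+_k$ from \cref{lem:cumulative}, combined with $\Gamma^+_k/\gamma_{k+1}=\Gamma^+_{k+1}$, then reduces the step exactly to \cref{eq:rightRM}.

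\textbf{Expected difficulty.} The argument is elementary, so the main obstacle is bookkeeping. Two points need care. First, each inequality must be used in the right direction; in particular the bound from \cref{lem:cumulative} may only be applied after checking that the bracket is non-negative via \cref{eq:ak_cond}. Second, the indices of the products $\Gamma^\pm$ must be handled carefully, including at the boundary steps $k=1$ and $k=n-1$, where the conventions $a_0=a_{n+1}=0$ and $\gamma_0=\gamma_n=1$ enter.
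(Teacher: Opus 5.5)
Your proof is correct and is essentially the paper's argument. Both run a strong induction that combines the recurrence at $k-1$ and at $k$, $\gamma$-LC at $k-1$, the sign of $a_k^2\gamma_k\gamma_{k-1}-a_{k+1}a_{k-1}$ from \cref{eq:ak_cond}, the bound $\psi_{k-1}/\psi_k\ge\Gamma_k^-$ from \cref{lem:cumulative} with $\Gamma_k^-=\gamma_{k-1}\Gamma_{k-1}^-$, and the RM condition. The only difference is presentation: you reduce the target to a sufficient condition in the ratio $x$, whereas the paper chains the same inequalities forward to the factored form $\bigl(a_k^2\gamma_k\gamma_{k-1}-a_{k+1}a_{k-1}\bigr)\psi_k\psi_{k-1}\bigl(\tfrac{\psi_{k-1}}{\psi_k}-\Gamma_k^-\bigr)\le a_ka_{k+1}\gamma_{k-1}\bigl(\psi_k^2-\gamma_k\psi_{k-1}\psi_{k+1}\bigr)$.
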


In particular, if these conditions hold for some $k_1 = k_0 + 1$ then $\{\psi_k\}$ is $\gamma$-LC on $[n]$.

\subsection{Proofs of \texorpdfstring{$\gamma$}{gamma}-LC from \texorpdfstring{$\gamma$}{gamma}-RM}

The proofs of \cref{informal:thm_lc_hill} are by strong induction.

\begin{proof}[Strong induction from the left]
The boundary initialization at $k = 0$ is trivial because we set $\psi_{-1} = 0$.
Now let $k \le k_0$ and assume $\gamma$-LC for all $\ell \in \{0,\ldots,k-1\}$.
By \cref{lem:cumulative} we have
\begin{equation} \label{eq:ratio-bound}
\frac{\psi_{k-1}}{\psi_k}
\geq \Gamma_{k}^-.
\end{equation}
To prove $\gamma$-LC at $k$, we start from the recurrence relation, \cref{eq:recurrence}.
After multiplying by $a_{k+1} \gamma_{k-1} \psi_k$ this is
\[
a_{k+1}\gamma_{k-1}F(k-1)\psi_k \psi_{k-1}=a_ka_{k+1}\gamma_{k-1}\psi_k^2+a_{k-1}a_{k+1}\gamma_{k-1}\psi_k\psi_{k-2}.
\]
By left $\gamma$-RM at $k$ and $\gamma$-LC at $k-1$ this implies that
\[
a_k \gamma_k \gamma_{k-1}F(k) \psi_k \psi_{k-1}-\gamma_{k-1}\Gamma_{k-1}^-\left(a_k^2\gamma_k\gamma_{k-1}-a_{k+1}a_{k-1}\right)\psi_k \psi_{k-1}\leq a_ka_{k+1}\gamma_{k-1}\psi_k^2+a_{k-1}a_{k+1}\psi_{k-1}^2.
\]
Again using \cref{eq:recurrence} and absorbing $\gamma_{k-1}$ this is equivalent to
\[
a_k a_{k+1}\gamma_k \gamma_{k-1} \psi_{k-1}\psi_{k+1}+a_k^2 \gamma_k \gamma_{k-1}\psi_{k-1}^2-\Gamma_{k}^-\left(a_k^2\gamma_k\gamma_{k-1}-a_{k+1}a_{k-1}\right)\psi_k \psi_{k-1}
\leq a_ka_{k+1}\gamma_{k-1}\psi_k^2+a_{k-1}a_{k+1}\psi_{k-1}^2.
\]
Regrouping yields
\[
\left(a_k^2\gamma_k\gamma_{k-1}-a_{k+1}a_{k-1}\right)\psi_k \psi_{k-1} \left( \frac{\psi_{k-1}}{\psi_k}-\Gamma_k^-\right) \leq a_ka_{k+1}\gamma_{k-1}(\psi_k^2-\gamma_k \psi_{k-1}\psi_{k+1}).
\]
By our condition on $\{a_k\}$ and \cref{eq:ratio-bound} we have positivity of the left hand side, and so
\[
\gamma_k \psi_{k-1}\psi_{k+1} \leq \psi_k^2. \qedhere
\]
\end{proof}

\begin{proof}[Strong induction from the right.]
We proceed in an almost identical manner.
The boundary initialization at $k = n$ is trivial because $\psi_{n+1} = 0$.
Now let $k \geq k_1$ and assume $\gamma$-LC for all $\ell \in \{k+1,\ldots,n\}$.
By \cref{lem:cumulative} we have
\begin{equation}  \label{eq:ratio-bound-2}
\frac{\psi_{k+1}}{\psi_k}
\ge \Gamma_{k}^+.
\end{equation}
To prove $\gamma$-LC at $k$, we start from the recurrence relation, \cref{eq:recurrence}.
After multiplying by $a_k \gamma_{k+1} \psi_k$ this is
\[
a_{k}\gamma_{k+1}F(k+1)\psi_k \psi_{k+1}
= a_ka_{k+1}\gamma_{k+1}\psi_k^2+a_{k}a_{k+2}\gamma_{k+1}\psi_k\psi_{k+2}
\]
By right $\gamma$-RM at $k$ and $\gamma$-LC at $k+1$ this implies that
\[
a_{k+1} \gamma_k \gamma_{k+1}F(k) \psi_k \psi_{k+1} - \gamma_{k+1}\Gamma_{k+1}^ + \left(a_{k+1}^2\gamma_k\gamma_{k+1} - a_{k}a_{k+2}\right) \psi_k \psi_{k+1}
\leq a_ka_{k+1}\gamma_{k+1}\psi_k^2 + a_{k}a_{k+2}\psi_{k+1}^2.
\]
Again using \cref{eq:recurrence} and absorbing $\gamma_{k+1}$ this is equivalent to
\[
a_k a_{k+1}\gamma_k \gamma_{k+1} \psi_{k-1}\psi_{k+1}+a_{k+1}^2 \gamma_k \gamma_{k+1}\psi_{k+1}^2-\Gamma_{k}^+\left(a_{k+1}^2\gamma_k\gamma_{k+1}-a_{k}a_{k+2}\right)\psi_k \psi_{k+1}
\leq a_ka_{k+1}\gamma_{k+1}\psi_k^2+a_{k}a_{k+2}\psi_{k+1}^2.
\]
Regrouping yields
\[
\left( a_{k+1}^2\gamma_k\gamma_{k+1}-a_{k}a_{k+2} \right) \psi_k \psi_{k+1} \left( \frac{\psi_{k+1}}{\psi_k} - \Gamma_k^+ \right)
\leq a_k a_{k+1} \gamma_{k+1} (\psi_k^2 - \gamma_k \psi_{k-1} \psi_{k+1}).
\]
By our condition on $\{a_k\}$ and \cref{eq:ratio-bound-2} we have positivity of the left hand side, and so
\[
\gamma_k \psi_{k-1}\psi_{k+1} \leq \psi_k^2. \qedhere
\]
\end{proof}

\subsection{Examples of \texorpdfstring{$\gamma$}{gamma}-LC from \texorpdfstring{$\gamma$}{gamma}-RM} \label{sec:LC-concrete}

To better grasp the full reach of this framework, we make explicit some particularly interesting cases.

\subsubsection{Uniform \texorpdfstring{$\gamma$}{gamma}, uniform \texorpdfstring{$a_k$}{ak}}

In this case, $\gamma$-LC (cf.~\cref{eq:gLC}) reduces to normal log-concavity, and the $\gamma$-RM conditions (cf.~\cref{def:gamma-hill}) elegantly reduce to 
\[
V(k) \le V(k-1)
\qquad (1 \leq k \leq k_0), \qquad\qquad
V(k) \le V(k+1)
\qquad (k_1 \leq k \leq n-1).
\]
In particular, if $V$ is a single-well potential (equivalently, $-V$ is unimodal) then the ground state of $H$ is log-concave on the full interval $[n]$.

\subsubsection{Uniform \texorpdfstring{$\gamma$}{gamma}, log-concave \texorpdfstring{$a_k$}{ak}} \label{sec:uniform-gamma-lc-a}


For uniform $\{\gamma_k\}$, the condition \cref{eq:ak_cond} reduces to log-concavity of the $\{a_k\}$.
By \cref{remark:hypercube-coeff} this setting includes as a particular case of interest the symmetry-reduced hypercube Hamiltonian, in which case $\{a_k=\sqrt{k(n+1-k)}\}$.
We will see that such non-trivial $\{a_k\}$ can yield log-concave ground states even for potentials with multiple wells.

A simple yet important limit case is when $V(k)=a_{k}+a_{k+1}$ (where we set $a_0 = a_{n+1} = 0$).
This makes the Hamiltonian $H$ correspond to the Laplacian of the weighted path, the ground state of which we know to be the uniform state with energy $E_0=0$.
In some sense this is the limit case of log-concavity, but it is also the limit case of our theorem because it saturates the $\gamma$-RM conditions with equality.

Going beyond this canonical example, we observe that the conditions are satisfied if
\begin{equation} \label{eq:cond-F}
\frac{F(k)}{F(k-1)}
\le \frac{a_{k+1}}{a_k} \qquad (1 \leq k \leq k_0), \qquad\qquad
\frac{F(k+1)}{F(k)}
\ge \frac{a_{k+1}}{a_k} \qquad (k_1 \leq k \leq n-1),
\end{equation}
and so $F$ has to satisfy a sort of relative monotonicity condition as compared to $\{a_k\}$ (hence the term $\gamma$-RM).
It is this additional slack that allows us to prove log-concavity of the ground state for numerous non-monotone potentials.
The following lemma captures one such family, of which we display an instance in \cref{fig:nonconvex}.

\begin{lemma}
If $\{a_k=\sqrt{k(n+1-k)}\}$ and $\langle \varphi_{n/2}|V|\varphi_{n/2}\rangle \le n$, the conditions
\begin{align*}
\frac{V(k)}{V(k-1)}
\leq \frac{a_{k+1}}{a_k}
\qquad (1 \le k \le n/2), \qquad\qquad
\frac{V(k+1)}{V(k)}
\geq \frac{a_{k+1}}{a_k}
\qquad (n/2 < k \leq n),
\end{align*}
are sufficient for the corresponding $H$ to satisfy \cref{eq:cond-F} and hence to have a log-concave ground state.
\end{lemma}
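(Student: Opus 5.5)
The plan is to reduce the conditions on $F=V-E_0$ in \cref{eq:cond-F} to the stated conditions on $V$. This rests on two facts: the ground energy satisfies $E_0\le 0$, so $F=V+c$ with $c:=-E_0\ge 0$; and the ratio $a_{k+1}/a_k$ is at least $1$ on the left half and at most $1$ on the right half. Once \cref{eq:cond-F} holds with $k_0=\lfloor n/2\rfloor$ and $k_1=k_0+1$, \cref{informal:thm_lc_hill} applies with uniform $\gamma$. The hypothesis \cref{eq:ak_cond} then reduces to log-concavity of $\{a_k\}$, which is \cref{remark:hypercube-coeff}. Together these give log-concavity on all of $\{0,\dots,n\}$.

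\emph{Step 1: $E_0\le 0$.} I will use the variational principle with trial state $\ket{\varphi_{n/2}}$, which is the symmetry-reduced $\ket{+}^{\otimes n}$:
\[
E_0\le \braket{\varphi_{n/2}|{-A_{hc}}+V|\varphi_{n/2}} = -n+\braket{\varphi_{n/2}|V|\varphi_{n/2}}\le 0 .
\]
Here I use that $\ket{+}^{\otimes n}$ is an eigenvector of $\sum_i X_i$ with eigenvalue $n$, so $\braket{\varphi_{n/2}|A_{hc}|\varphi_{n/2}}=n$. Alternatively, this is \cref{lem:linear-HW} at $m=n/2$, $\alpha=0$. The assumption $\braket{\varphi_{n/2}|V|\varphi_{n/2}}\le n$ is used exactly at this point.

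\emph{Step 2: monotonicity of the ratios.} A direct computation gives
\[
a_{k+1}^2-a_k^2=(k+1)(n-k)-k(n+1-k)=n-2k .
\]
Hence $\rho_k:=a_{k+1}/a_k\ge 1$ for $k\le n/2$ and $\rho_k\le 1$ for $k\ge n/2$, with the convention $a_{n+1}=0$ at the right endpoint.

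\emph{Step 3: transfer from $V$ to $F$.} I will cross-multiply, which requires $V(k-1)>0$ on the left range. This holds whenever the ratio in the hypothesis is meaningful; alternatively one can read the hypothesis in its multiplied-out form $V(k)\le\rho_kV(k-1)$. On the left range ($1\le k\le n/2$), since $c\ge 0$ and $\rho_k\ge 1$,
\[
F(k)=V(k)+c\le \rho_k V(k-1)+c\le \rho_k\bigl(V(k-1)+c\bigr)=\rho_kF(k-1),
\]
which is the left half of \cref{eq:cond-F}; positivity of $F$ is already known. Symmetrically, on the right range $\rho_k\le1$, so
\[
F(k+1)=V(k+1)+c\ge\rho_kV(k)+c\ge\rho_k F(k).
\]
This gives the right half of \cref{eq:cond-F}.

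The only real content is Step 1. Its one subtle point is recognizing that the mixer's expectation in the uniform state equals exactly $n$, which is what makes the threshold $n$ natural. The rest consists of two things. First, the sign bookkeeping $\rho_k\gtrless 1$ must be matched to the correct side. Second, the index ranges must line up with $k_0,k_1$ in \cref{informal:thm_lc_hill}. For $n$ even the midpoint $k=n/2$ has $\rho_k$ slightly above $1$, so it belongs naturally to the left range, and the right range begins at $k_1=k_0+1$ as required.
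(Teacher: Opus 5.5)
Your proposal is correct and follows essentially the same route as the paper: the variational bound $E_0\le -n+\braket{\varphi_{n/2}|V|\varphi_{n/2}}\le 0$, combined with $a_{k+1}\gtrless a_k$ on the two halves, is exactly the paper's chain $V(k)-\rho_kV(k-1)\le 0\le E_0(1-\rho_k)$, which you rewrite using $c=-E_0\ge 0$. One small slip: your own identity $a_{k+1}^2-a_k^2=n-2k$ gives $\rho_{n/2}=1$ exactly (not slightly above $1$), which is harmless since the left range only needs $\rho_k\ge 1$.
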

\begin{remark}
    The state $|\varphi_{n/2}\rangle$ is the binomial state with mean at $n/2$ (see Section \ref{ssec:binHW}). The condition on the expected value of $V$ at $n/2$ is implied e.g.~when $\max_k{V(k)} \le n$. In the extreme case, $V$ increases from $k=0$ to $k=n/2$. This only imposes that $V(0) \le 2\sqrt{n}$ because $\max V(k) = V(n/2)\le \frac{a_{n/2+1}}{a_1}V(0) \simeq \frac{\sqrt{n}}{2}V(0)$. We get the same argument on the right side, i.e. $V(n)\le 2\sqrt{n}$.
\end{remark}
\begin{proof}
For $k\leq n/2$, we have $1-\frac{a_{k+1}}{a_k}\leq 0$.
Since $E_0\leq\langle \varphi_{n/2}|H|\varphi_{n/2}\rangle=-n+\langle \varphi_{n/2}|V|\varphi_{n/2}\rangle \le 0$, it follows that $E_0\big(1-\frac{a_{k+1}}{a_k}\big)\ge 0$.
Combined with our assumption on $V$ this yields
\[
V(k)-V(k-1) \frac{a_{k+1}}{a_k}
\leq 0
\leq E_0 \left(1-\frac{a_{k+1}}{a_k}\right),
\]
and this implies the first inequality in \cref{eq:cond-F}.
Similarly, for $k\geq n/2$, we have $(1-\frac{a_{k+1}}{a_k})\geq 0$, and therefore  $E_0\big(1-\frac{a_{k+1}}{a_k}\big)\leq 0$.
Combined with our assumption on $V$ this yields
\begin{equation*}
V(k+1)-V(k)\frac{a_{k+1}}{a_k}
\geq 0
\geq E_0 \left(1-\frac{a_{k+1}}{a_k}\right),
\end{equation*}
and this implies the second inequality in \cref{eq:cond-F}.
\end{proof}

\begin{figure}[h]
\centering
\includegraphics[width=\linewidth]{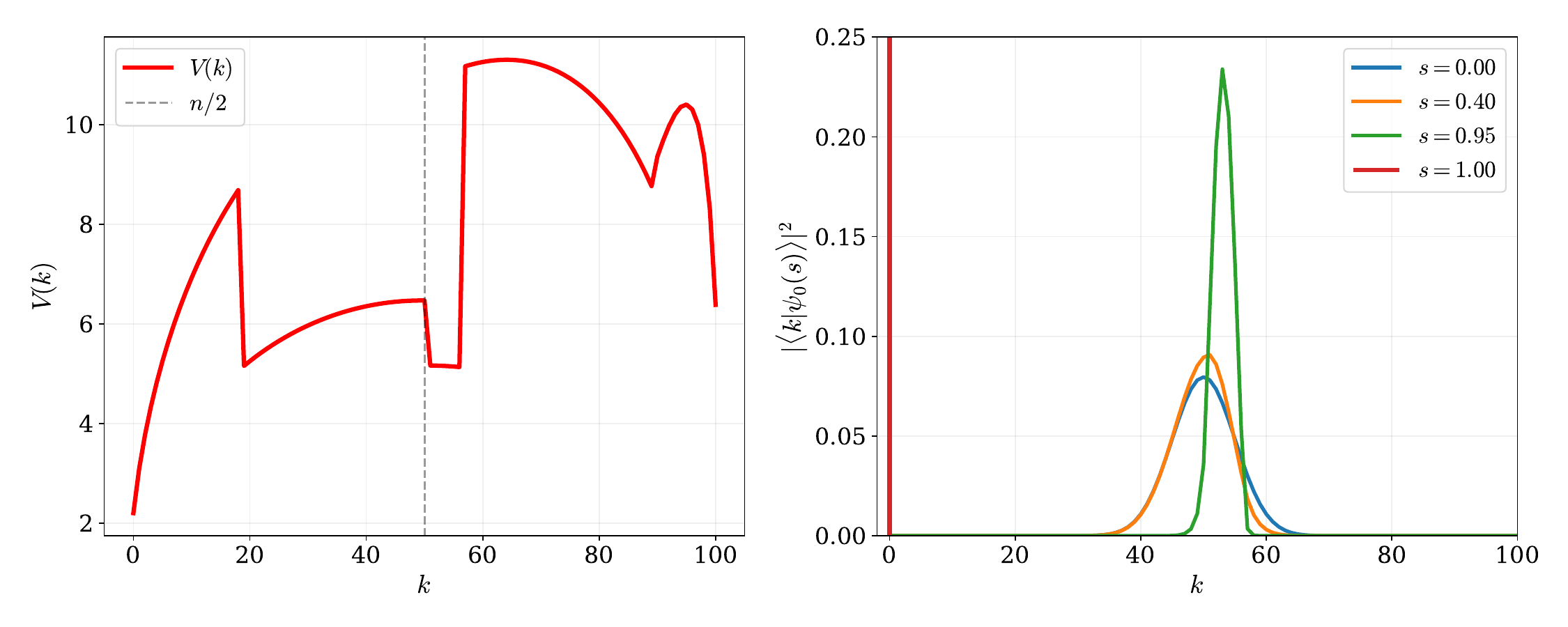}
\caption{Example of a non-monotone potential satisfying the RM condition for 
$n=100$ with $a_k=\sqrt{k(n+1-k)}$ and $\max V(k) \leq n$. Panel~(a) shows the potential, which satisfies the one-sided ratio bounds 
$\frac{V(k)}{V(k-1)}\leq \frac{a_{k+1}}{a_k}$ for $k\leq n/2$ and 
$\frac{V(k+1)}{V(k)}\geq \frac{a_{k+1}}{a_k}$ for $k\geq n/2+1$, where $E_0$ is the ground-state energy of $H = -A+V$.
Panel~(b) shows the shape of 
the log-concave ground state of $H(s) = -(1-s)A + sV$ for different values of~$s$.}
\label{fig:nonconvex}
\end{figure}

\subsubsection{Binomial \texorpdfstring{$\gamma$}{gamma}, binomial \texorpdfstring{$a_k$}{ak}}

As a final example of our framework, we consider the case where $\left\{\gamma_k = \sqrt{\frac{(k+1)(n-k+1)}{k(n-k)}} \right\}$.
In this case, $\gamma$-LC corresponds to ultra-log-concavity (ULC) of $\{\psi_k^2\}$ (\cref{def:ULC}).

We will argue that our framework again tightly captures the limiting case, which is the binomial distribution.
Indeed, in \cref{ssec:binHW} we already saw that the squared ground state of the Hamming weight Hamiltonian (with linear potential $V(k) = \alpha k$) is a binomial distribution.
In the following we show that this example satisfies our $\gamma$-RM conditions with equality.

We first observe that $\gamma_k\frac{a_k}{a_{k+1}}=1+\frac{1}{n-k}$ and $\gamma_k\frac{a_{k+1}}{a_{k}}=1+\frac{1}{k}$, so the $\{a_k\}$ satisfy \cref{eq:ak_cond}.
We then use a telescoping product to rewrite the product 
\[
\prod_{\ell=1}^{k-1} \gamma_\ell
= \sqrt{\frac{kn}{n-k+1}}, \qquad
\prod_{\ell=k+1}^{n-1} \gamma_\ell 
= \sqrt{\frac{n(n-k)}{k+1}}.
\]
This simplifies the $\gamma$-RM conditions to
\begin{align*}
\left( n-k\right)[V(k) - V(k-1)]+V(k)-E_0
&\leq \frac{n^2}{V(0)-E_0}
\qquad \qquad (1 \leq k  \leq k_0), \\
k[V(k) - V(k+1)]+V(k)-E_0
&\leq \frac{n^2}{V(n)-E_0}
\qquad \qquad (k_1 \leq k \leq n-1).
\end{align*}
Now, for the linear potential $V(k) = \alpha k$ we have the exact expression $E_0 = \frac{n}{2}(\alpha-\sqrt{\alpha^2+4})$ (cf.~\cref{lem:linear-HW}), and this is easily checked to satisfy both inequalities with equality.

\section{Spectral gap bounds from (ultra-)log-concavity} \label{sub:spectral_gap_lc}

To demonstrate the power of log-concavity, we show that it implies
useful lower bounds on the spectral gap $\Delta$ of $H$.
The argument builds on a discrete Hardy inequality described by Miclo~\cite{miclo1999example}.

\subsection{Hardy inequality}

We again consider a 1-dimensional Hamiltonian $H=-A+V$
on $\{0,\dots,n\}$ with diagonal potential $V$ and $A$ band-diagonal with $\braket{k-1|A|k} = \braket{k|A|k-1} = a_k$.
We will assume that $\{a_k\}$ is positive, in which case there is a unique ground state $\ket{\psi_0}=\sum_{\ell=0}^n \psi_\ell\ket{\ell}$ that can be chosen positive.
We denote by~$\Delta > 0$ the spectral gap of $H$.

We can connect such a stoquastic Hamiltonian to a Markov chain through the so-called ground state transformation.
Indeed, the operator
\[
L=-\text{Diag}(\psi_\ell)^{-1}(H-E_0)\text{Diag}(\psi_\ell)
\]
is a valid Markov chain generator, satisfying $\braket{k|L|\ell} \geq 0$ for $k \neq \ell$ and $L \vec{1} = 0$.
More precisely, it corresponds to a birth-and-death process on $\{0,\dots,n\}$ with stationary probabilities
\[
p_k
= \psi_k^2
\]
and ergodic flows
\[
b_\ell
= p_{\ell-1} L_{\ell-1,\ell}
= p_\ell L_{\ell,\ell-1}
= a_\ell\sqrt{p_\ell p_{\ell-1}}.
\]

Following \cite{miclo1999example}, we associate to $L$ a family of discrete Hardy constants: for any $0 < i < n$,
\[
B_+(p,i)
:=\max_{i<k\le n} p([k,n])\sum_{\ell=i+1}^k \frac1{b_\ell},
\qquad\quad
B_-(p,i)
:=\max_{0\le k<i} p([0,k])\sum_{\ell=k+1}^i \frac1{b_\ell}.
\]
The following lemma shows that these Hardy constants tightly characterize the spectral gap of $H$.

\begin{lemma}[Discrete Hardy inequality]\label{lem:hardy} For any index $0 < i < n$,
\[
\frac{1}{4}\min \left( \frac{1}{B_+(p,i)}, \, \frac{1}{B_-(p,i)} \right)\le \Delta \le \min \left( \frac{1}{p([0,i])B_+(p,i)}, \, \frac{1}{p([i,n])B_-(p,i)} \right).
\]
\end{lemma}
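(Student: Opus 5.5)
The plan is to pass through the ground-state transformation and reduce everything to the spectral gap of the birth-and-death generator $L$, then invoke Miclo's variational characterization. Since $L=-D^{-1}(H-E_0)D$ with $D=\mathrm{Diag}(\psi_\ell)$, the spectra of $H-E_0$ and $-L$ coincide. Moreover, the map $f\mapsto Df$ is an isometry from $\ell^2(p)$ onto $\ell^2$, and it sends the constant function $\vec 1$ to $\ket{\psi_0}$. Consequently $\Delta$ equals the spectral gap of $L$ in $\ell^2(p)$:
\[
\Delta=\inf_{f\not\equiv \mathrm{const}}\frac{\mathcal E(f)}{\mathrm{Var}_p(f)},\qquad \mathcal E(f)=\sum_{\ell=1}^n b_\ell\,(f_\ell-f_{\ell-1})^2 .
\]
Here the Dirichlet form comes from expanding $\langle Df,(H-E_0)Df\rangle$ with the recurrence \cref{eq:recurrence}. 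The off-diagonal terms $-2a_\ell\psi_\ell\psi_{\ell-1}f_\ell f_{\ell-1}$ combine with the diagonal terms, via $F(k)\psi_k=a_{k+1}\psi_{k+1}+a_k\psi_{k-1}$, into $\sum_\ell a_\ell\psi_\ell\psi_{\ell-1}(f_\ell-f_{\ell-1})^2$. This matches the stated flows $b_\ell=a_\ell\sqrt{p_\ell p_{\ell-1}}$.

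For the lower bound, I fix $i$ and may assume $f_i=0$, using $\mathrm{Var}_p(f)\le \sum_k p_k(f_k-f_i)^2$. I then split the sum into $k>i$ and $k<i$. On $(i,n]$, write $f_k=\sum_{\ell=i+1}^k (f_\ell-f_{\ell-1})$ and apply Muckenhoupt's weighted discrete Hardy inequality,
\[
\sum_{k>i}p_k\Big(\sum_{\ell=i+1}^k g_\ell\Big)^2\le 4B_+(p,i)\sum_{\ell>i} b_\ell g_\ell^2 .
\]
This inequality is proved in Miclo's paper, and I will cite it; its proof is a Cauchy--Schwarz argument with weights $\big(\sum_{j\le \ell}1/b_j\big)^{1/2}$. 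The symmetric bound on $[0,i)$ gives $4B_-(p,i)$. Adding the two pieces yields
\[
\mathrm{Var}_p(f)\le 4\max(B_+,B_-)\,\mathcal E(f),
\]
which is the left inequality.

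For the upper bound, I use test functions. Let $k^*$ attain the maximum in $B_+(p,i)$, and set $f_k=0$ for $k\le i$, $f_k=\sum_{\ell=i+1}^{\min(k,k^*)}1/b_\ell$ for $k>i$. Then $\mathcal E(f)=\sum_{\ell=i+1}^{k^*}1/b_\ell=:S$. The variance is bounded below by the variance of a two-valued function: $f\ge S$ on $[k^*,n]$ and $f=0$ on $[0,i]$. A cleaner route is to use the Dirichlet characterization. The gap is at least the bottom of the spectrum with a Dirichlet condition, and conversely, for any $f$ vanishing on a set of measure $p([0,i])$, one has $\mathrm{Var}_p(f)\ge p([0,i])\,\|f\|_{\ell^2(p)}^2$. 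This holds because $\mathrm{Var}_p(f)=\min_c\|f-c\|^2$, and if $|c|$ is large the mass on $[0,i]$ costs $c^2p([0,i])$. This inequality needs a careful check. Granting it, $\|f\|^2\ge p([k^*,n])S^2$, so
\[
\Delta\le \frac{S}{p([0,i])\,p([k^*,n])S^2}=\frac{1}{p([0,i])B_+(p,i)} .
\]
The other term follows by symmetry.

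The main obstacle is the variance lower bound for functions vanishing on a set of given mass. If the clean bound $\mathrm{Var}_p(f)\ge p(A)\|f\|^2$ fails, I would fall back on Miclo's stated estimates directly, as the lemma is essentially his result.
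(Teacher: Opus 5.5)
Your proposal is correct and follows the argument the paper cites from Miclo. The lower bound is Muckenhoupt's weighted Hardy inequality applied on each side of $i$ (Miclo's Proposition 3); the upper bound uses the test function that vanishes on $[0,i]$ and ramps up to $S$ at the maximiser of $B_+$, which is the proof of Miclo's Proposition 2 with the median replaced by $i$. The step you flagged does hold by one Cauchy--Schwarz: if $f=0$ on $[0,i]$ then $(\mathbb{E}_p f)^2=\big(\sum_{k>i}p_kf_k\big)^2\le p((i,n])\,\|f\|_{\ell^2(p)}^2$, hence $\mathrm{Var}_p(f)\ge p([0,i])\,\|f\|_{\ell^2(p)}^2$. (Drop the aside that the gap is at least a Dirichlet eigenvalue: it is not needed and is false in general, for instance when $i=n-1$; you only use that restricting to functions vanishing on $[0,i]$ gives an upper bound on $\Delta$.)
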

\begin{proof}
The lower bound follows from the lower bound in \cite[Proposition 3]{miclo1999example}.
The upper bound is not stated explicitly as such, but it follows from the proof of \cite[Proposition 2]{miclo1999example}.
Indeed, replacing the median~$m$ by $i$ in that proof recovers the bound that we claim.
\end{proof}

\begin{remark}
A particularly relevant case is when $i=m$ a median of $p$, so that $p([0,m]) \geq 1/2$ and $p([m,n]) \geq 1/2$.
Then
\[
\frac{1}{4 B}
\leq \Delta
\leq \frac{2}{B}
\]
with $B = \max\{B_+(p,m),B_-(p,m)\}$.
This corresponds to \cite[Proposition 2]{miclo1999example}.
\end{remark}

\subsubsection{Hardy constant for uniform and binomial distribution}

We first treat some cases of special interest.
Consider the uniform distribution over $\{0, \dots,n\}$.

\begin{lemma}[Uniform Hardy constant] \label{lem:unif_B}
Let $S_n=\sum_{l=1}^n \frac{1}{a_l}$ and $\{p_k = \frac{1}{n+1}\}$.
Then, for any $i$,
\[
\max\{B_+(p,i),B_-(p,i)\}
\le (n+1)S_n.
\]
\end{lemma}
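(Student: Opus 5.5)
The plan is to substitute the uniform distribution directly into the definitions of the Hardy constants and bound each factor separately. With $p_k = \frac{1}{n+1}$ for all $k$, the ergodic flows become
\[
b_\ell = a_\ell\sqrt{p_\ell p_{\ell-1}} = \frac{a_\ell}{n+1},
\qquad\text{so}\qquad
\frac{1}{b_\ell} = \frac{n+1}{a_\ell}.
\]

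For $B_+(p,i)$, fix $i<k\le n$. The tail mass is $p([k,n]) = \frac{n-k+1}{n+1}\le 1$. The sum satisfies
\[
\sum_{\ell=i+1}^k \frac{1}{b_\ell} = (n+1)\sum_{\ell=i+1}^k \frac{1}{a_\ell} \le (n+1)S_n,
\]
because every term $1/a_\ell$ is positive, so a partial sum is at most the full sum $S_n$. Multiplying the two factors gives $p([k,n])\sum_{\ell=i+1}^k 1/b_\ell \le (n+1)S_n$ for every $k$. Taking the maximum over $k$ then yields $B_+(p,i)\le (n+1)S_n$.

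The argument for $B_-(p,i)$ is symmetric. Here $p([0,k]) = \frac{k+1}{n+1}\le 1$, and $\sum_{\ell=k+1}^i 1/b_\ell \le (n+1)S_n$ by the same positivity argument. Combining the two bounds proves the claim.

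There is no real obstacle in this proof. The only point to state explicitly is that $a_\ell>0$, which the paper assumes throughout; this is what allows the partial sums to be bounded by $S_n$. A slightly sharper bound, of the form $p([k,n])\cdot(\text{partial sum})$, is available, but it is not needed for the statement.
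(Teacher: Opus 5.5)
Your proposal is correct and follows the paper's approach: substitute $p_k=\frac{1}{n+1}$ to get $1/b_\ell=(n+1)/a_\ell$, then bound the tail mass and the partial sum of $1/a_\ell$ separately. The paper keeps slightly sharper intermediate bounds, $B_+(p,i)\le (n-i+1)S_n$ and $B_-(p,i)\le (i+1)S_n$, before relaxing both to $(n+1)S_n$; you bound the tail mass by $1$ directly, which suffices for the statement.
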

\begin{proof}
Explicitly writing out the constants yields
\[
B_+(p,i)
= \max_{i<k\le n} \frac{n-k+1}{n+1}\sum_{l=i+1}^k \frac{n+1}{a_l}
\le (n-i+1)S_n
\]
and
\[
B_-(p,i)
= \max_{0\le k < i} \frac{k+1}{n+1} \sum_{l=k+1}^i \frac{n+1}{a_l}
\le (i+1)S_n. \qedhere
\]
\end{proof}

We will also need a bound on the Hardy 
constants for the binomial distribution $p_k \propto c^k \binom{n}{k}$ with $\{a_k=\sqrt{k(n+1-k)}\}$ derived from the hypercube symmetry reduction.
Rather than deriving it explicitly, we can use that the corresponding Hamiltonian is precisely the symmetry reduction of the Hamming weight Hamiltonian $-A + \alpha K$ as described in \cref{ssec:binHW}, for the particular choice of $\alpha=\frac{n-2\mu}{\sqrt{\mu(n-\mu)}}$ (with $\mu$ the mean of~$q$).
From \cref{lem:linear-HW} we know that this Hamiltonian has a spectral gap $\Delta = \sqrt{\alpha^2+4}\ge 2$.
Combined with the upper bound on the gap from \cref{lem:hardy}, we get the lemma below.

\begin{lemma}[Binomial Hardy constant] \label{lem:bin_B}
Let $a_k=\sqrt{k(n-k+1)}$, $p_k \propto c^k \binom{n}{k}$, and $m_p$ a median of $p$ satisfying $p([0,m_p]) \geq 1/2$ and $p([m_p,n]) \geq 1/2$.
\begin{itemize}
    \item If $i\ge m_p$, then $B_+(p,i) \le 1$.
    \item If $i\le m_p$, then $B_-(p,i) \le 1$.
\end{itemize}
\end{lemma}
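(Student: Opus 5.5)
The plan follows the hint given just before the statement: realize $p$ as the squared ground state of an explicit Hamming weight Hamiltonian, then read off $B_\pm$ from the upper bound in \cref{lem:hardy}.

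\emph{Step 1: identify $p$ as a binomial ground state.} Write $p_k \propto c^k\binom{n}{k}$ in the form $\binom{n}{k}(\mu/n)^k(1-\mu/n)^{n-k}$ with $\mu = nc/(1+c)$. Then $\sqrt{p_k} = \braket{k|\varphi_\mu}$, so $p$ is the squared amplitude of the binomial state $\ket{\varphi_\mu}$.

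\emph{Step 2: get the gap.} For $\mu\le n/2$, \cref{lem:linear-HW} says that $H_\alpha=-A_{hc}+\alpha K$ with $\alpha=\frac{n-2\mu}{\sqrt{\mu(n-\mu)}}$ has ground state $\ket{\varphi_\mu}$ and gap $\Delta_\alpha=\sqrt{4+\alpha^2}\ge 2$. Definition~\ref{def:psi_m} restricts $m\le n/2$, but the formula in \cref{lem:linear-HW} is stated for $m\in[0,n]$ and makes sense there. If one wants to avoid relying on it for $\mu>n/2$, I would instead use the reflection $k\mapsto n-k$: it preserves $a_k$ up to reindexing and swaps the roles of $B_+$ and $B_-$. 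The degenerate cases $c=0$ or $c=\infty$ would be treated separately or by a limit.

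\emph{Step 3: apply the Hardy upper bound.} The Hamiltonian $H_\alpha$ is of the form $-A+V$ with $a_k=\sqrt{k(n+1-k)}$. Its ground state transformation therefore has stationary law $p$ and flows $b_\ell=a_\ell\sqrt{p_\ell p_{\ell-1}}$, which are exactly the objects in the statement. For any $0<i<n$, the upper bound in \cref{lem:hardy} gives
\[
2\le \Delta_\alpha \le \frac{1}{p([0,i])\,B_+(p,i)},\qquad 2\le\Delta_\alpha\le \frac{1}{p([i,n])\,B_-(p,i)}.
\]
If $i\ge m_p$, then $p([0,i])\ge p([0,m_p])\ge 1/2$, so $B_+(p,i)\le \frac{1}{2\,p([0,i])}\le 1$. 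Symmetrically, if $i\le m_p$, then $p([i,n])\ge p([m_p,n])\ge1/2$, so $B_-(p,i)\le 1$.

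\emph{Boundary cases.} If $i=0$ or $i=n$, which are outside the range of \cref{lem:hardy}, I would argue directly. For example, $B_-(p,0)$ is a max over an empty set and so is trivially bounded, or these cases are simply excluded as the Hardy constants are defined only for $0<i<n$.

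\emph{Main obstacle.} The only delicate point is justifying the Hardy upper bound, which is imported from Miclo's proof with the median replaced by $i$. Given that, the key is to match the parametrization $c\leftrightarrow\mu$ so that \cref{lem:linear-HW} applies over the full range of $\mu$. I expect this either to hold directly from the lemma or to follow from the reflection symmetry in Step 2.
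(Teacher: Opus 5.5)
Your proposal is correct and follows the same route as the paper. The paper also realizes $p$ as the squared binomial ground state of $-A_{hc}+\alpha K$ with $\alpha=\frac{n-2\mu}{\sqrt{\mu(n-\mu)}}$, takes $\Delta_\alpha=\sqrt{4+\alpha^2}\ge 2$ from \cref{lem:linear-HW} (stated for all $m\in[0,n]$, so your reflection fallback is unnecessary), and combines it with the upper bound in \cref{lem:hardy} using $p([0,i])\ge 1/2$, resp.\ $p([i,n])\ge 1/2$.
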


\subsection{Hardy constant from relative log-concavity}

In this section we show one way of bounding the Hardy constant: if a distribution $p$ is ``log-concave relative to'' some other distribution $q$,  then we can upper bound the Hardy constant of $p$ using that of~$q$.
For distributions $p$ and $q$, we say that $p$ is log-concave relative to $q$ if the distribution $\{p_k/q_k\}$ is log-concave.
It is easily seen that any log-concave distribution $p$ is relative log-concave with respect to $q_k \propto c^k$ (\cref{def:1d-logconc}).
More interestingly, if $q_k\propto \binom{n}{k}$ then we recover the notion of ultra-log-concavity (\cref{def:ULC}).
We will be using the following technical lemma.

\begin{lemma}[Single crossing lemma] \label{lem:single-crossing}
Suppose that $p$ is log-concave relative to $q$, and let $k^*$ denote the mode of $\{r_k = p_k/q_k\}$.
Let $b_\ell = a_\ell \sqrt{p_{\ell-1} p_\ell}$ and $c_\ell = a_\ell \sqrt{q_{\ell-1} q_\ell}$.
\begin{enumerate}
\item
For all $k,\ell$ such that $k^* < \ell \leq k$, we have
\[
p([k,n])
\leq r_k q([k,n])
\qquad \text{ and } \qquad
b_\ell \geq r_k c_\ell.
\]
\item
Assume there exists $0 < i < k^*$ such that either $q([0,i]) = p([0,i])$ or $q([i,n]) = p([i,n])$.
Then for all $k,\ell$ such that $k > i$ and $i < \ell \leq k^*$, we have
\[
p([k,n]) \leq q([k,n])
\qquad \text{ and } \qquad
b_\ell \geq c_\ell.
\]
\end{enumerate}
\end{lemma}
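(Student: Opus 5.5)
The whole argument rests on the monotonicity of $r_k = p_k/q_k$ on either side of its mode $k^*$. Log-concavity of $\{r_k\}$ gives unimodality (Claim on unimodality), so $r$ is non-decreasing on $\{0,\dots,k^*\}$ and non-increasing on $\{k^*,\dots,n\}$. Both parts then follow from pointwise comparisons of $p_j = r_j q_j$ against multiples of $q_j$, together with the identity $b_\ell = a_\ell\sqrt{r_{\ell-1}r_\ell}\,\sqrt{q_{\ell-1}q_\ell} = \sqrt{r_{\ell-1}r_\ell}\,c_\ell$.

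\textbf{Part 1.} Take $k^*<\ell\le k$. For every $j\ge k\ge k^*$ we have $r_j\le r_k$, hence $p_j\le r_k q_j$. Summing over $j\in[k,n]$ gives $p([k,n])\le r_k q([k,n])$. For the flows, note that $\ell-1\ge k^*$ and $\ell\le k$. Both $\ell-1$ and $\ell$ therefore lie in the non-increasing region and are $\le k$, so $r_{\ell-1}\ge r_k$ and $r_\ell\ge r_k$. The identity above then gives $b_\ell\ge r_k c_\ell$.

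\textbf{Part 2.} The crux is to show that $r_\ell\ge 1$ on the stretch $(i,k^*]$. I plan to argue by contradiction, using that both $p$ and $q$ are probability distributions.

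First suppose $p([0,i])=q([0,i])$. Then also $p([i+1,n])=q([i+1,n])$. If $r_{i+1}<1$, then $r_j\le r_{i+1}<1$ for all $j\le i+1$ by monotonicity, so $p_j<q_j$ there and $p([0,i])<q([0,i])$ (assuming $q_j>0$), a contradiction. Hence $r_j\ge 1$ for $i<j\le k^*$.

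Next suppose $q([i,n])=p([i,n])$, so $p([0,i-1])=q([0,i-1])$. The same argument gives $r_i\ge1$, and hence $r_j\ge 1$ on $[i,k^*]$.

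In either case, for $i<\ell\le k^*$ both $\ell-1$ and $\ell$ lie in $[i,k^*]$. In the first case we additionally need $r_i\ge1$ when $\ell=i+1$. I would get this by repeating the contradiction one step lower, or by observing that the equal-mass condition forces $r$ to cross $1$ at most once before $k^*$. I expect this boundary index to be the fiddly point. This yields $b_\ell\ge c_\ell$.

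For the tails with $k>i$, I will use single crossing. The set $\{j: r_j\ge 1\}$ is an interval $J$, since $r$ is unimodal, and $J$ contains $(i,k^*]$. Write $J=[j_0,j_1]$ with $j_0\le i+1$.

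If $k>j_1$, then $r_j<1$ for all $j\ge k$, so $p([k,n])\le q([k,n])$ directly.

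If $i<k\le j_1$, I use the complement: $p([k,n])=1-p([0,k-1])$. On $[0,k-1]$ the ratio satisfies $r\ge1$ on $[j_0,k-1]$, and $r<1$ only below $j_0\le i+1$. Combining this with the equal-mass condition at $i$ gives $p([0,k-1])\ge q([0,k-1])$: the mass of $[0,i]$ (or of $[0,i-1]$) matches exactly, and the remaining indices up to $k-1$ all have $r\ge1$. Hence $p([k,n])\le q([k,n])$.

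\textbf{Main obstacle.} The main difficulty is bookkeeping at the boundary index $i$ versus $i+1$ under the two alternative equal-mass hypotheses, including possible $q_j=0$. I would handle this by treating the two hypotheses separately and using strictness of the contradiction at the crossing point.
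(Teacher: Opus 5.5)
Your proposal is correct and follows essentially the paper's route: unimodality of $r$, the equal-mass hypothesis forcing $r_s\ge 1$ at $s\in\{i-1,i\}$, and a single-crossing argument for the tails. For the tails you compare prefix masses via the complement, whereas the paper shows every suffix sum $\sum_{j\ge k}(r_j-1)q_j$ is nonpositive; the two are equivalent. The boundary worry for $\ell=i+1$ disappears if you run your contradiction directly at index $i$: $r_i<1$ would force $r_j<1$ on all of $[0,i]$, hence $p([0,i])<q([0,i])$, which is exactly how the paper obtains $r_s\ge 1$.
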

\begin{proof}
To prove the first point, we note that log-concavity (and hence unimodality) of $\{r_k\}$ implies that the sequence is non-increasing for $k \geq k^*$.
This implies that
\[
p([k, n])
= \sum_{j=k}^n r_j q_j \le r_k \sum_{j=k}^n q_j
= r_k q([k, n]),
\]
but also that $p_\ell \geq r_k q_\ell$ for $k^* \leq \ell \leq k$, and hence $b_\ell \geq r_k c_\ell$ for $k^* < \ell \leq k$.

For the second point, we assume that $p([0,s]) = q([0,s])$ for either $s = i$ or $s = i-1$ with $i < k^*$.
Equivalently, this is $\sum_{j=0}^s (r_j - 1) q_j = 0$.
Since the sequence $\{r_j\}$ is non-decreasing on $\{0,\dots,k^*\}$, this implies that $r_s \geq 1$, and moreover $1 \leq r_s \leq r_{s+1} \leq \dots \leq r_{k^*}$.
This already implies that $b_\ell \geq c_\ell$ for $i < \ell \leq k^*$.

By unimodality and the fact that $r_i \geq 1$, the sequence $r_j - 1$ on $\{i,\dots,n\}$ must be nonnegative up to some point, and nonpositive past that point (hence ``single crossing'').
By our assumption, the weighted sum over the interval $\sum_{j=i+1}^n (r_j - 1) q_j$ is nonpositive.\footnote{It is zero if $q([0,i]) = p([0,i])$. It is nonpositive if $q([i,n]) = p([i,n])$, which follows from the fact that $r_i \geq 1$.}
This implies that any suffix $\sum_{j=k}^n (r_j - 1) q_j$ of the sum  must be nonpositive for any $k > i$, and hence $p([k,n]) \leq q([k,n])$ for $k > i$.
\end{proof}

The following theorem then shows how to bound the Hardy constant of a distribution $p$ by using the Hardy constant of a distribution $q$ relative to which it is log-concave and with which it shares the same tail mass.
Note that we can always enforce this condition by replacing $\{q_k\}$ with $\{c^k q_k\}$ for an appropriate choice of $c$ (indeed, log-concavity relative to $\{q_k\}$ implies log-concavity relative to~$\{c^k q_k\}$).

\begin{theorem}[Relative Hardy bound] \label{thm:relative_lc}
Suppose that $p$ is log-concave relative to $q$, and let $k^*$ be the mode of $\{p_k/q_k\}$.
Let $0 < i < n$ be such that either
\[
q([0,i]) = p([0,i])
\qquad \text{ or } \qquad
q([i,n]) = p([i,n])
\qquad \text{ or } \qquad
i = k^*.
\]
Then
\[
B_+(p,i) \leq B_+(q,i)
\qquad \text{ and } \qquad
B_-(p,i) \leq B_-(q,i).
\]
\end{theorem}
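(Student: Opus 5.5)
We compare the Hardy constants term by term, using the single crossing lemma (\cref{lem:single-crossing}). Recall
\[
B_+(p,i)=\max_{i<k\le n} p([k,n])\sum_{\ell=i+1}^k \frac{1}{b_\ell},
\qquad b_\ell = a_\ell\sqrt{p_{\ell-1}p_\ell}, \quad c_\ell = a_\ell\sqrt{q_{\ell-1}q_\ell}.
\]
It suffices to show that for every $k>i$,
\[
p([k,n])\sum_{\ell=i+1}^k \frac{1}{b_\ell}\;\le\; q([k,n])\sum_{\ell=i+1}^k \frac{1}{c_\ell}.
\]
The statement for $B_-$ follows by the mirror argument $k\mapsto n-k$, which preserves relative log-concavity and swaps the roles of the two tail conditions.

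First consider the case $i\ge k^*$, which includes $i=k^*$. Here every $\ell\in\{i+1,\dots,k\}$ satisfies $k^*<\ell\le k$. Part 1 of \cref{lem:single-crossing} then gives $p([k,n])\le r_k q([k,n])$ and $1/b_\ell\le 1/(r_k c_\ell)$. The factors $r_k$ cancel and the desired inequality follows.

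Next consider the case $i<k^*$, where one of the tail-mass equalities holds. Fix $k>i$ and split the sum over $\ell$ at $k^*$.
\begin{itemize}
\item If $k\le k^*$, part 2 of the lemma gives both $p([k,n])\le q([k,n])$ and $b_\ell\ge c_\ell$ for all $i<\ell\le k$, and we are done.
\item If $k>k^*$, the terms with $i<\ell\le k^*$ have $b_\ell\ge c_\ell$ by part 2. The terms with $k^*<\ell\le k$ have $b_\ell\ge r_k c_\ell$ by part 1.
\end{itemize}
So in the second subcase we need
\[
p([k,n])\Big(\sum_{i<\ell\le k^*}\frac{1}{c_\ell}+\frac{1}{r_k}\sum_{k^*<\ell\le k}\frac{1}{c_\ell}\Big)\le q([k,n])\sum_{i<\ell\le k}\frac{1}{c_\ell}.
\]
The first group is handled by $p([k,n])\le q([k,n])$ from part 2. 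The second group is handled by $p([k,n])\le r_k\,q([k,n])$ from part 1. Adding the two bounds gives the claim.

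The point requiring care is that part 2 needs the tail-mass condition with $i<k^*$, while part 1 is exactly what covers $i\ge k^*$. When $i>k^*$ and one of the first two conditions is imposed, we simply use the first case's argument instead; so the case split is on the position of $i$ relative to $k^*$, not on which hypothesis holds.

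The main obstacle is this mixed regime $k>k^*>i$. There the ratio $r_k$ may be less than $1$, so the two groups of terms must be bounded with different tail inequalities. I expect the additive split above to work because each tail inequality is used only against the matching group of $\ell$'s, so the constants never need to be compared across groups.
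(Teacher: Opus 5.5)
Your proof is correct and follows the paper's argument. Both reduce $B_+(p,i)\le B_+(q,i)$ to the termwise bound $p([k,n])/b_\ell\le q([k,n])/c_\ell$ for $i<\ell\le k$: part 1 of the single crossing lemma handles $\ell>k^*$, which is every $\ell$ when $i\ge k^*$, and part 2 handles $i<\ell\le k^*$ when $i<k^*$. Your per-$k$ additive split is just the sum of these termwise inequalities, and your remark that the reflection $k\mapsto n-k$ swaps the two tail-mass conditions makes the paper's ``by symmetry'' step for $B_-$ precise.
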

\begin{proof}
We only prove the bound on $B_+$, since the bound on $B_-$ follows by symmetry.

Let $b_\ell = a_\ell \sqrt{p_{\ell-1} p_\ell}$ and $c_\ell = a_\ell \sqrt{q_{\ell-1} q_\ell}$.
The bound on $B_+$ follows if we prove
\[
\frac{p([k,n])}{b_\ell}
\leq \frac{q([k,n])}{c_\ell}
\quad (i < \ell \leq k).
\]
This follows from point 1.~in \cref{lem:single-crossing} if either $k^* \leq i$, or $k^* > i$ and $k^* < \ell \leq k$.
The remaining case where $k^* > i$ and $i < \ell \leq k^*$ follows from point 2.~in \cref{lem:single-crossing}, where we invoke the assumption that either $q([0,i]) = p([0,i])$ or $q([i,n]) = p([i,n])$.
\end{proof}

\subsection{Spectral gap bounds} \label{sec:sp-gap-bounds}

Combined with our criteria for log-concavity of the ground state, we get our main result of the section.

\begin{theorem}\label{thm:lc-gap} 
Consider 1-dimensional $H = -A+V$ with ground state distribution $p$ and spectral gap $\Delta$.
\begin{enumerate}
\item
\textbf{Uniform, log-concave:}
if $\{a_\ell = 1\}$ and $p$ log-concave then
\[
\Delta \in \Omega(1/n^2).
\]
This holds e.g.~when $\{a_\ell = 1\}$ and $V$ is single-well (\cref{sec:LC-concrete}).
\item
\textbf{Hypercube, log-concave:}
if $\{a_\ell=\sqrt{\ell(n+1-\ell)}\}$ and $p$ log-concave then
\[
\Delta \in \Omega(1/n).
\]
This holds e.g.~when $\{a_\ell=\sqrt{\ell(n+1-\ell)}\}$ and $H$ satisfies the 1-RM condition (\cref{sec:LC-concrete}).
\item 
\textbf{Hypercube, ultra-log-concave:}
if $\{a_\ell=\sqrt{\ell(n+1-\ell)}\}$ and $p$ ultra-log-concave then
\[
\Delta \in \Omega(1).
\]
This holds e.g.~when $\{a_\ell=\sqrt{\ell(n+1-\ell)}\}$ and $V$ is convex (\cref{thm:ultra log-conc-conv}).
\end{enumerate}
\end{theorem}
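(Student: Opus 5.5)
The plan is to pick a comparison distribution $q$ whose Hardy constants are already controlled, transfer them to $p$ with \cref{thm:relative_lc}, and then turn the Hardy constants into a gap bound with the lower bound of \cref{lem:hardy}. In all three cases I will take $i$ to be a median $m$ of $p$, so $p([0,m])\ge 1/2$ and $p([m,n])\ge 1/2$. To meet the hypothesis of \cref{thm:relative_lc} I will tilt the reference: $q_k \propto c^k \tilde q_k$, with $c>0$ chosen by continuity so that $q([0,m]) = p([0,m])$. As $c$ runs from $0$ to $\infty$, the map $c\mapsto q([0,m])$ varies continuously from $1$ to $0$, so such a $c$ exists. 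Log-concavity of $p$ relative to $\tilde q$ is preserved under the tilt, because the factor $c^k$ drops out of the log-concavity condition.

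In cases 1 and 2, $p$ is log-concave, so $p$ is log-concave relative to $\tilde q_k\equiv 1$, and hence relative to the geometric $q_k\propto c^k$. \cref{thm:relative_lc} then gives $B_\pm(p,m)\le B_\pm(q,m)$. The uniform bound \cref{lem:unif_B} is stated only for $c=1$, so I must redo its computation for a geometric $q$. Take $B_+(q,m)=\max_{k>m} q([k,n])\sum_{\ell=m+1}^k \frac{1}{a_\ell \sqrt{q_{\ell-1}q_\ell}}$. For $c\ge 1$, every $\ell\le k$ has $q_{\ell-1},q_\ell\ge q_{k}/c$. I will also bound $q([k,n])\le (n-k+1)\,q_n$, which requires comparing $q_n$ with $q_k$; to avoid this I will instead use the alternative condition $i=k^*$ or a cruder argument.

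A cleaner route handles the bounded ratio directly. We have $q([k,n])/\sqrt{q_{\ell-1}q_\ell}\le \sum_{j\ge k} q_j/q_{\ell-1}$. If $c\le1$, each term is at most $1$, giving at most $n+1$ terms. If $c>1$, the bound fails, so in that regime I will use the median property to show $B_-$ is the relevant side and apply the symmetric estimate. Either way I aim for $\max_\pm B_\pm(p,m)\le (n+1)S_n$. \cref{lem:hardy} then yields $\Delta\ge 1/(4(n+1)S_n)$. With $S_n=n$ for $a_\ell=1$ this gives $\Omega(1/n^2)$. For the hypercube, $S_n=\sum_\ell 1/\sqrt{\ell(n+1-\ell)}\le\pi$, which gives $\Omega(1/n)$. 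The example clauses follow from the single-well and RM discussion in \cref{sec:LC-concrete} together with \cref{informal:thm_lc_hill}.

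In case 3, $p$ is ultra-log-concave, i.e.\ log-concave relative to $\binom{n}{k}$, and hence relative to the binomial $q_k\propto c^k\binom nk$. I choose $c$ so that $q([0,m])=p([0,m])$. Then $q([m,n]) = p([m,n]) + q_m - p_m$; I expect to check that $m$ is within one of a median $m_q$ of $q$. By \cref{lem:bin_B}, whichever side is appropriate gives $B_+(q,m)\le 1$ or $B_-(q,m)\le1$. The other side needs care: I would apply \cref{thm:relative_lc} at $i=m_q$ instead, with the tilt chosen to match masses at $m_q$. I would also use the binomial gap $\Delta_q\ge 2$ together with the two-sided \cref{lem:hardy}, which bounds both $B_\pm(q,m_q)\le 2/\Delta_q\cdot 2\le 2$. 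Transferring to $p$ gives $\max B_\pm(p,m_q)\le 2$. Applying the lower bound of \cref{lem:hardy} at $i=m_q$, with no median requirement on $p$, gives $\Delta\ge 1/8$, which is $\Omega(1)$. The example clause follows from \cref{thm:ultra log-conc-conv}.

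The main obstacle is making the tilted reference simultaneously satisfy the mass-matching hypothesis of \cref{thm:relative_lc} and have controlled Hardy constants on both sides at the same index $i$. For case 3, choosing $i$ as a median of $q$, where \cref{lem:hardy} controls both constants of $q$, together with a tilt that matches $q([0,i])=p([0,i])$, requires a fixed-point argument in $c$: the median of $q$ moves with $c$. I would try an intermediate-value argument on $c\mapsto q([0,m_q(c)])-p([0,m_q(c)])$, accepting $\pm1$ slack in the median.
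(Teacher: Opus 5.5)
Your architecture is the paper's: the lower bound of \cref{lem:hardy}, transfer through \cref{thm:relative_lc}, and reference bounds from \cref{lem:unif_B} and \cref{lem:bin_B}. As written, though, neither half closes. In both, the obstruction is the assumption that a single reference $q$ must control $B_+$ and $B_-$ simultaneously.

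\emph{Case 3.} You correctly get $B_+(q,m)\le 1$ from the left-matched tilt. You then try to handle ``the other side'' by moving to $i=m_q$ and solving a fixed-point problem in $c$, and that argument is not available as stated. The map $c\mapsto m_q(c)$ is integer-valued, so $c\mapsto q_c([0,m_q(c)])-p([0,m_q(c)])$ jumps and the intermediate value theorem gives nothing. Accepting ``$\pm1$ slack'' also destroys the two-sided median property you need for the upper bound of \cref{lem:hardy}. The missing observation is that $B_\pm(p,i)$ depend only on $p$, so you may invoke \cref{thm:relative_lc} twice at the same $i=m$ (a median of $p$) with two different tilts:
\begin{itemize}
\item choose $c_1$ with $q_{c_1}([0,m])=p([0,m])\ge 1/2$; the upper bound of \cref{lem:hardy} and $\Delta_{q_{c_1}}\ge 2$ give $B_+(p,m)\le B_+(q_{c_1},m)\le 1/(q_{c_1}([0,m])\Delta_{q_{c_1}})\le 1$;
\item choose $c_2$ with $q_{c_2}([m,n])=p([m,n])\ge 1/2$; symmetrically $B_-(p,m)\le 1$.
\end{itemize}
Hence $\Delta\ge 1/4$. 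This is exactly the paper's argument, phrased there as a case split on which of $B_\pm(p,m)$ is the maximum.

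\emph{Cases 1--2.} You never establish the Hardy bound for the geometric reference $q_k\propto c^k$ that your chosen route needs. The inequality $q([k,n])/\sqrt{q_{\ell-1}q_\ell}\le\sum_{j\ge k}q_j/q_{\ell-1}$ requires $q_\ell\ge q_{\ell-1}$, i.e.\ $c\ge 1$, which is the opposite of the regime where you use it. For $c>1$, appealing to ``$B_-$ being the relevant side'' does not help, since the lower bound in \cref{lem:hardy} involves $\max\{B_+,B_-\}$.

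The clean fix is the option you mention and then abandon: take $i=k^*$, the mode of $p$. For uniform $q$ the mode of $p/q$ is $k^*$, so the third alternative hypothesis of \cref{thm:relative_lc} holds with no tilt at all. \cref{lem:unif_B} is valid for every $i$, which gives $\max\{B_+(p,k^*),B_-(p,k^*)\}\le (n+1)S_n$ and $\Delta\ge 1/(4(n+1)S_n)$; this is the paper's proof.

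Your tilted route can also be completed with the same two-tilt trick. With the left-matched tilt, $B_+(q,m)\le (n+1)S_n$ holds for $c\le 1$ because $q_j\le q_\ell\le\sqrt{q_{\ell-1}q_\ell}$ for $j\ge\ell$. It holds for $c>1$ because $q([k,n])\le 1/2$ while $\sqrt{q_{\ell-1}q_\ell}\ge q_m\ge 1/(2(m+1))$. The right-matched tilt handles $B_-$ symmetrically.

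The remaining steps ($S_n=n$, $S_n\le\pi$, and the example clauses) are fine.
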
 
\begin{proof}
All proofs go via the discrete Hardy inequality (\cref{lem:hardy}), stating that $\Delta \geq \frac{1}{4B(p,i)}$ with $B(p,i) = \max\{B_+(p,i),B_-(p,i)\}$.
To bound $B(p,i)$, we combine our relative Hardy bound (\cref{thm:relative_lc}) with our bounds on the Hardy constants of the geometric (\cref{lem:unif_B}) and binomial distribution (\cref{lem:bin_B}).

Towards proving points 1. and 2., we note that if $p$ is log-concave with mode $k^*$ then it is log-concave relative to the uniform distribution $\{ q_k \propto 1\}$ and the mode of $p/q$ is at $k^*$. We apply \cref{thm:relative_lc} with $i=k^*$.
Hence
\begin{equation} \label{eq:Bp-bound}
B(p,i)
\overset{\text{\cref{thm:relative_lc}}}{\leq} B(q,i)
\overset{\text{\cref{lem:unif_B}}}{\leq} S_n (n+1).
\end{equation}
Point 1.~then follows by noting that $S_n = n$ in that case.
Point 2.~follows by showing that in this case $S_n \le  \pi$ for every $n \geq 1$.
Define
\[
f(x)=\frac{1}{\sqrt{x(n+1-x)}} \qquad (0<x<n+1)
\]
so that $f(\ell) = 1/a_\ell$.
Since \(f(x)=f(n+1-x)\), $f$ is symmetric about the midpoint. Also, \(f\) is positive and decreasing on
\(\left(0,\frac{n+1}{2}\right]\), so for $k\le \frac{n+1}{2}$, $f(k) \le \int_{k-1}^k f(x) dx$. Hence we may bound the sum by an integral:
\[
S_n \le 2 \sum_{k=1}^{\lfloor(n+1)/2\rfloor} f(k)\le 2\int_0^{(n+1)/2} \frac{dx}{\sqrt{x(n+1-x)}}.
\]
Now make the substitution $x=(n+1)\sin^2\theta$ so that $dx = 2(n+1)\sin\theta\cos\theta\,d\theta$ and
\[
\sqrt{x(n+1-x)}
= \sqrt{(n+1)\sin^2\theta \cdot (n+1)\cos^2\theta}
= (n+1)\sin\theta\cos\theta.
\]
Therefore
\[
\int_0^{(n+1)/2} \frac{dx}{\sqrt{x(n+1-x)}}
=
\int_0^{\pi/4} 2\,d\theta
=
\frac{\pi}{2}
\]
and so $S_n \le 2\cdot \frac{\pi}{2}=\pi$.

Finally we prove point 3., which corresponds to the case where $p$ is log-concave relative to the binomial distribution $q$. We apply \cref{thm:relative_lc} with $i=m_p$ a median of $p$, satisfying $p([0,m_p]) \geq 1/2$ and $p([m_p,n]) \geq 1/2$.

\noindent If $B(p,m_p)=B_+(p,m_p)$, then we choose $c$ such that $q([0,m_p])=p([0,m_p])$.
Then if $m_q$ is the leftmost median of $q$, we have that $m_q \le m_p$.
Using our bound on the Hardy constant of any binomial (\cref{lem:bin_B}), we get
\[
B(p,m_p)=B_+(p,m_p)
\overset{\text{\cref{thm:relative_lc}}}{\leq} B_+(q,m_p)
\overset{\text{\cref{lem:bin_B}}}{\leq} 1.
\]
If $B(p,m_p)=B_-(p,m_p)$, then we choose $c$ such that $q([m_p,n])=p([m_p,n])$.
Letting $m'_q$ denote the rightmost median of $q$, we get that $m'_q \ge m_p$.
Using our bound on the Hardy constant of any binomial (\cref{lem:bin_B}), we get 
\[
B(p,m_p)=B_-(p,m_p)
\overset{\text{\cref{thm:relative_lc}}}{\leq} B_-(q,m_p)
\overset{\text{\cref{lem:bin_B}}}{\leq} 1. \qedhere
\]
\end{proof}

\noindent
It is interesting to compare our bounds with the results by Jarret and Jordan in \cite{Jarret_2014}:
\begin{itemize}
\item
For the 1-dimensional Hamiltonian $H = -A + V$ with uniform $\{a_\ell\}$ and $V$ convex, they show a bound $\Omega(1/n^2)$ on the spectral gap.
Point 1.~in \cref{thm:lc-gap} generalizes this bound to the much weaker case where $-V$ is merely unimodal.
\item
For the hypercube Hamiltonian $H = -A_{hc} + V$ with a symmetric potential $V$, we can use the symmetry reduction in \cref{sec:sym}.
Point 3.~in \cref{thm:lc-gap} then shows a constant gap when $V$ is convex, in correspondence to the findings in \cite{Jarret_2014}.
Point 2.~in \cref{thm:lc-gap} goes beyond this, and shows an $\Omega(1/n)$ lower bound when $H$ satisfies the $1$-RM condition (\cref{def:gamma-hill}).
This includes potentials that are non-convex and can even have local minima.
\end{itemize}
\noindent
We do mention that the bounds in Jarret and Jordan \cite{Jarret_2014} are tight down to the constant, in analogy to the continuous fundamental gap proof \cite{spectral_gap_continuous} which their work is based on.
This implies that points 1.~and 3.~in  \cref{thm:lc-gap} are tight up to a constant.

Regarding point 2.~in \cref{thm:lc-gap}, we note that this bound is also tight up to a constant.
Indeed, for the aforementioned potential $V(k)=a_k +a_{k+1}$, the ground state is uniform and hence log-concave.
If the $a_k$'s are derived from the hypercube then point 2.~in \cref{thm:lc-gap} gives a bound $\Omega(1/n)$.
To see that this is tight, note that the ground state energy in this case is 0, and the first excited energy can be shown to be $O(1/n)$ by using an ansatz state proportional to $\sum_k (k-n/2)|k\rangle$, orthogonal to the uniform ground state.

\section{Log-concavity and tunneling} \label{sec:LC-tunnel}

We now demonstrate the utility of log-concavity in analyzing quantum tunneling through localized spike perturbations.
Following a variation of Reichardt's argument for the HWS problem~\cite{reichardt2004quantum}, we show that for log-concave ground states we can easily control the spectral gap when adding a spike perturbation.
In the following we define the mean $\mu$ and variance $\sigma^2$ of a ground state $\ket{\psi_0}$ as the mean and variance of the distribution $p_k = \psi_k^2$.
The key point will be that for a log-concave ground state, the mass placed on a short interval is uniformly controlled by the variance and the distance of the interval from the mean.

\begin{theorem}\label{thm:spike_general}
Consider a Hamiltonian $H = -A + V$ with spectral gap $\Delta$ and a log-concave ground state with variance $\sigma$ and mean $\mu$.
Consider a perturbation $h:\{0,1,\dots,n\} \to \mathbb{R}_{\geq 0}$ supported on an interval $[\ell,u-1] \subseteq [0,n]$ and satisfying an upper bound $h(k) \leq \bar h$.
If $\alpha = \max\{\mu-u,\ell-\mu\}$ and
\[
\bar{h} \cdot \min\left\{ \frac{2(u-\ell)}{\sqrt{1+4\sigma^2}}, 2\exp\left(-\frac{\alpha}{2(2\sigma+1)}\right) \right\}
\le \Delta/2
\]
then the perturbed Hamiltonian $H' = H + V'$ with $V' \ket{k} = h(k) \ket{k}$ has spectral gap $\Delta' \geq \Delta/2$.
\end{theorem}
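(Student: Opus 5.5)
The plan is Reichardt's perturbative argument: the spike is a non-negative diagonal perturbation, so it can only raise energies, and by how much it raises the ground energy is controlled by the mass of the unperturbed ground state on the spike's support. Write $\lambda_0<\lambda_1$ for the two lowest eigenvalues of $H$ and $\lambda_0',\lambda_1'$ for those of $H'=H+V'$. Since $V'\succeq 0$, the Courant--Fischer min-max principle gives $\lambda_1'\ge\lambda_1$. For the ground energy I will use $\ket{\psi_0}$ as a variational state:
\[
\lambda_0' \le \bra{\psi_0}H'\ket{\psi_0} = \lambda_0 + \sum_{k=\ell}^{u-1} h(k)\,p_k \le \lambda_0 + \bar h\, p([\ell,u-1]).
\]
Together these give $\Delta'=\lambda_1'-\lambda_0'\ge \Delta - \bar h\, p([\ell,u-1])$. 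It then suffices to show $\bar h\, p([\ell,u-1])\le\Delta/2$, and by the hypothesis it is enough to bound $p([\ell,u-1])$ by each of the two terms inside the minimum.

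The first bound uses unimodality and the mode estimate. The interval contains $u-\ell$ integers, and each $p_k$ is at most $p_{k^*}$. By \cref{eq:LC-bound-1}, $p_{k^*}\le 2/\sqrt{1+4\sigma^2}$, so
\[
p([\ell,u-1])\le (u-\ell)\,p_{k^*}\le \frac{2(u-\ell)}{\sqrt{1+4\sigma^2}}.
\]

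The second bound is a tail estimate. Assume $\alpha\ge 0$, since otherwise the exponential term exceeds $2$ and the bound is trivial. If $\alpha=\ell-\mu$, then every $k\in[\ell,u-1]$ satisfies $k-\mu\ge\alpha$. If instead $\alpha=\mu-u$, then every such $k$ satisfies $\mu-k\ge \mu-u+1>\alpha$. In both cases the interval lies inside $\{|k-\mu|\ge\alpha\}$. The remark after \cref{eq:LC-bound-2}, which converts the mean absolute deviation to $\sigma$ via Jensen, then gives $p([\ell,u-1])\le 2\exp(-\alpha/(2(2\sigma+1)))$.

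Taking the smaller of the two bounds and applying the hypothesis yields $\Delta'\ge\Delta-\Delta/2=\Delta/2$. I do not expect a real obstacle. The points needing care are the min-max step for $\lambda_1'$, which needs only $V'\succeq0$ and no continuity argument, and checking that the half-open endpoint convention $[\ell,u)$ is consistent with the definition of $\alpha$.
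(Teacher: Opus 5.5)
Your proposal is correct and matches the paper's proof step for step. You use monotonicity of $\lambda_1$ under the non-negative perturbation (Weyl/Courant--Fischer) and the variational bound $\lambda_0'-\lambda_0\le \bar h\, p([\ell,u-1])$, then bound the spike mass either by $(u-\ell)p_{k^*}$ via \cref{eq:LC-bound-1} or by the log-concave tail bound \cref{eq:LC-bound-2}. Your explicit handling of the case $\alpha<0$ and of the half-open endpoint convention slightly tightens the paper's write-up, whose final display sums up to $u$ rather than $u-1$.
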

\begin{proof}
Let $\lambda_0,\lambda_1$ (resp.~$\lambda_0',\lambda_1'$) denote the two lowest eigenvalues of $H$ (resp.~$H'$).
By Weyl’s monotonicity theorem, we have that $\lambda'_1 \ge \lambda_1$.
Conversely, with $p_k = \psi_k^2$, the variational principle with $\ket{\psi_0}$ as a test function implies that
\begin{equation*}
\lambda_0' - \lambda_0
\le \braket{\psi_0|V'|\psi_0}
\leq \bar{h} \sum_{k=\ell}^{u-1} p_k.
\end{equation*}
Since the $p_k$'s are log-concave, we know from \cref{eq:LC-bound-1} that
\[
p_{k^*} \leq \frac{2}{\sqrt{1+4\sigma^2}}.
\]
This yields a bound $\lambda_0' - \lambda_0 \leq \bar{h} \sum_{k=\ell}^{u-1} p_k \le \bar{h} (u-\ell) p_{k^*} \le \bar{h} \frac{2(u-\ell)}{\sqrt{1+4\sigma^2}}$, which is good when $\sigma$ is sufficiently large.

We get an alternative bound when $\sigma$ is sufficiently small as compared to the distance between the mean~$\mu$ and the spike.
Here we use the second bound based on log-concavity, \cref{eq:LC-bound-2}, which states that
\[
\sum_{|k-\mu| \geq \alpha} p_k
\leq 2 \exp\left(\frac{-\alpha}{2(2\sigma+1)}\right)
\]
for any $\alpha \geq 0$.
If either $\mu \leq \ell-\alpha$ or $\mu \geq u+\alpha$, then this implies that
\[
\lambda_0' - \lambda_0 \leq \bar{h} \sum_{k=\ell}^u p_k \le 2 \bar{h} \exp\left(\frac{-\alpha}{2(2\sigma+1)}\right). \qedhere
\]
\end{proof}

\section{Application: Quadratic Hamming weight with a spike}\label{sec:quadra}

In this section we develop the ``quadratic Hamming weight with a spike'' (quadratic HWS) problem.
For this we will specialize our analysis to quadratic potentials perturbed by a spike.
Specifically we consider a family of Hamiltonians
\[
H_{qs}(s)
= -(1-s)A_{hc} + s V_{qs}
\]
for $s \in [0,1]$, with $V_{qs} = V_q + h$ where $V_q(k) = a (k-k_0)^2$ for $k_0 \leq 0$ (ensuring that the minimizer is at $k=0$) and $h$ is the spike perturbation.
We aim to use \cref{thm:spike_general} to bound the spectral gap $\Delta_{qs}(s)$ of $H_{qs}$, which requires a bound on the spectral gap $\Delta_q(s)$ of the \emph{unperturbed} quadratic Hamiltonian $H_q(s) = -(1-s) A_{hc} + s V_q$, as well as estimates of the mean and variance of its ground state $\ket{\psi_q(s)}$.
While for a linear potential, as in the HWS problem, the ground state is known explicitly as a binomial distribution, this is not so for quadratic potentials.
We address this key challenge by showing that $\ket{\psi_q(s)}$ does maintain a constant overlap with the binomial ground state of a linear potential.
Combined with log-concavity of the ground state, this yields explicit control of the mean and variance of~$\ket{\psi_q(s)}$.
This allows us to invoke the perturbation bound from \cref{sec:LC-tunnel} and generalize the HWS problem to quadratic potentials.

\subsection{Unperturbed quadratic: Overlap with the binomial ansatz}\label{sub:overlap}

We now consider the Hamiltonian $H = -A_{hc} + V$ for a convex potential~$V$.
The following claim quantifies the overlap of the ground state of $H$ with an ansatz state $\ket{\varphi_m}$.
The overlap depends on the second difference of the potential, which is defined as $\Delta^2 V(k)=V(k+1)+V(k-1)-2V(k).$ Smaller $\Delta^2 V(k)$ leads to higher overlaps with the binomial state.
We prove the claim in Appendix~\ref{app:overlap_general}.

\begin{restatable}[Overlap bound]{claim}{overlap}
\label{prop:overlap}
Let $\ket{\psi_0}$ denote the ground state of $H = - A_{hc} + V$ with $V$ a convex function with bounded second difference $\Delta^2 V(k)\le C$.
Let $m$ be chosen so as to minimize $\braket{\varphi_m|H|\varphi_m}$.
Then it holds that
\begin{equation*}
|\braket{\psi_0|\varphi_m}|^2
\ge
1-
\left(1-\frac{2m}{n}\right)
\left(
(\mu-m)\left(1-\frac{\Delta V(\lfloor m\rfloor)}{\mathbb{E}[\Delta V(Y)]}\right)
+
\frac{C\,m(1-m/n)}{2\mathbb{E}[\Delta V(Y)]}
\right),
\end{equation*}
where $Y\sim\mathcal{B}(n-1,m/n)$ and $\mu = \braket{\psi_0|K|\psi_0}$.
\end{restatable}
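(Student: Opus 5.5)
The plan is to compare $H$ with the exactly solvable linear Hamiltonian of \cref{lem:linear-HW}, whose ground state is exactly $\ket{\varphi_m}$. We treat the difference $V-\alpha K$ as a perturbation and run a variational gap argument, in which the spectral gap $\Delta_\alpha$ converts an energy excess into a lower bound on the overlap.

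\emph{Step 1 (choice of $\alpha$ from stationarity).} We write $f(m)=\braket{\varphi_m|H|\varphi_m}=-2\sqrt{m(n-m)}+\mathbb{E}[V(X_m)]$ with $X_m\sim\mathcal{B}(n,m/n)$. The standard binomial derivative identity gives $\frac{d}{dm}\mathbb{E}[V(X_m)]=\mathbb{E}[V(Y+1)-V(Y)]=\mathbb{E}[\Delta V(Y)]$ with $Y\sim\mathcal{B}(n-1,m/n)$. We also have $\frac{d}{dm}\bigl(-2\sqrt{m(n-m)}\bigr)=-\frac{n-2m}{\sqrt{m(n-m)}}$. Hence at the minimizer $m$ (assumed interior, with boundary cases handled separately) we get $\mathbb{E}[\Delta V(Y)]=\alpha$, where $\alpha=\frac{n-2m}{\sqrt{m(n-m)}}$ is exactly the value in \cref{eq:alpha-m}. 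We set $H_\alpha=-A_{hc}+\alpha K$ and $W=V-\alpha K$, so that $H=H_\alpha+W$.

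\emph{Step 2 (gap argument).} Let $\lambda_\alpha$ and $\Delta_\alpha=\sqrt{4+\alpha^2}$ be as in \cref{lem:linear-HW}. Expanding $\ket{\psi_0}$ in the eigenbasis of $H_\alpha$ gives $\braket{\psi_0|H_\alpha|\psi_0}\ge\lambda_\alpha+\Delta_\alpha\bigl(1-|\braket{\psi_0|\varphi_m}|^2\bigr)$. Since $E_0\le\braket{\varphi_m|H|\varphi_m}=\lambda_\alpha+\braket{\varphi_m|W|\varphi_m}$, we obtain
\[
1-|\braket{\psi_0|\varphi_m}|^2\le\frac{\braket{\varphi_m|W|\varphi_m}-\braket{\psi_0|W|\psi_0}}{\Delta_\alpha}.
\]
An explicit computation gives $\Delta_\alpha=n/\sqrt{m(n-m)}$, so $1/\Delta_\alpha=(1-2m/n)/\alpha$. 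This is precisely the prefactor $(1-2m/n)/\mathbb{E}[\Delta V(Y)]$ in the claim.

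\emph{Step 3 (bounding the $W$-difference).} Let $\tilde V$ be the piecewise-linear interpolation of $V$. It is convex, and its slope on $[\lfloor m\rfloor,\lfloor m\rfloor+1]$ is $\Delta V(\lfloor m\rfloor)$.
\begin{itemize}
\item For the lower bound on the $\psi_0$-side we use the tangent line at $m$: for all integers $k$, $V(k)\ge\tilde V(m)+\Delta V(\lfloor m\rfloor)(k-m)$. This gives $\braket{\psi_0|W|\psi_0}\ge\tilde V(m)+\Delta V(\lfloor m\rfloor)(\mu-m)-\alpha\mu$.
\item For the upper bound on the $\varphi_m$-side we use $\Delta^2V\le C$. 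This should give $V(k)\le\tilde V(m)+\Delta V(\lfloor m\rfloor)(k-m)+\frac C2(k-m)^2$, a discrete Taylor bound. Averaging over the binomial, which has mean $m$ and variance $m(1-m/n)$, yields $\braket{\varphi_m|W|\varphi_m}\le\tilde V(m)-\alpha m+\frac C2 m(1-m/n)$.
\end{itemize}
Subtracting the two bounds gives $(\mu-m)\bigl(\alpha-\Delta V(\lfloor m\rfloor)\bigr)+\frac C2m(1-m/n)$. Dividing by $\Delta_\alpha$ and substituting $\alpha=\mathbb{E}[\Delta V(Y)]$ reproduces the claimed bound exactly.

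\emph{Main obstacle.} The delicate step is the discrete Taylor upper bound at a non-integer point $m$ with the constant $C/2$. Because $\tilde V$ has kinks at the integers, the constant must be checked carefully: for example, by summing second differences from $\lfloor m\rfloor$ outward, and by verifying the two sub-intervals adjacent to $m$ separately. If the constant comes out slightly worse, say $C$ instead of $C/2$, the claim would still hold up to that constant. A secondary issue is justifying interior stationarity of the minimizer $m$ and the restriction $m\le n/2$.
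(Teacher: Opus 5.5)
Your Steps 1 and 2, and the $\psi_0$-side bound in Step 3, are correct and coincide with the paper's argument. The paper uses the same variational comparison against $H_\alpha$, the same stationarity identity $\alpha=\mathbb{E}[\Delta V(Y)]$ with $1/\Delta_\alpha=(1-2m/n)/\alpha$, and Jensen together with the supporting line of the interpolation with slope $\Delta V(\lfloor m\rfloor)$.

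\textbf{The gap: the $\varphi_m$-side bound is false.} The pointwise inequality you propose,
\[
V(k)\le\tilde V(m)+\Delta V(\lfloor m\rfloor)(k-m)+\tfrac C2(k-m)^2 ,
\]
is not merely delicate; it fails in general. Take $V(k)=\frac C2k^2$, so $\Delta^2V\equiv C$, and let $m=j\ge1$ be an integer. At $k=j-1$ the inequality reduces to $\Delta^2V(j)\le C/2$, i.e.\ $C\le C/2$, which is false.

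More generally, write $t=\{m\}$. The point $k=\lfloor m\rfloor-1$ requires $\Delta^2V(\lfloor m\rfloor)\le\frac C2(1+t)^2$. The point $k=\lfloor m\rfloor+2$ requires $\Delta^2V(\lfloor m\rfloor+1)\le\frac C2(2-t)^2$. So for $\Delta^2V\equiv C$ the bound already fails whenever $t<\sqrt2-1$ or $t>2-\sqrt2$, provided those points lie in $\{0,\dots,n\}$. Since the stationary $m$ can have any fractional part, this failure is real.

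Your fallback with $C$ in place of $C/2$ is a valid pointwise bound. However, it doubles the last term, so it proves a strictly weaker inequality than the claim. The factor is not cosmetic. In the paper's quadratic specialization ($C=2a$), the final step would then need $2(1-2m/n)(1-m/n)\le1$, which fails for $m\ll n$.

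\textbf{The fix.} On the $\varphi_m$ side the slope of the linear term drops out after averaging, because $\mathbb{E}[X]=m$ for $X\sim\mathcal{B}(n,m/n)$. So you should choose that slope to make the quadratic majorant valid, rather than reusing $\Delta V(\lfloor m\rfloor)$.

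Let $W=V-\frac C2k^2$, which is discrete concave, and let $\bar W$ be its concave piecewise-linear interpolation. Then
\[
\bar W(m)=\tilde V(m)-\tfrac C2\bigl(m^2+t(1-t)\bigr),
\]
and the supporting line of $\bar W$ at $m$ gives, for all integers $k\in\{0,\dots,n\}$,
\[
V(k)\le\tilde V(m)+\bigl(\Delta V(\lfloor m\rfloor)+C(t-\tfrac12)\bigr)(k-m)+\tfrac C2(k-m)^2 .
\]
Averaging over the binomial yields $\mathbb{E}[V(X)]\le\tilde V(m)+\frac C2m(1-m/n)$. This is exactly the paper's binomial-energy lemma, which the paper proves via Jensen applied to $\bar W$.

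With this replacement your subtraction step goes through unchanged, since the $\varphi_m$-side bound no longer carries any slope. It then reproduces the claim with the constant $C/2$.
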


In the quadratic case $V_q(k) = a (k-k_0)^2$, we have $\Delta V_q(k) = a (2(k-k_0)+1)$, and so the lower bound becomes simpler.
We prove the below theorem in Appendix~\ref{app:overlap_quadra}, and we plot the optimal mean~$m^*(a)$ as a function of $a$ in \cref{fig:m_s}.

\begin{theorem}[Overlap with the binomial ansatz for quadratic potentials]
\label{theorem:overlap}
Let $\ket{\psi_{q,a}}$ denote the ground state of $H = -A_{hc} + V_q$ with $V_q(k) = a (k-k_0)^2$ for $k_0 \leq 0$ and $a \in \mathbb{R}_{\geq 0}$.
Then there exists $m^*(a) \in [0,n/2]$ such that
\begin{equation*}
\bigl|\braket{\psi_{q,a}|\varphi_{m^*(a)}}\bigr|^2
\ge \frac{1}{2}.
\end{equation*}
\end{theorem}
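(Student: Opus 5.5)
The plan is to specialize Claim~\ref{prop:overlap} (the overlap bound) to $V_q(k)=a(k-k_0)^2$, choose $m=m^*(a)$ as the minimizer of $\braket{\varphi_m|H|\varphi_m}$, and show that the subtracted correction term is at most $1/2$.

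\textbf{Step 1: the correction term in closed form.} For the quadratic potential we have $\Delta^2 V_q\equiv 2a$, so $C=2a$. We also have $\Delta V_q(k)=a(2(k-k_0)+1)$. With $Y\sim\mathcal{B}(n-1,m/n)$,
\[
\mathbb{E}[\Delta V_q(Y)]=a\bigl(2(n-1)m/n-2k_0+1\bigr).
\]
The factor $a$ cancels in every ratio. The curvature contribution becomes
\[
\frac{m(1-m/n)}{2(n-1)m/n-2k_0+1}\le \frac{n-m}{2(n-1)}+o(1),
\]
and it is further damped by the prefactor $(1-2m/n)$. The term $1-\Delta V(\lfloor m\rfloor)/\mathbb{E}[\Delta V(Y)]$ is $O(1/(m-k_0+1))$, since $\mathbb{E}[Y]=m(n-1)/n$ differs from $\lfloor m\rfloor$ by at most $1+m/n$.

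\textbf{Step 2: locate $m^*$ and $\mu$.} Compute
\[
\braket{\varphi_m|H|\varphi_m}=-2\sqrt{m(n-m)}+a\bigl[(m-k_0)^2+m(n-m)/n\bigr].
\]
The stationarity condition gives $(n-2m)/\sqrt{m(n-m)}=a\bigl(2(m-k_0)+1-2m/n\bigr)$. This fixes $m^*(a)\in[0,n/2]$, which decreases continuously from $n/2$ at $a=0$ to $0$ as $a\to\infty$.

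For the factor $(\mu-m)$ I need an upper bound on the true mean $\mu$ of $\psi_{q,a}$. I would use a monotone-comparison argument. On the relevant range, the quadratic potential dominates the linear potential $\alpha K$ with matching slope at $m$, namely $\alpha=a(2(m-k_0)+1)$, plus a constant. Convexity gives $V_q(k)\ge V_q(m)+\Delta V_q(m)(k-m)$. I expect this to push the ground state toward smaller $k$, so that $\mu\le m+O(1)$. If stochastic domination is not directly available, I would instead use the ultra-log-concavity of $\psi_{q,a}^2$ (Theorem~\ref{thm:ultra log-conc-conv}) together with the ratio recurrence $F(k)=(n-k)r_{k+1}+k/r_k$ to compare score sequences with those of the binomial. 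If only $\mu-m=O(\sqrt m)$ is obtained, it still suffices when multiplied by the $O(1/m)$ slope-ratio term.

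\textbf{Step 3: combine by regimes.} Multiply the two pieces by the prefactor $(1-2m/n)$ and bound the result by $1/2$.
\begin{itemize}
\item When $m$ is close to $n/2$ (small $a$), the prefactor is small, so the correction is small.
\item When $m$ is small (large $a$), the prefactor is at most $1$. The curvature term is then about $m(1-m/n)/(2m-2k_0+1)\lesssim 1/2$, and the slope term is $O(1)\cdot O(1/(m+1))$.
\end{itemize}
This pair of regimes is the main obstacle: the constants are tight near $1/2$. The product $(1-2m/n)\cdot\frac{n-m}{2(n-1)}$ must be shown to be at most $1/2$ minus the slope-term contribution uniformly in $m$. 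I would handle this by maximizing an explicit one-variable function of $x=m/n$, treating small $m$ (for example $m<1$, where $\varphi_m$ is already concentrated near $0$) separately, possibly by direct computation. If the constants fail, I would fall back to using $k_0\le 0$ to enlarge the denominator.
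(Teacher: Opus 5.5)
\textbf{Gap in Step~2: the required bound on $\mu$ is missing.} The problem comes from your starting inequality. The final form of \cref{prop:overlap} is obtained from \eqref{eq:half-proof} through the tangent bound of \cref{lemma:convex_delta}, $\bar V(\mu)-\bar V(m)\ge(\mu-m)\Delta V(\lfloor m\rfloor)$. For $V_q$ this step throws away the term $a(\mu-m)^2$. You are then left needing a priori control of the unknown mean $\mu$, and the constants leave almost no room. With $k_0=0$, your correction term is exactly
\[
\Bigl(1-\frac{2m}{n}\Bigr)\frac{2(\mu-m)(\{m\}-m/n)+m(1-m/n)}{1-2m/n+2m}.
\]
As $n\to\infty$ with $m$ fixed (or more generally $m^2/n\to0$), this is at most $1/2$ if and only if $(\mu-m)\{m\}\le\frac14$.

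So for $\{m\}$ close to $1$, including $m\uparrow 1$, you need $\mu\le m+\frac14$. That is a sharp quantitative bound on the mean of a ground state with no closed form. Your fallbacks do not give it:
\begin{itemize}
\item ``$\mu\le m+O(1)$'' leaves the constant unspecified, and the constant is exactly what matters.
\item ``$\mu-m=O(\sqrt m)$'' yields a contribution of order $\{m\}/\sqrt m$. After the curvature term has used up $\frac12-\frac{1}{2(1+2m)}$, the remaining budget is only about $1/(4m)$.
\end{itemize}
Your proposed mechanism for bounding $\mu$ also does not work. Pointwise domination of potentials does not stochastically order ground states. Moreover, $V_q-\alpha K$ with $\alpha=a(2(m-k_0)+1)$ has discrete derivative $2a(k-m)$. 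It is therefore a well centred at $m$ that confines the state; it does not push it toward smaller $k$. The ULC/score-comparison alternative is only sketched.

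\textbf{The missing idea: do not linearize.} This is what the paper does. Start from \eqref{eq:half-proof} and use $1/\sqrt{4+\alpha^2}=(1-2m/n)/\alpha$ with $\alpha=\mathbb{E}[\Delta V_q(Y)]=a(1-2m/n+2(m-k_0))$. The correction then equals $(1-2m/n)f$, where
\[
f=\mu-m-\frac{\bar V_q(\mu)-\bar V_q(m)-am(1-m/n)}{\alpha}.
\]
By \cref{lem:V-bar}, $\bar V_q(x)=a(x-k_0)^2+a\{x\}(1-\{x\})$. Hence $\bar V_q(\mu)-\bar V_q(m)\ge a\bigl[(\mu-k_0)^2-(m-k_0)^2-\frac14\bigr]$, and the numerator of $f$ becomes $(1-2m/n)(\mu-m)-(\mu-m)^2+\frac14+m(1-m/n)$.

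The retained $-(\mu-m)^2$ absorbs the linear term, because $qy-y^2\le q^2/4\le\frac14$ for $q\in[0,1]$. So $\mu$ drops out of the bound entirely. Bound the numerator by $\frac12+m(1-m/n)$, then use $k_0\le0$ to shrink the denominator. This gives
\[
\Bigl(1-\frac{2m}{n}\Bigr)f\le\frac{(1-2m/n)(1/2+m(1-m/n))}{1-2m/n+2m}\le\frac12,
\]
which is equivalent to $(1-2m/n)(1-m/n)\le1$. No information about $\mu$ is needed.
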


\begin{figure}
    \centering
    \includegraphics[width=0.5\linewidth]{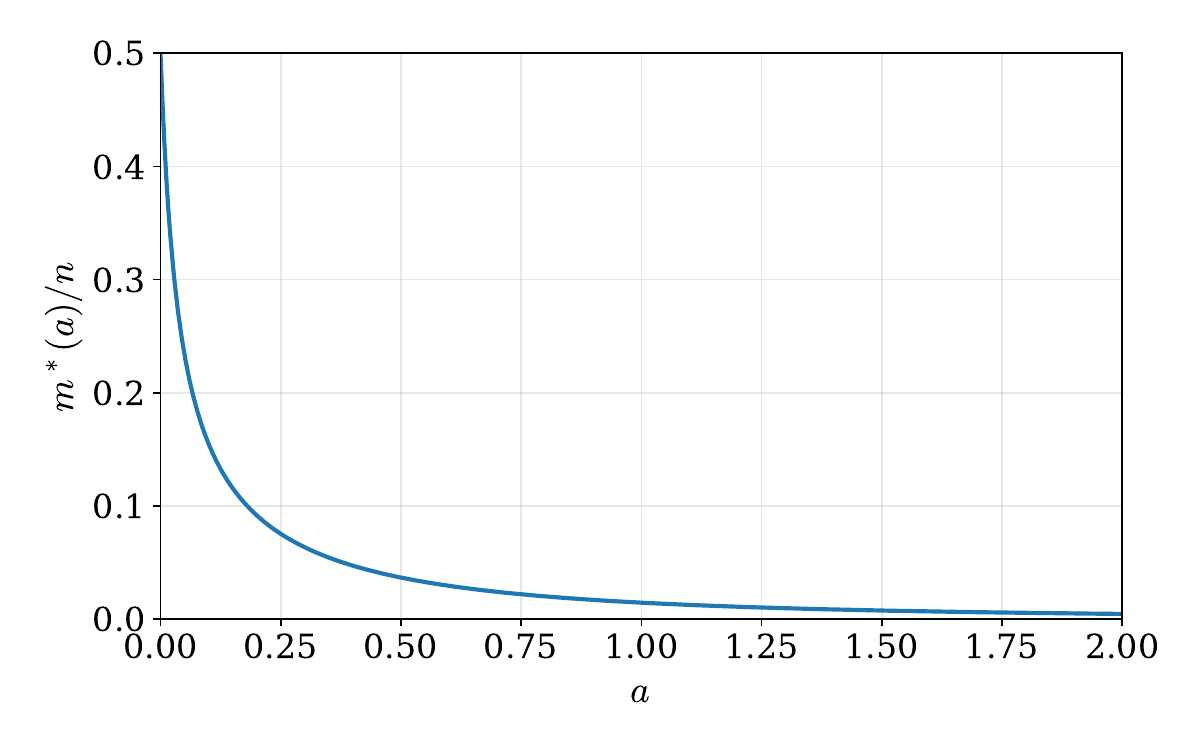}
    \caption{Plot of the function $m^*(a)/n$ obtained for $n=40$ and $k_0=-3$. $m^*(a)$ is obtained by solving the equation $a=\frac{n-2m^*}{\sqrt{m^*(n-m^*)}} \cdot\frac{1}{1 - 2 k_0 + 2(n-1)m^*/n}$. As expected, $m^*$ is located at $n/2$ for $a=0$ and it decreases to $0$ as $a$ increases. }
    \label{fig:m_s}
\end{figure}

Now denote by $\mu_a$ and $\sigma_a^2$ the mean and variance associated to $\ket{\psi_{q,a}}$, which are given by
\begin{equation*}
\mu_a
= \braket{\psi_{q,a}|K|\psi_{q,a}}, \qquad
\sigma_a^2
= \braket{\psi_{q,a}|(K-\mu)^2|\psi_{q,a}}.
\end{equation*}
Using the fact that both $\ket{\psi_{q,a}}$ and $\ket{\varphi_m}$ are log-concave, we can relate the associated mean and variance using \cref{lem:log-conc-overlap}.
Applying this lemma to $\{p_k = |\braket{k|\varphi_{m^*(a)}}|^2\}$ and $\{q_k = |\braket{k|\psi_{q,a}}|^2\}$, we obtain
\begin{align} \label{eq:mean-variance}
\begin{split}
\mu_a &= m^*(a) + O\big(\sqrt{m^*(a)}\big) + O(1),\\
\sigma_a &= \Theta\big(\sqrt{m^*(a)}\big) + O(1).
\end{split}
\end{align}
Thus, although the quadratic ground state does not admit a closed form, the 
binomial ansatz provides a tractable approximation throughout the annealing path.

\subsection{Unperturbed quadratic: Improved spectral gap bound}
\label{sec:impr_gap}

We now turn back to the interpolating, adiabatic Hamiltonian with quadratic potential,
\[
H_q(s)
= -(1-s)A_{hc} + s V_q.
\]
From \cref{thm:lc-gap} (or \cite{Jarret_2014}), we get a bound $\Delta(H_q(s)) \in \Omega(1-s)$ on the spectral gap of the unperturbed, quadratic Hamiltonian $H_q(s)$.
Unfortunately, that bound becomes useless around the end of the annealing path, when $s \to 1$.
Luckily, a finer reading shows that the bound of~\cite{Jarret_2014} can actually be extended until the end of the annealing path.

\begin{lemma} \label{corr:convex_spectral_gap}
Consider $H(s) = -(1-s)A_{hc} + sV$ with $V$ convex, and define $\beta = \min_{0\leq i<j\leq n} \lvert \frac{V(j)-V(i)}{j-i}\rvert$.
Then
\begin{equation*}
\Delta(H(s))\ge\sqrt{s^2\beta^2+4(1-s)^2}.
\end{equation*}
In particular, if there exists $c > 0 $ such that $\beta \geq c$, then
\begin{equation*}
\Delta(H(s)) \ge \frac{2c}{\sqrt{c^2+4}}
\qquad\text{uniformly for all }s\in[0,1].
\end{equation*}
\end{lemma}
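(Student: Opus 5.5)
We will follow Jarret and Jordan's discrete ``fundamental gap'' argument \cite{Jarret_2014}, but keep track of the factor $s$ in front of $V$ instead of dropping it. This is what makes the bound survive at $s\to1$.

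\textbf{The ``in particular'' part.} This is elementary. The function $g(s)=s^2c^2+4(1-s)^2$ is a convex quadratic in $s$. It is minimized at $s=4/(c^2+4)$, where $g=4c^2/(c^2+4)$. Monotonicity of $\beta\mapsto\sqrt{s^2\beta^2+4(1-s)^2}$ then gives the uniform bound $2c/\sqrt{c^2+4}$. All the work is in the first inequality.

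\textbf{Benchmark.} For $0<s<1$ write $H(s)=(1-s)\bigl(-A_{hc}+\tfrac{s}{1-s}V\bigr)$. For the linear potential $V=\alpha K$, \cref{lem:linear-HW} gives the gap of $H(s)$ exactly as $(1-s)\sqrt{4+s^2\alpha^2/(1-s)^2}=\sqrt{s^2\alpha^2+4(1-s)^2}$. So the claim says that a convex $V$ whose chord slopes are all at least $\beta$ in absolute value has gap at least that of the linear potential with slope $\beta$. The endpoint $s=1$ is trivial: $H(1)=V$ is diagonal, its gap is $\min_{j\ne i_0}|V(j)-V(i_0)|\ge\beta$, and this equals the claimed bound $\sqrt{\beta^2}=\beta$.

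\textbf{Ground-state transformation and dual operator.} Let $\psi_0,\psi_1$ be the two lowest eigenvectors, with $\psi_0>0$, and set $f=\psi_1/\psi_0$. Under the ground-state transformation $L$ of \cref{sub:spectral_gap_lc}, $f$ is an eigenvector of $L$ with eigenvalue $-\Delta$. Since $L$ is the generator of a birth--death chain (Sturm--Liouville oscillation), $f$ has exactly one sign change and can be taken strictly increasing. Define the positive flows
\[
w_k=(1-s)\,a_k\,\psi_{0,k-1}\psi_{0,k}\,(f_k-f_{k-1}),\qquad 1\le k\le n.
\]
Differencing the eigen-equation for $f$ gives a closed tridiagonal equation for $w$: a stoquastic ``dual'' birth--death operator $\tilde H$ on $\{1,\dots,n\}$ with positive eigenvector $w$ and eigenvalue $\Delta-(\text{shift})$. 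The diagonal of $\tilde H$ involves the differences $s\,(V(k)-V(k-1))$ and the ratios $\psi_{0,k}/\psi_{0,k-1}$. By Perron--Frobenius, $\Delta$ is then the bottom eigenvalue of $\tilde H$ up to that shift.

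\textbf{Comparison with the linear case.} Here we use \cref{thm:ultra log-conc-conv}: the ground state is ultra-log-concave, so $\psi_{0,k}/\psi_{0,k-1}$ decays at least as fast as the binomial ratio. Convexity of $V$ makes $V(k)-V(k-1)$ non-decreasing. Together with $|V(k)-V(k-1)|\ge\beta$, this should show that $\tilde H$ dominates, entrywise on the diagonal, the dual operator obtained for the linear potential $\beta K$ (or $-\beta K$, depending on the side of the minimizer). For the linear case the dual equation is solvable explicitly, with bottom eigenvalue giving $\sqrt{s^2\beta^2+4(1-s)^2}$. The comparison then follows from monotonicity of the bottom eigenvalue in the diagonal, via the variational principle applied to $\tilde H$.

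\textbf{Main obstacle.} The hardest step is the precise domination of the dual diagonal. The ratio terms depend on the unknown $\psi_0$, so ultra-log-concavity has to be matched exactly against the binomial ratios of the comparison state. If the termwise comparison fails, I would first try a Rayleigh-quotient argument for $\tilde H$ with test weights built from the binomial ground state $\ket{\varphi_m}$. I would split into the two cases of a monotone $V$ versus a $V$ whose minimizer is interior; in the interior case $\beta$ is the smallest one-sided slope and convexity controls both sides.
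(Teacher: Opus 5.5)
Your rescaling $H(s)=(1-s)\bigl(-A_{hc}+\lambda V\bigr)$ with $\lambda=s/(1-s)$, the linear benchmark $\sqrt{s^2\alpha^2+4(1-s)^2}$ from \cref{lem:linear-HW}, the check at $s=1$, and the minimization giving $2c/\sqrt{c^2+4}$ are all correct and agree with the paper. The gap is that the first inequality rests entirely on the comparison
\[
\Delta\bigl(-A_{hc}+\lambda V\bigr)\ \ge\ \Delta\bigl(-A_{hc}+\lambda\beta K\bigr)=\sqrt{\lambda^2\beta^2+4}\qquad(V\ \text{convex}),
\]
and you never prove it. The decisive step is only asserted (``this should show''), and you yourself expect it may fail. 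The paper does not re-derive this inequality; it takes it from Lemma~10 of \cite{Jarret_2014}, after which the proof is two lines. Citing that lemma would make your argument complete and identical to the paper's. Without it, nothing in your sketch establishes the bound.

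Nor is your dual-operator route a routine check. Carrying out your differencing with $\rho_k=\psi_{0,k}/\psi_{0,k-1}$, the flows satisfy
\[
\Delta\,w_k=(1-s)a_k\bigl(\rho_k+\rho_k^{-1}\bigr)w_k-(1-s)a_k\rho_k^{-1}w_{k+1}-(1-s)a_k\rho_k\,w_{k-1}.
\]
So $V$ enters only through the unknown $\rho_k$, in the couplings as well as on the diagonal, and diagonal domination alone would not let you invoke eigenvalue monotonicity.

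Ultra-log-concavity does control the couplings. It says exactly that the similarity-invariant product $(1-s)^2a_ka_{k+1}\rho_{k+1}/\rho_k$ is at most its binomial value $(1-s)^2k(n-k)$, which is the favourable direction. The diagonal is the problem. With $\tau_k=\rho_k\sqrt{k/(n-k+1)}$, the diagonal equals $(1-s)\bigl((n-k+1)\tau_k+k/\tau_k\bigr)$. You would need it to dominate $(1-s)\bigl((n-k+1)t+k/t\bigr)$ for every $k$, where $t=\sqrt{m/(n-m)}$ is the parameter of the benchmark binomial at slope $\lambda\beta$. Ultra-log-concavity only says $\tau_k$ is non-increasing and cannot locate it relative to $t$. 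Every $\ket{\varphi_{m'}}$ is ultra-log-concave, yet for $m<m'\le n/2$ the required inequality already fails at $k=n$. This is exactly where convexity and the slope bound $\beta$ must enter quantitatively, and you give no argument for it.

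Your fallback also points the wrong way. A Rayleigh quotient with binomial test weights gives an upper bound on the bottom eigenvalue of $\tilde H$, hence on $\Delta$. A lower bound would need something like a Collatz--Wielandt estimate $\min_k(\tilde H u)_k/u_k\le\Delta$ for a positive $u$, together with the entrywise verification that is currently missing.
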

\begin{proof}[Proof]
Rewrite
\begin{equation*}
H(s)=(1-s)\widetilde{H}_\lambda,
\qquad
\widetilde{H}_\lambda=-A+\lambda V,
\qquad
\lambda=\frac{s}{1-s},
\end{equation*}
so that $\Delta(H(s))=(1-s)\Delta(\widetilde{H}_\lambda).$
By Lemma~10 of~\cite{Jarret_2014}, for convex $V$ it holds that
\begin{equation*}
\Delta(\widetilde{H}_\lambda)
\ge \Delta(-A+\lambda\beta K)
= \sqrt{(\lambda\beta)^2+4},
\end{equation*}
where $\beta = \min_{i<j} \lvert \frac{V(j)-V(i)}{j-i}\rvert$ and the equality is from \cref{lem:linear-HW}.
This proves that $\Delta(H(s)) \ge (1-s)\sqrt{\left(\frac{s}{1-s}\beta\right)^2+4} = \sqrt{s^2\beta^2+4(1-s)^2}$.

We have $\Delta(H(s))\ge \sqrt{c^2s^2+4(1-s)^2}$.
This is minimized for $s=\frac{4}{4+c^2}$ which then gives $\Delta(H(s)) \geq \frac{2c}{\sqrt{c^2+4}}$ uniformly for $s \in [0,1]$.
\end{proof}

In particular, for $V$ with unique minimizer at $0$, we have $\beta=V(1)-V(0) >0 $ which satisfies by definition this lemma. For the quadratic potential $V_q(k) = a(k-k_0)^2$ that we consider, we have that $\Delta V_q(k)\ge a$.
Assuming $a \in \Omega(1)$, this yields a uniform bound $\Delta_q(s) \in \Omega(1)$ on the spectral gap of the unperturbed quadratic Hamiltonian $H_q(s)$.

\subsection{Perturbed quadratic and AQO}
\label{sub:spike_quadra}

We will now combine the bound on the spectral gap of the quadratic Hamiltonian $H_q(s)$, as well as the estimates on the mean and variance of its ground state, with the perturbative tunneling argument in \cref{sec:LC-tunnel}.
This will allow us to establish our analysis of the quadratic HWS problem.

\subsubsection{Spectral gap of quadratic HWS} \label{sub:spectral_gap_quadratic}

We now combine the general spike bound with the overlap analysis. The key point 
is that the variance of the true ground state is comparable to that of the binomial ansatz.

\begin{theorem}[Quadratic potential with a spike]\label{theorem:gap_quadra}
Consider the quadratic Hamming weight with a spike Hamiltonian
\[
H_{qs}(s)
= - (1-s) A_{hc} + s V_{qs}
= - (1-s) A_{hc} + s (V_q + h)
\]
for $s \in [0,1]$, where the spike $h$ is supported on $[\ell,u-1] \subseteq [0,n]$ and satisfies an upper bound $h(k) \leq \bar h$.
If
\[
\bar{h}\,\frac{u-\ell}{\sqrt{\ell}} \in O(1)
\]
then the spectral gap of $H_{qs}(s)$ satisfies $\Delta_{qs}(s) \in \Omega(1)$.
\end{theorem}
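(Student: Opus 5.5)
The plan is to apply \cref{thm:spike_general} at each $s\in[0,1]$ to the unperturbed Hamiltonian $H_q(s)$, using the perturbation $V'=s\,h$ with bound $s\bar h\le\bar h$. Three inputs are needed: a uniform gap $\Delta_q(s)\in\Omega(1)$, log-concavity of the ground state, and estimates of its mean and variance.

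\textbf{Gap and log-concavity.} Since $V_q$ is convex, \cref{thm:ultra log-conc-conv} shows that the squared ground state of $H_q(s)$ is ultra-log-concave, and hence log-concave; note that $(1-s)A_{hc}$ only rescales the $a_k$, which does not affect the argument. \cref{corr:convex_spectral_gap} gives $\Delta_q(s)\ge 2c/\sqrt{c^2+4}$ with $c=\beta=V_q(1)-V_q(0)=a(1-2k_0)\ge a$, so $\Delta_q(s)\in\Omega(1)$ for $a\in\Omega(1)$, uniformly in $s$.

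\textbf{Mean and variance.} Write $H_q(s)=(1-s)(-A_{hc}+\lambda V_q)$ with $\lambda=s/(1-s)$. For $s<1$ the ground state is that of the quadratic potential with parameter $a\lambda$, and at $s=1$ it is $\ket{0}$, where the spike condition is checked directly. \cref{theorem:overlap} and \cref{eq:mean-variance} then give
\[
\mu(s)=m+O(\sqrt m)+O(1),\qquad \sigma(s)=\Theta(\sqrt m)+O(1),
\]
with $m=m^*(a\lambda)$.

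\textbf{Case analysis on $m$.} Let $C$ be a large constant.
\begin{itemize}
\item If $m\ge \ell/C$, use the first branch of \cref{thm:spike_general}. When $\ell$ is large, $\sigma\gtrsim\sqrt{\ell}$, so the bound is $\bar h\cdot 2(u-\ell)/\sqrt{1+4\sigma^2}\lesssim \bar h(u-\ell)/\sqrt{\ell}$. This is at most $\Delta_q/2$ once the $O(1)$ constant in the hypothesis is chosen small enough relative to the uniform gap.
\item If $m<\ell/C$, use the second branch. Here $\mu\le \ell/C+O(\sqrt{\ell})+O(1)$, so $\alpha=\ell-\mu\ge \ell/2$ for $\ell$ large, while $\sigma=O(\sqrt{\ell/C})+O(1)$. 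The factor $2\exp(-\alpha/(2(2\sigma+1)))$ is then at most $\exp(-\Omega(\sqrt{C\ell}))$.
\end{itemize}

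\textbf{Main obstacle.} The second case needs $\bar h\exp(-\Omega(\sqrt{\ell}))\le \Delta_q/2$, and the hypothesis only gives $\bar h\lesssim \sqrt{\ell}/(u-\ell)\le\sqrt\ell$. Since $\sqrt{\ell}\,e^{-c\sqrt{\ell}}$ is bounded and small for large $\ell$ (or for $C$ large), this suffices. Small $\ell=O(1)$ is handled by the first branch: there the hypothesis forces $\bar h(u-\ell)=O(1)$ with a small constant, and $p_{k^*}\le1$. The delicate points are the additive $O(1)$ terms in \cref{eq:mean-variance} and keeping the constants uniform in $s$ as $\lambda\to\infty$, i.e.\ $m\to0$. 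I would handle both by choosing $C$ and the hidden constant in the hypothesis sufficiently small relative to the gap constant. The conclusion is $\Delta_{qs}(s)\ge\Delta_q(s)/2\in\Omega(1)$.
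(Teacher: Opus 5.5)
Your proposal is correct and follows essentially the same route as the paper. Both arguments take the uniform gap from \cref{corr:convex_spectral_gap}, take the mean and variance of the unperturbed ground state from \cref{theorem:overlap} and \cref{eq:mean-variance}, and then split into two cases on $m^*$ that feed the two branches of \cref{thm:spike_general}; the paper uses the threshold $\ell/2$ where you use $\ell/C$.

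You also make explicit several points the paper leaves implicit: log-concavity via \cref{thm:ultra log-conc-conv}, the endpoint $s=1$, small $\ell$, and the requirement that the hidden constant in the hypothesis be small relative to the gap constant.
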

\begin{proof}
Let $\ket{\psi_{q}(s)}$ denote the ground state and  $\Delta_q(s)$ the spectral gap of the unperturbed Hamiltonian $H_q(s) = -(1-s) A_{hc} + s V_q$.
By \cref{corr:convex_spectral_gap} we know that $\Delta_q(s) \in \Omega(1)$.
By Theorem~\ref{thm:spike_general} we get that the perturbed gap $\Delta_{qs}(s) \in \Omega(1)$ provided that
\begin{equation} \label{eq:bound}
\min\left\{ \frac{2(u-\ell)}{\sqrt{1+4\sigma^2(s)}}, 2\exp\left(-\frac{\alpha(s)}{2(2\sigma(s)+1)}\right) \right\}
\leq 1/\bar{h},
\end{equation}
where $\sigma^2(s)$ is the variance of the unperturbed ground state $\ket{\psi_{q}(s)}$ and $\alpha(s) = \max\{\ell-\mu(s),\mu(s)-u\}$ is the distance of the mean $\mu(s)$ of $\ket{\psi_{q}(s)}$ from the support $[\ell,u-1]$ of the spike~$h$.
By applying \cref{theorem:overlap} and \cref{eq:mean-variance} to the Hamiltonian $-A_{hc} + \frac{s}{1-s} V_q$, there exists $m^*(s)$ such that $\mu(s) = m^*(s) + O(\sqrt{m^*(s)}) + O(1)$ and $\sigma(s) = \Theta(\sqrt{m^*(s)}) + O(1)$.
Our proof distinguishes two cases.

First, assume that $m^*(s) \geq \ell/2$.
Then $\sigma(s) \in \Omega(\sqrt{\ell})$ and
\begin{equation*}
\frac{2(u-\ell)}{\sqrt{1+4\sigma^2(s)}}
= \frac{u-\ell}{\sqrt{\ell}}\cdot\frac{2\sqrt{\ell}}{\sqrt{1+4\sigma^2(s)}}
\in O(1/\bar{h})
\end{equation*}
by our assumption that $\bar{h}\,\frac{u-\ell}{\sqrt{\ell}} \in O(1)$.

Then, assume that $m^*(s) < \ell/2$.
Then $\alpha(s) \in \Omega(\ell)$ and $\sigma(s) \in O(\sqrt{\ell}) + O(1)$.
In this case, we get
\[
\exp\left(-\frac{\alpha(s)}{2(2\sigma(s)+1)}\right)
\leq \exp\left(-\Omega\big(\sqrt{\ell}\big)\right).
\]
It remains to note that $\exp\left(-\Omega\big(\sqrt{\ell}\big)\right) \in O(1/\sqrt{\ell})$, and to use that $O(1/\sqrt{\ell}) \in O(1/\bar{h})$ by our assumption.
\end{proof}

The spike condition $\bar{h}\,\frac{u-\ell}{\sqrt{\ell}} \in O(1)$ is identical to the condition derived by Reichardt for the HWS problem~\cite{reichardt2004quantum}.
When for instance $u,\ell \in \Omega(n)$, the condition admits a spike of height and width~$\Omega(n^{1/4})$.

\subsection{Runtime for quadratic HWS} \label{sec:aqo_quadratic_hws}

Finalizing our analysis, we combine our spectral gap bound with the AQO algorithm.
By \cref{sec:AQO}, AQO approximately follows the ground state $\ket{\psi_{qs}(s)}$ of $H_{qs}(s)$ and prepares the target state $\ket{\psi_{qs}(1)} = \ket{0}$ in time 
\begin{equation*}
O\!\left(\frac{(n+\max_k|V(k)|)^2}{(\min_s \Delta(s))^3}\right).
\end{equation*}
This yields the following corollary.

\begin{corollary}[Informal version of Theorem~\ref{theorem:gap_quadra}]
Let $V_q(k)=a(k-k_0)^2$ with $k_0 \leq 0$ and consider the perturbed Hamiltonian 
$H_{qs}(s)=-(1-s)A_{hc}+s(V_q+h)$, where the spike $h$ is supported on $[\ell,u-1] \subseteq [0,n]$ and satisfies $h(k) \leq \bar h$.
If $\bar{h}\,\frac{u-\ell}{\sqrt{\ell}}=O(1)$, then AQO prepares an approximation of $\ket{\psi_{qs}(1)} = \ket{0}$ from $\ket{\psi_{qs}(0)}$, the binomial state with mean at $n/2$, in time~$O(n^4)$.
\end{corollary}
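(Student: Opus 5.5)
The plan is to combine the uniform gap bound of Theorem~\ref{theorem:gap_quadra} with the adiabatic theorem (\cref{adiabatic_theorem}) for linear interpolations, as specialized in \cref{sec:AQO}. I would carry out three steps in order.

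\textbf{Step 1: the gap and the path.} By Theorem~\ref{theorem:gap_quadra}, under the hypothesis $\bar h\,(u-\ell)/\sqrt{\ell}\in O(1)$ (and $a\in\Omega(1)$, as used there via \cref{corr:convex_spectral_gap}), we have $\Delta_{\min}=\min_s\Delta_{qs}(s)\in\Omega(1)$. Since $H_{qs}(s)$ is stoquastic with positive off-diagonal couplings $a_k$ on the path, Perron--Frobenius gives a non-degenerate ground state for every $s<1$. At $s=1$ the operator is diagonal. Its ground state is $\ket{0}$ provided $V_q+h$ is uniquely minimized at $k=0$. This holds because $V_q$ is strictly increasing on $[0,n]$ for $k_0\le 0$, $h\ge 0$, and we may assume $h(0)=0$, i.e.\ $\ell\ge 1$, as is implicit in the condition involving $1/\sqrt{\ell}$. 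So the ground energy is non-degenerate and continuous along the whole path, and the hypotheses of \cref{adiabatic_theorem} hold.

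\textbf{Step 2: norm bounds and runtime.} Next I would bound $h_{\max}$. We have $\|A_{hc}\|=n$, since its top eigenvalue is that of $\sum_i X_i$. Also $\|V_{qs}\|\le \max_k\big(a(k-k_0)^2\big)+\bar h$. For $a=O(1)$ and $|k_0|=O(n)$ this is $O(n^2)$, and $\bar h$ is at most polynomial; in the interesting regime $\bar h=O(n^{1/4})$. Hence $\|H\|,\|\dot H\|\le \|A_{hc}\|+\|V_{qs}\|=O(n^2)$. Plugging into $\tau=O\big((\|H_0\|+\|H_1\|)^2/\Delta_{\min}^3\big)$ with constant $\varepsilon$ gives $\tau=O(n^4)$.

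\textbf{Step 3: initial state.} Finally, $\ket{\psi_{qs}(0)}=\ket{+}^{\otimes n}$, which in the Hamming weight basis is exactly $\ket{\varphi_{n/2}}$ (\cref{def:psi_m}), the binomial state with mean $n/2$.

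The only delicate point is the implicit parameter regime: the $O(n^4)$ runtime requires $a$ and $|k_0|/n$ to be bounded so that $\|V_q\|=O(n^2)$, and $a\in\Omega(1)$ so that the gap is constant. I would state these assumptions explicitly, since the corollary is informal. Everything else is a direct substitution into results already established.
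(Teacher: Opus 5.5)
Your proposal is correct and follows the paper's route: the $\Omega(1)$ gap from Theorem~\ref{theorem:gap_quadra}, together with $\|H_0\|+\|H_1\| = n+\max_k|V_{qs}(k)| = O(n^2)$, is plugged into the adiabatic runtime bound. Your extra checks make explicit what the paper leaves implicit: non-degeneracy along the path, $\ket{+}^{\otimes n}=\ket{\varphi_{n/2}}$, and the regime $a\in\Omega(1)$, $a(n-k_0)^2=O(n^2)$. One small tightening: you need no separate assumption on $\bar h$ (``at most polynomial'' would not by itself give $O(n^4)$), because the hypothesis already forces $\bar h = O\bigl(\sqrt{\ell}/(u-\ell)\bigr) = O(\sqrt{n})$.
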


We note that the setup is symmetric with respect to shifting the minimizer $\ket{0}$ to some other, hidden string $\ket{z^*}$, $z^* \in \{0,1\}^n$.
In that case, the potential $V$ would satisfy the symmetry condition $V(x) = V(|x \oplus z^*|)$ and the same analysis goes through.

\section{Summary, related work and open questions}

Our work establishes discrete log-concavity as a critical tool for analyzing discrete optimization with the quantum adiabatic algorithm.
We proved log-concavity for a large family of potentials (including and beyond convex potentials), and used it to establish new spectral gap bounds.
As a concrete application we proved that AQO can efficiently solve the quadratic HWS problem, yielding a direct extension to the results by Reichardt~\cite{reichardt2004quantum} and Farhi, Goldstone and Gutmann \cite{farhi2002quantum} on the (linear) HWS problem.

\subsection{Related work}

In this section we compare our results to some related works.

\subsubsection{Spectral gap analysis of spike Hamiltonians}

There exist more refined analyses of the linear HWS Hamiltonian.
For spike position $u,\ell \approx n/4$, spike width $u-\ell = n^\alpha$ and spike height $h = n^\beta$, it was shown in \cite{reichardt2004quantum} that $\alpha+\beta < 1/2$ implies a constant gap for linear potentials, and our result extends this bound to quadratic potentials.
In later works, Brady and van Dam \cite{brady2016spectral} (improving on Kong and Crosson \cite{kong2015performancequantumadiabaticalgorithm}) showed that if $\alpha + \beta > 1/2$ but $2\alpha + \beta < 1$, then the gap scales as $n^{1/2-\alpha-\beta}$, while for $2\alpha+\beta>1$, the gap becomes exponentially small.
It would be interesting to extend these results to other potentials, as is suggested in \cite{brady2016spectral}, and our result seems to be the first one in this direction.

\subsubsection{Classical benchmark algorithms}

Our polynomial runtime bound for the quadratic HWS problem should be held against comparable classical algorithms (which, similarly to AQO, do not exploit any particular structure of the problem).
To this end, we note that a typical approach based on classical Markov chains and simulated annealing would take exponential time to solve this problem.
This can be demonstrated in an identical fashion to the arguments in \cite{farhi2002quantum,crosson2016simulated} for the linear HWS problem.
Let $h$ be the height and $w = u - \ell$ the width of the spike, and assume $h,w \geq n^c$ for $c \in \Omega(1)$.
If the Markov chain flips less than $w$ bits in its proposal, then it can only cross the spike by accepting a move that jumps into the spike, but this will be rejected with probability $1-O(\exp(-n^c))$.
Conversely, if the Markov chain flips more than $w$ bits, then the probability of a proposal ever jumping to the minimizer is $O(\exp(-n^c))$.

Notably, it was shown by Crosson and Harrow \cite{crosson2016simulated} and Bergamaschi \cite{bergamaschi2020simulated} that a more advanced classical Markov chain algorithm called simulated quantum annealing can solve the (linear) HWS problem under exactly the same assumptions on the spike perturbations.
This algorithm bypasses the former argument by using non-local ``wordline updates''.
It seems reasonable to expect that this algorithm would also solve the quadratic HWS problem, and so we only claim an advantage against the simplest simulated annealing algorithms.

\subsection{Open questions}

Finally, the results of this work suggest many directions for future research.

\paragraph{Sufficient and necessary condition for log-concavity.}
The $\gamma-$RM condition is a sufficient condition.
We leave the question open of whether it is also a necessary condition.
We do prove that at least two important limiting cases of (ultra-)log-concavity (the uniform and the binomial distribution) satisfy the $\gamma-$RM condition with equality.

\paragraph{Can we extend the quadratic HWS problem to general convex potentials?}
We specialized to quadratic potentials with 
minimizer at $0$ because we can prove a strong overlap bound between the corresponding ground states and our binomial ansatz states.
A natural question is whether this approach extends to other convex potentials.
First, note that if the minimizer is not at $0$ then the ansatz is not well suited.
We show this in \cref{fig:overlap} where we plot the overlap with the best binomial ansatz for a shifted quadratic potential $V(k)=(k-k_*)^2$ with minimizer at $k_*=n/3$. 
More unexpectedly, the overlap can decrease even if the minimizer is at $0$.
We show this in the figure for a piecewise-linear potential.

\begin{figure}[h]
    \centering
    \includegraphics[width=0.7\linewidth]{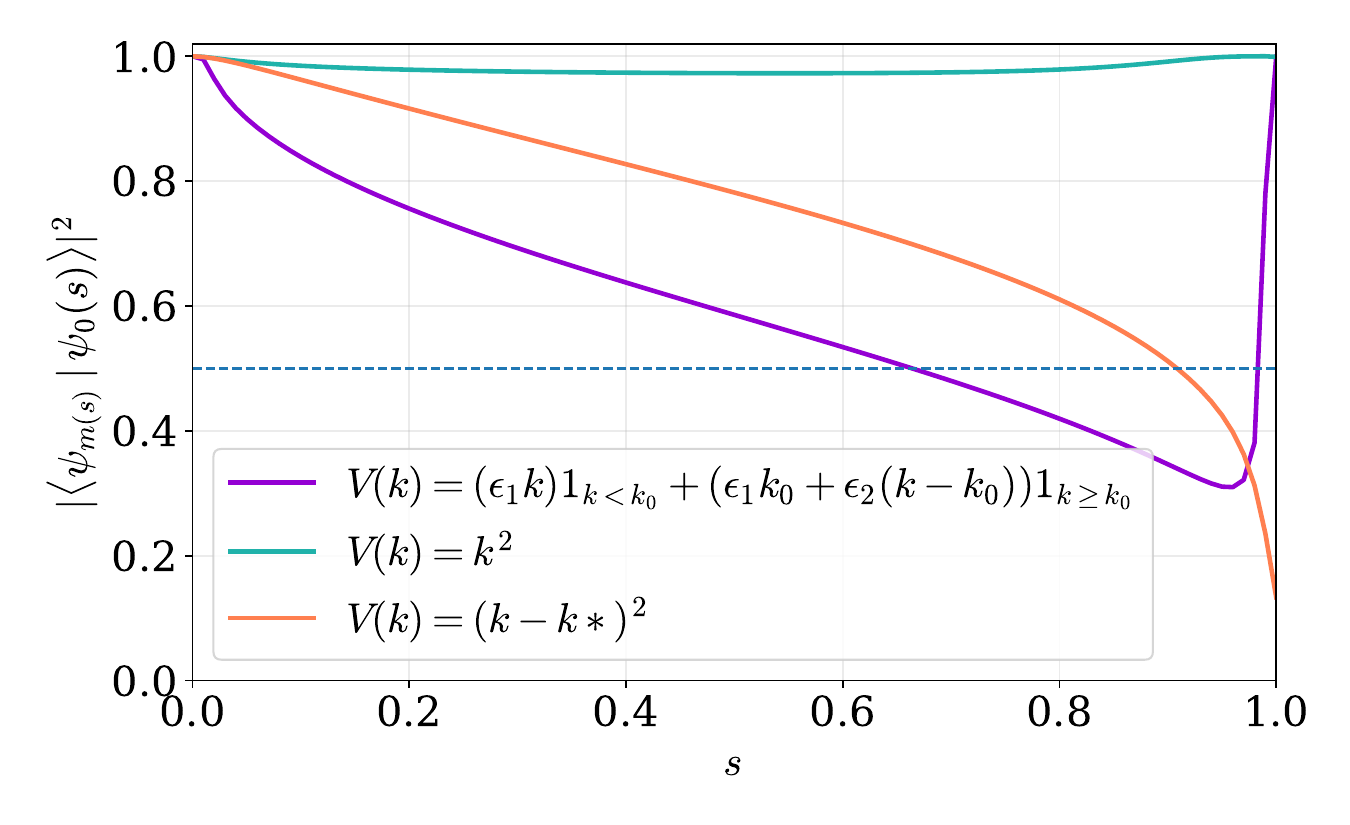}
    \caption{Overlap between the optimal binomial ansatz state $\ket{\varphi_m}$ and the ground state $\ket{\psi_0(s)}$ of $H(s)=-(1-s)A+sV$ for $n=40$ and different potentials: the quadratic potential $V(k)=k^2$ (used in Section~\ref{sec:quadra}), a shifted quadratic potential $V(k)=(k-k_*)^2$ with minimizer at $k_*=n/3$, and a piecewise-linear potential $V(k)=\epsilon_1 k\,\mathbf{1}_{k<k_0}+(\epsilon_1 k_0+\epsilon_2(k-k_0)) \,\mathbf{1}_{k\ge k_0}$, with $n=40$, $\epsilon_1=1/n$, $\epsilon_2=2n$, and $k_0=n/4$.}
    \label{fig:overlap}
\end{figure}

\paragraph{Can this framework be applied to study tunneling for local Hamiltonians?}
A key point is that a localized spike perturbation corresponds to a highly non-local Hamiltonian (coupling many qubits together).
It would be interesting to extend our discussion to local target Hamiltonians.
E.g., the Curie-Weiss model considers target Hamiltonian $H_1=-\sum_{i,j \in [n]} Z_i Z_j$.
As discussed for instance in \cite{Bapst_2012}, the spectral gap of $H(s) = -(1-s) \sum_i X_i + s H_1$ as a function of $s$ undergoes a phase transition from constant to exponentially small in $n$.
Such a phase transition makes the Curie-Weiss model a natural benchmark for our techniques.
We leave it as an open question whether the log-concavity framework developed here can be adapted to handle such local models. 

\paragraph{Can this framework be extended to non-permutation-symmetric 
potentials?}
A key ingredient of our analysis is the reduction from the hypercube to the $(n+1)$-dimensional Hamming weight subspace, which allows us to work on a weighted path graph.
Extending our analysis to more general potentials on the hypercube is a clear open question.
It raises interesting points such as what the correct notions of convexity and log-concavity are on the hypercube.
We expect this to be non-trivial.
E.g., it was shown in~\cite{jarret2015} that single-basin potentials on general graphs, a condition slightly weaker than convexity, can already yield exponentially small spectral gaps.
This contrasts with the 1-dimensional case where monotonicity of the potential on a uniform path was enough (Section \ref{sec:LC-hill}).
Additionally, multivariate log-concavity is less trivial to define in the discrete case than in the continuous case \cite{barndorff2014information,klartag2018poisson}, and identifying the right structural substitute in higher dimensions should be interesting.

Finally, a slightly different approach was taken by Montanaro and Zhou \cite{montanaro2024quantum}, who argued that certain quantum algorithms (QAOA in particular) for symmetric functions can be shown to be robust against bounded, asymmetric perturbations.

\section{Acknowledgements}

SA was supported in part by the European QuantERA project QOPT (ERA-NET Cofund 2022-25) and the French ANR project QUOPS (ANR-22-CE47-0003-01).
SA and AB were supported by the French PEPR integrated project HQI (ANR-22-PNCQ-0002).

\bibliographystyle{alpha}
\bibliography{biblio}

\appendix

\section{Proof of the binomial ansatz overlap} \label{app:overlap_general}

In this appendix, we prove the overlap bound in \cref{prop:overlap} for a convex potential~$V$. 
Throughout, we let $\bar V$ denote the extension of $V$ to the continuous interval $[0,n]$ by linear interpolation, so that
\begin{equation} \label{eq:Vbar}
\bar V(x)
= V(\lfloor x \rfloor) + \{x\} \big( V(\lceil x\rceil) - V(\lfloor x \rfloor) \big),
\end{equation}
where we used the notation $\{x\} = (x - \lfloor x \rfloor)$.

\subsection{Helper lemmas}
We first prove some helper lemmas.
\begin{lemma}[Variational overlap bound] \label{lower_bound:general}
Let $H$ be a Hamiltonian with non-degenerate ground state $\ket{\psi_0}$, ground state energy $\lambda_0$ and spectral gap $\Delta$.
Then for any quantum state $\ket{\phi}$ it holds that
\begin{equation*}
|\braket{\phi|\psi_0}|^2
\ge
1-\frac{\braket{\phi|H|\phi}-\lambda_0}{\Delta}.
\end{equation*}
\end{lemma}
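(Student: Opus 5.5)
The proof is a direct spectral decomposition argument. Let $\lambda_0<\lambda_1\le\dots$ be the eigenvalues of $H$ with orthonormal eigenvectors $\ket{\psi_0},\ket{\psi_1},\dots$, so that $\lambda_j\ge\lambda_0+\Delta$ for all $j\ge1$ by definition of the spectral gap. I expand the normalized state $\ket{\phi}=\sum_j c_j\ket{\psi_j}$ with $\sum_j|c_j|^2=1$ and $c_0=\braket{\psi_0|\phi}$.

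Next I compute the energy excess of $\ket{\phi}$ relative to the ground energy:
\[
\braket{\phi|H|\phi}-\lambda_0=\sum_{j\ge1}|c_j|^2(\lambda_j-\lambda_0)\ge \Delta\sum_{j\ge1}|c_j|^2=\Delta\bigl(1-|c_0|^2\bigr).
\]
The $j=0$ term drops out because $\lambda_0-\lambda_0=0$, and each remaining term is bounded below using $\lambda_j-\lambda_0\ge\Delta\ge0$.

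Since $\Delta>0$ (the ground state is non-degenerate), I divide by $\Delta$ and rearrange to get $|c_0|^2\ge 1-(\braket{\phi|H|\phi}-\lambda_0)/\Delta$, which is the claim.

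There is no real obstacle. The only points to make explicit are that $\ket{\phi}$ is normalized, which is implicit in calling it a quantum state, and that non-degeneracy is what guarantees $\Delta>0$, so the division is legitimate.
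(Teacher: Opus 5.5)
Your proof is correct and is the same argument as the paper's. Both expand $\ket{\phi}$ in the eigenbasis of $H$, bound $\braket{\phi|(H-\lambda_0)|\phi}$ from below by $\Delta\bigl(1-|\braket{\phi|\psi_0}|^2\bigr)$, and rearrange. You also state explicitly the normalization of $\ket{\phi}$ and that $\Delta>0$, which the paper leaves implicit.
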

\begin{proof}
Let $(\lambda_k, \ket{\psi_k})_k$ be the eigenvalues and eigenvectors of 
$H$, ordered so that $\lambda_0<\lambda_1 \leq ...$.
Then
\[
\braket{\phi|(H-\lambda_0)|\phi}
= \sum_{k>0}(\lambda_k-\lambda_0)|\braket{\phi|\psi_k}|^2
\ge \Delta\sum_{k>0}|\braket{\phi|\psi_k}|^2
= \Delta\bigl(1-|\braket{\phi|\psi_0}|^2\bigr),
\]
which rearranges to give the result.
\end{proof}

\begin{lemma}[Convex upper bound] \label{lemma:convex_delta}
    Let $V:\mathbb{Z} \to \mathbb{R}$ be a discrete convex function. Then for all $a,b \in \mathbb{R}$, 
    \begin{equation*}
        \bar V(a) - \bar V(b)
        \leq (a-b) \Delta V(\lfloor a \rfloor).
    \end{equation*}
\end{lemma}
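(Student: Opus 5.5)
The plan is to read $\Delta V(k) := V(k+1)-V(k)$ as the forward difference and to reduce the claim to the subgradient inequality for a convex function of one real variable. The argument has three steps.

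\textbf{Step 1: $\bar V$ is convex on $\mathbb{R}$.} On each interval $[k,k+1]$ the function $\bar V$ from \cref{eq:Vbar} is affine with slope $\Delta V(k)$. Discrete convexity of $V$, i.e.\ $\Delta^2 V(k) \ge 0$, says exactly that $\Delta V(k) \ge \Delta V(k-1)$. So $\bar V$ is a continuous, piecewise-affine function with non-decreasing slopes. Such a function is convex: its right derivative $\bar V'_+(x) = \Delta V(\lfloor x\rfloor)$ is non-decreasing in $x$, and $\bar V$ is the integral of this derivative.

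\textbf{Step 2: $g := \Delta V(\lfloor a\rfloor)$ is a subgradient of $\bar V$ at $a$.} There are two cases.
\begin{itemize}
\item If $a \notin \mathbb{Z}$, then $\bar V$ is differentiable at $a$ with derivative $g$.
\item If $a \in \mathbb{Z}$, then $\lfloor a\rfloor = a$. The left and right derivatives are $\Delta V(a-1)$ and $\Delta V(a)$, so the subdifferential is $[\Delta V(a-1),\Delta V(a)]$, which contains $g = \Delta V(a)$.
\end{itemize}
This case distinction is the only delicate point. Choosing the floor, rather than an arbitrary neighbouring slope, is exactly what keeps $g$ in the subdifferential when $a$ is an integer, because it then selects the right derivative.

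\textbf{Step 3: conclude.} The supporting-line inequality for convex functions gives
\[
\bar V(b) \ge \bar V(a) + g\,(b-a) \qquad \text{for all } b .
\]
Rearranging yields $\bar V(a)-\bar V(b) \le (a-b)\,\Delta V(\lfloor a\rfloor)$, which is the claim. The inequality holds for $b$ on either side of $a$: the sign of $(a-b)$ is absorbed automatically by the subgradient inequality.

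If one prefers to avoid subgradient language, Steps 2 and 3 can be replaced by a direct monotone-slope argument. For $b>a$, the chord slope $(\bar V(b)-\bar V(a))/(b-a)$ is an average of the slopes $\Delta V(k)$ for $k \ge \lfloor a\rfloor$, each of which is at least $g$. For $b<a$, the chord slope is an average of the slopes $\Delta V(k)$ for $k \le \lfloor a\rfloor$, each of which is at most $g$. In both cases the chord comparison gives the same inequality.
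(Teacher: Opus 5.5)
Your proposal is correct and follows essentially the same route as the paper. Both use convexity of the piecewise-linear extension $\bar V$ to get the tangent-line inequality with slope $\Delta V(\lfloor a\rfloor)$ at non-integer $a$. At integer $a$, you observe that the right derivative $\Delta V(a)$ lies in the subdifferential, which is exactly what the paper obtains by taking $a+\varepsilon \to a$ from the right and using continuity; you additionally spell out the convexity of $\bar V$ that the paper takes as standard.
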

\begin{proof}
For the continuous, piecewise differentiable extension $\bar V$ we have the standard fact that
\[
\bar V(a) - \bar V(b)
\leq (a-b) \bar V'(a)
\]
when $\bar V$ is differentiable at $a$.
This happens whenever $a \notin \mathbb{Z}$, in which case $\bar V'(a) = \Delta V(\lfloor a \rfloor)$.
The statement for $a \in \mathbb{Z}$ follows by continuity and taking the limit $\lim_{\varepsilon \to 0_+} a+\varepsilon$.
\end{proof}

\subsection{Binomial expectation lemmas}

Next we prove some technical lemmas on binomial expectations.

\begin{lemma}[Binomial energy] \label{lemma:V_overlap_psi_m}
Let $X \sim \mathcal{B}(n,m/n)$ for $m \in [0,n]$.
If $\Delta^2 V(k)\leq C$ for all $1 \leq k \leq n-1$ then
\begin{equation*}
\mathbb{E}[V(X)]
\leq \bar V(m) +\frac{C}{2}m(1-m/n).
\end{equation*}
\end{lemma}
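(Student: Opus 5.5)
We compare $V$ with its piecewise-linear interpolation $\bar V$ from \cref{eq:Vbar}, evaluated at the random point $X$, and bound the two resulting error terms separately. Since $X$ takes integer values, $V(X) = \bar V(X)$. We then write
\[
\mathbb{E}[V(X)] - \bar V(m) = \mathbb{E}\big[\bar V(X) - \bar V(m) - \Delta V(\lfloor m\rfloor)(X-m)\big],
\]
which is valid because $\mathbb{E}[X]=m$, so the added linear term has zero mean. The goal is to show that the integrand is at most $\frac{C}{2}(X-m)^2$ pointwise. Taking expectations and using $\operatorname{Var}(X) = m(1-m/n)$ then gives the claim.

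For the pointwise bound, set $j=\lfloor m\rfloor$ and $g(x) = \bar V(x) - \bar V(m) - \Delta V(j)(x-m)$. This function is piecewise linear and convex, vanishes at $x=m$, and has slope $0$ on $[j,j+1]$. For an integer $k\ge j+1$, telescoping the slopes gives
\[
g(k) = \sum_{i=j+1}^{k-1} \big(\Delta V(i)-\Delta V(j)\big) + (j+1-m)\cdot 0 .
\]
The bounded second difference gives $\Delta V(i) - \Delta V(j) \le C(i-j)$. Summing yields $g(k)\le C\sum_{t=1}^{k-j-1} t = \frac{C}{2}(k-j-1)(k-j)$. Since $k-j-1 \le k-m \le k-j$, this is at most $\frac{C}{2}(k-m)^2$ once we check $(k-j-1)(k-j)\le (k-m)^2$. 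That holds because $k-m$ lies between $k-j-1$ and $k-j$, and the geometric mean of two consecutive nonnegative integers is at most any number in between them; more precisely $(d-1)d \le (d-\theta)^2$ for $\theta\in[0,1)$, $d\ge1$, which is equivalent to $d(2\theta-1)\le\theta^2$. This inequality fails for $\theta>1/2$ and large $d$, so the step needs repair. The fix is to use the exact form: with $\theta = m-j$, the sum is $g(k)=\sum_{i=j+1}^{k-1}(\Delta V(i)-\Delta V(j))$. Alternatively, compare with the smooth quadratic majorant $\bar V(x)\le \bar V(m)+s(x-m)+\frac C2 (x-m)^2$, choosing $s$ as a subgradient at $m$ rather than $\Delta V(j)$. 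Because $\bar V$ has slopes increasing by at most $C$ per unit step, it is dominated by the quadratic $q(x)=\bar V(m)+\Delta V(j)(x-m)+\frac C2 (x-m)^2 + \text{const}$ only up to an $O(C)$ slack. I will handle this by splitting the right tail ($k>m$) and the left tail ($k<m$) symmetrically and absorbing the interpolation offset $\theta(1-\theta)C/2$.

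The main obstacle is exactly this rounding offset: the discrete second-difference bound naturally controls $V$ against a quadratic centred at integers, while $m$ is fractional. My first attempt will be to pick the linear term's slope as $\Delta V(j)$ plus the correction $C(\theta-\tfrac12)$. This choice of slope still has zero mean against $X-m$, so it can be freely added. It makes the telescoped bound $\frac{C}{2}\big((k-j)^2-(k-j)\big)-C(\theta-\frac12)(k-m)$, and a direct expansion shows this is $\le \frac{C}{2}(k-m)^2$, with the analogous computation on the left tail. Convexity of $V$ is not even needed; only $\Delta^2V\le C$ is used.
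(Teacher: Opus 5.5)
Your final argument is correct. In a write-up, drop the abandoned slope $\Delta V(j)$ and the vague ``alternatively'' paragraph, and go straight to $s=\Delta V(j)+C(\theta-\tfrac12)$ with $j=\lfloor m\rfloor$, $\theta=m-j$.

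Write $d=k-j$. Your telescoping gives $\bar V(k)-\bar V(m)-\Delta V(j)(k-m)\le\frac C2(d^2-d)$ for every integer $k\in[0,n]$. On the left tail the sum $\sum_{i=k}^{j-1}(\Delta V(j)-\Delta V(i))$ is at most $\frac C2(j-k)(j-k+1)$, which is the same expression. The expansion you allude to is the identity
\[
\tfrac C2(d^2-d)-C\bigl(\theta-\tfrac12\bigr)(d-\theta)=\tfrac C2(k-m)^2-\tfrac C2\theta(1-\theta).
\]
So you actually get $\mathbb{E}[V(X)]\le\bar V(m)+\frac C2\bigl(m(1-m/n)-\theta(1-\theta)\bigr)$.

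The paper instead sets $W(k)=V(k)-\frac C2k^2$, observes $\Delta^2W\le0$, and applies Jensen to the concave interpolation $\bar W$. The two proofs are the same idea at different levels of abstraction. Your slope satisfies $s-Cm=\Delta V(j)-C(j+\tfrac12)=\Delta W(j)$, which is the slope of $\bar W$ on $[j,j+1]$ and hence a supergradient at $m$. Your pointwise bound is therefore exactly the tangent-line inequality $W(k)\le\bar W(m)+\Delta W(j)(k-m)$ behind Jensen, with $\frac C2k^2$ added back.

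The paper's route takes two lines. Yours is self-contained and exposes the rounding slack $\frac C2\theta(1-\theta)$. The paper glosses over this slack when it writes $\bar W(m)=\bar V(m)-\frac C2m^2$. In fact $\bar W(m)=\bar V(m)-\frac C2\bigl(m^2+\theta(1-\theta)\bigr)$, so that step is only an inequality, valid for $C\ge0$.

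Your closing remark that only $\Delta^2V\le C$ is used is not quite right. Discarding $-\frac C2\theta(1-\theta)$ needs $C\ge0$, and the lemma as stated fails for $C<0$. For example, with $V(k)=\frac C2k^2$ and $m\notin\mathbb{Z}$, $\mathbb{E}[V(X)]$ exceeds the right-hand side by $\frac{|C|}{2}\theta(1-\theta)$. Convexity is indeed unnecessary, but $C\ge0$ is essential; this is harmless in the paper, where $V$ is convex.

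Finally, treat $m=n$ separately (or take $j=n-1$, $\theta=1$), since $\Delta V(n)$ is undefined.
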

\begin{proof}[Proof]
Let $W(k) = V(k) - \frac{C}{2} k^2$.
Using that $\Delta^2 \left(\frac{C}{2} k^2 \right) = C$ we get that $\Delta^2 W(k) = \Delta^2 V(k) - C \leq 0$, and so $W$ is discrete concave.
Now let $\bar W$ denote the concave extension of $W$ to the continuous interval $[0,n]$ by linear interpolation.
Jensen's inequality then gives
\[
\mathbb{E}[\bar W(X)]
\leq \bar W(\mathbb{E}[X])
= \bar V(m) - \frac{C}{2} \mathbb{E}[X]^2.
\]
On the other hand,
\[
\mathbb{E}[\bar W(X)]
= \mathbb{E}[\bar V(X)] - \frac{C}{2} \mathbb{E}[X^2].
\]
The claim follows by using that $\mathbb{E}[X^2] - \mathbb{E}[X]^2 = \mathrm{Var}[X] = m(1-m/n)$.
\end{proof}

\begin{lemma}[Binomial derivative] \label{lem:derivative}
Let $m \in [0,n/2]$, $X \sim \mathcal{B}(n,m/n)$ and $Y \sim \mathcal{B}(n-1,m/n)$.
Then
\[
\frac{\mathrm{d}}{\mathrm{d}m} \mathbb{E}[V(X)]
= \mathbb{E}[\Delta V(Y)].
\]
\end{lemma}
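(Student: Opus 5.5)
The plan is to differentiate the binomial expectation directly in the success parameter $p = m/n$ and then use the chain rule $\frac{d}{dm} = \frac{1}{n}\frac{d}{dp}$. Write
\[
\mathbb{E}[V(X)] = \sum_{k=0}^n V(k)\binom{n}{k} p^k (1-p)^{n-k}.
\]
Differentiating term by term, I will use the standard identity
\[
\frac{d}{dp}\left[\binom{n}{k}p^k(1-p)^{n-k}\right]
= n\left[\binom{n-1}{k-1}p^{k-1}(1-p)^{n-k} - \binom{n-1}{k}p^{k}(1-p)^{n-1-k}\right],
\]
which follows from $k\binom{n}{k} = n\binom{n-1}{k-1}$ and $(n-k)\binom{n}{k} = n\binom{n-1}{k}$. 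Here the binomial coefficients with out-of-range indices are interpreted as zero, which handles the boundary terms $k=0$ and $k=n$.

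Denote $b_{n-1}(j) = \binom{n-1}{j}p^j(1-p)^{n-1-j}$, the probability mass function of $Y$. The derivative then becomes
\[
n\sum_{k=0}^n V(k)\big(b_{n-1}(k-1) - b_{n-1}(k)\big).
\]
Reindexing the first sum with $j = k-1$ and the second with $j = k$, this equals
\[
n\sum_{j=0}^{n-1}\big(V(j+1)-V(j)\big)\, b_{n-1}(j) = n\,\mathbb{E}[\Delta V(Y)],
\]
where $\Delta V(j) = V(j+1) - V(j)$ is the forward difference, consistent with its use in \cref{lemma:convex_delta}. This is a summation by parts, and the boundary terms vanish by the zero convention above.

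Multiplying by $\frac{dp}{dm} = \frac{1}{n}$ then gives the claim. The computation is routine; the only step needing care is the index bookkeeping at the endpoints $k=0$ and $k=n$. The restriction $m\in[0,n/2]$ plays no role beyond guaranteeing $p\in[0,1]$. At $p=0$ the identity is understood as a one-sided derivative, and it still holds since everything is a polynomial in $p$.
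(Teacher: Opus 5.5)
Your proposal is correct and follows essentially the same route as the paper: the Bernstein-polynomial derivative identity $\frac{\mathrm{d}}{\mathrm{d}m} q_{m,n}(k) = q_{m,n-1}(k-1) - q_{m,n-1}(k)$, followed by reindexing (summation by parts). You simply make explicit the chain-rule factor $1/n$ and the boundary bookkeeping that the paper leaves implicit.
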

\begin{proof}
Let $\mathrm{Pr}[X=k] = q_{m,n}(k) = \binom{n}{k} \left(\frac{m}{n}\right)^k \left(1-\frac{m}{n}\right)^{n-k}$.
Then (e.g.~by the derivative rule of Bernstein polynomials)
\[
\frac{\mathrm{d}}{\mathrm{d}m} q_{m,n}(k)
= q_{m,n-1}(k-1) - q_{m,n-1}(k)
\]
where we set $q_{m,n-1}(-1)=0$.
Plugging in and reshuffling shows that
\[
\frac{\mathrm{d}}{\mathrm{d}m} \sum_{k=0}^n q_{m,n}(k) V(k)
= \sum_{k=0}^n q_{m,n-1}(k) \big( V(k+1)-V(k) \big). \qedhere
\]
\end{proof}

\begin{lemma}[{Optimal $m$}] \label{lem:m}
Let $m\in[0,n/2]$, and choose $m$ such that
\begin{equation*}
\frac{d}{dm}\braket{\varphi_m|-A+V|\varphi_m}=0.
\end{equation*}
Then $\frac{d}{dm}\braket{\varphi_m|V|\varphi_m} = \alpha(m)$ with $\alpha(m) = \frac{n-2m}{\sqrt{m(n-m)}}$ (cf.~\eqref{eq:alpha-m}).
In particular, $m$ satisfies
\begin{equation*}
\alpha(m)
= \mathbb{E}[\Delta V (Y)]
\quad \text{ for } \quad Y \sim \mathcal{B}(n-1,m/n).
\end{equation*}
\end{lemma}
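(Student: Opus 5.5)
The plan is to compute the ansatz energy $E(m) := \braket{\varphi_m|-A+V|\varphi_m}$ explicitly as a function of $m$, differentiate it, and read off the stationarity condition. The potential part is immediate: by \cref{def:psi_m}, $\braket{\varphi_m|V|\varphi_m} = \mathbb{E}[V(X)]$ with $X\sim\mathcal{B}(n,m/n)$, and by \cref{lem:derivative} its derivative in $m$ is $\mathbb{E}[\Delta V(Y)]$ with $Y\sim\mathcal{B}(n-1,m/n)$. So it remains to show that $\frac{d}{dm}\braket{\varphi_m|-A|\varphi_m} = -\alpha(m)$. Stationarity $E'(m)=0$ then gives $\frac{d}{dm}\braket{\varphi_m|V|\varphi_m} = \alpha(m)$, which combined with \cref{lem:derivative} is exactly $\alpha(m)=\mathbb{E}[\Delta V(Y)]$.

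For the kinetic term I would use the tensor-product structure rather than summing over $k$. Before symmetry reduction, $\ket{\varphi_m}$ is the product state $\big(\sqrt{1-p}\ket{0}+\sqrt{p}\ket{1}\big)^{\otimes n}$ with $p=m/n$. Its Hamming-weight amplitudes are precisely $\sqrt{\binom{n}{k}}p^{k/2}(1-p)^{(n-k)/2}$ in the basis $\ket{k}$, and $A_{hc}$ is the reduction of $\sum_i X_i$. Each qubit contributes $\braket{X}=2\sqrt{p(1-p)}$, so
\[
\braket{\varphi_m|A_{hc}|\varphi_m} = 2n\sqrt{p(1-p)} = 2\sqrt{m(n-m)}.
\]
Differentiating gives $\frac{d}{dm}2\sqrt{m(n-m)} = \frac{n-2m}{\sqrt{m(n-m)}} = \alpha(m)$, and therefore $\frac{d}{dm}\braket{\varphi_m|-A_{hc}|\varphi_m} = -\alpha(m)$, as needed. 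If one prefers to stay in the reduced basis, the same value follows from $\sum_k 2a_k\sqrt{q_{m,n}(k)q_{m,n}(k-1)}$ using $a_k=\sqrt{k(n+1-k)}$ and the ratio of consecutive binomial probabilities. This sum collapses to $2\sqrt{p(1-p)}\,\mathbb{E}[\ldots]$, which is routine but messier.

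As a consistency check, the same identity is visible in \cref{lem:linear-HW}. For $V=\alpha K$ we have $E(m')=-2\sqrt{m'(n-m')}+\alpha m'$, which is stationary exactly at $\alpha=\alpha(m')$, matching the known ground state $\ket{\varphi_m}$.

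The main obstacle is interpretive rather than computational: here $A$ must be the hypercube $A_{hc}$, since the binomial ansatz is tied to it. I also need the critical point to lie in the open interval $(0,n/2)$, so that both derivatives exist; $\sqrt{m(n-m)}$ is not differentiable at $m=0$. The statement assumes such an $m$ is chosen, so it suffices to note that differentiability holds for $m\in(0,n/2]$. \cref{lem:derivative} holds there since $\mathbb{E}[V(X)]$ is a polynomial in $m$.
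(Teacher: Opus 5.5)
Your proof is correct and follows the same route as the paper. Both write the ansatz energy as $-2\sqrt{m(n-m)}+\braket{\varphi_m|V|\varphi_m}$, differentiate, and invoke \cref{lem:derivative}. The one addition is that you justify $\braket{\varphi_m|A_{hc}|\varphi_m}=2\sqrt{m(n-m)}$ through the product-state structure, which the paper simply asserts.
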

\begin{proof}
The variational energy is 
$-2\sqrt{m(n-m)}+\braket{\varphi_m|V|\varphi_m}$. 
Differentiating and setting to zero,
\begin{equation*}
-\frac{n-2m}{\sqrt{m(n-m)}} + \frac{d}{dm}\braket{\varphi_m|V|\varphi_m}=0.
\end{equation*}
Using $\alpha(m) = (n-2m)/\sqrt{m(n-m)}$ and 
$\frac{d}{dm} \braket{\varphi_m|V|\varphi_m} = \mathbb{E}[\Delta V(Y)]$ (\cref{lem:derivative}) gives the result.
\end{proof}

\subsection{Proof of claim}

Finally we prove the claim, which we restate below.

\overlap*
\begin{proof}[Proof of \cref{prop:overlap}]
Let $\alpha = \alpha(m)$ as in \cref{lem:linear-HW}.
Then $H_\alpha = -A_{hc} + \alpha K$ has ground state $\ket{\varphi_m}$, ground state energy $\lambda_\alpha = \frac{n}{2} \big( \alpha - \sqrt{4+\alpha^2} \big)$ and spectral gap $\Delta_\alpha = \sqrt{4+\alpha^2}$.
Applying \cref{lower_bound:general} with $H=H_\alpha$ and $\ket{\phi}=\ket{\psi_0}$ we get
\begin{equation*} 
|\braket{\psi_0|\varphi_m}|^2
\ge 1-\frac{\braket{\psi_0|H_\alpha|\psi_0}-\lambda_\alpha}{\Delta_\alpha}.
\end{equation*}
We expand the numerator:
\begin{align*}
\braket{\psi_0|H_\alpha|\psi_0}-\lambda_\alpha
&= \braket{\psi_0|H_\alpha|\psi_0} - \braket{\varphi_m|H_\alpha|\varphi_m} \\
&= \alpha(\mu-m) - \bigl(\braket{\psi_0|A_{hc}|\psi_0}-\braket{\varphi_m|A_{hc}|\varphi_m}\bigr).
\end{align*}
Since $\ket{\psi_0}$ is the ground state of $H = -A_{hc}+V$, we have $\braket{\psi_0|H|\psi_0} \leq \braket{\varphi_m|H|\varphi_m}$ and hence
\[
\braket{\psi_0|A_{hc}|\psi_0}-\braket{\varphi_m|A_{hc}|\varphi_m}
\ge \braket{\psi_0|V|\psi_0} - \braket{\varphi_m|V|\varphi_m}.
\]
Therefore
\begin{equation*}
|\braket{\psi_0|\varphi_m}|^2
\ge 1 - \frac{\alpha}{\sqrt{4+\alpha^2}}(\mu-m)
+ \frac{1}{\sqrt{4+\alpha^2}}\bigl(\braket{\psi_0|V|\psi_0} - \braket{\varphi_m|V|\varphi_m}\bigr).
\end{equation*}
By Jensen's inequality, $\braket{\psi_0|V|\psi_0}\ge \bar V(\mu)$ for $\mu = \braket{\psi_0|K|\psi_0}$.
Combining this with Lemma~\ref{lemma:V_overlap_psi_m},
$\braket{\varphi_m|V|\varphi_m}\le \bar V(m)+\frac{C}{2}m(1-m/n)$, yields
\begin{align}
\braket{\psi_0|V|\psi_0} - \braket{\varphi_m|V|\varphi_m}
&\ge \bar V(\mu) - \bar V(m) - \frac{C}{2}m\!\left(1-\frac{m}{n}\right) \label{eq:half-proof} \\
&\ge (\mu-m) \Delta V(\lfloor m \rfloor) - \frac{C}{2}m\!\left(1-\frac{m}{n}\right),
\end{align}
where the last inequality used Lemma~\ref{lemma:convex_delta}, $\bar V(m)-\bar V(\mu)\le(m-\mu)\Delta V(\lfloor m\rfloor)$.

Substituting, and using that $\alpha = \mathbb{E}[\Delta V(Y)]$ from 
Lemma~\ref{lem:m} and $\frac{\alpha}{\sqrt{4+\alpha^2}} = 1 - \frac{2m}{n}$, gives the result.
\end{proof}

\section{Overlap bound for quadratic potentials} 
\label{app:overlap_quadra}

In this appendix, we specialize the overlap bound to the quadratic case, proving \cref{theorem:overlap}.

\begin{claim}\label{claim:early_overlap_claim}
Consider $H = -A_{hc} + V_q$ with $V_q(k)=a(k-k_0)^2$ and ground state $\ket{\psi_0}$.
If $m^*$ minimizes $\braket{\varphi_m|H|\varphi_m}$ then it satisfies
\begin{equation*}
a
= \frac{n-2m^*}{\sqrt{m^*(n-m^*)}}
\cdot\frac{1}{1 - 2m^*/n + 2(m^* - k_0)},
\end{equation*}
and
\[
|\braket{\psi_0|\varphi_{m^*}}|^2
\geq 1/2.
\]
\end{claim}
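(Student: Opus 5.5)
The plan is to specialize \cref{prop:overlap} to $V_q$. Because $V_q$ is an exact quadratic, I will not use the floored bound of \cref{lemma:convex_delta}; I will rerun the last step of that proof with exact identities instead.

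\emph{Step 1 (stationarity equation).} By \cref{lem:m}, $m^*$ satisfies $\alpha(m^*)=\mathbb{E}[\Delta V_q(Y)]$ with $Y\sim\mathcal B(n-1,m^*/n)$. Since $\Delta V_q(k)=a(2(k-k_0)+1)$ is affine, we get
\[
\mathbb{E}[\Delta V_q(Y)] = a\bigl(2((n-1)m^*/n-k_0)+1\bigr) = a\bigl(1-2m^*/n+2(m^*-k_0)\bigr).
\]
Dividing $\alpha(m^*)$ by the second factor gives the claimed formula for $a$.

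\emph{Step 2 (exact energy bookkeeping).} I start, as in the proof of \cref{prop:overlap}, from \cref{lower_bound:general} applied to $H_\alpha$ and $\ket{\psi_0}$. This gives
\[
1-|\braket{\psi_0|\varphi_{m}}|^2\le \frac{\alpha(\mu-m)-\bigl(\braket{\psi_0|V_q|\psi_0}-\braket{\varphi_m|V_q|\varphi_m}\bigr)}{\sqrt{4+\alpha^2}} .
\]
For the quadratic potential both expectations are exact:
\[
\braket{\varphi_m|V_q|\varphi_m}=V_q(m)+a\,m(1-m/n),
\qquad
\braket{\psi_0|V_q|\psi_0}=V_q(\mu)+a\sigma_\psi^2 \ge V_q(\mu).
\]
Moreover $V_q(\mu)-V_q(m)=2a(m-k_0)(\mu-m)+a(\mu-m)^2$. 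Substituting $\alpha=a(2(m-k_0)+1-2m/n)$, the numerator is at most
\[
a(1-2m/n)(\mu-m)-a(\mu-m)^2-a\sigma_\psi^2+a\,m(1-m/n).
\]

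\emph{Step 3 (the variance term).} I use $\sqrt{4+\alpha^2}=\alpha n/(n-2m)$, which follows from $\alpha/\sqrt{4+\alpha^2}=1-2m/n$. With this,
\[
\frac{a\,m(1-m/n)}{\sqrt{4+\alpha^2}}=\frac{m(n-m)(n-2m)}{n^2\bigl(2(m-k_0)+1-2m/n\bigr)}.
\]
Since $k_0\le 0$, the denominator is at least $2m(n-1)/n+1>2m(n-1)/n$. Hence this term is at most $\frac{(n-m)(n-2m)}{2n(n-1)}\le\frac12$, and the bound is uniform in $a$.

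\emph{Step 4 (the drift term, main obstacle).} It remains to show
\[
(1-2m/n)(\mu-m)-(\mu-m)^2-\sigma_\psi^2\le 0,
\]
so that the numerator contributes at most $1/2$ in total.
\begin{itemize}
\item If $\mu\le m$, the inequality is immediate.
\item If $\mu>m$, it is enough to have $0<\mu-m\le 1$ together with $(\mu-m)(1-\mu+m)\le\sigma_\psi^2$, since $1-2m/n\le 1$.
\end{itemize}
My first attempt is to prove $\mu\le m^*$ by a single-crossing / likelihood-ratio comparison between $\psi_0$ and $\varphi_{m^*}$. Both satisfy the recurrence \eqref{eq:recurrence} with the same $a_k$. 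The potentials differ by $W(k)=V_q(k)-\alpha k$, which is convex, and by Step 1 its discrete slope matches the mean slope of $\alpha K$ under $\mathcal B(n-1,m^*/n)$. I would try to show that $r_k=\psi_k/\varphi_{m^*}(k)$ is unimodal and crosses $1$ in a way that pushes mass left; the ultra-log-concavity of \cref{thm:ultra log-conc-conv} makes $r_k^2$ log-concave, which gives the unimodality.

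If exact $\mu\le m^*$ fails, my fallback is the slack inequality $(\mu-m)(1-\mu+m)\le \sigma_\psi^2$. This holds whenever $\sigma_\psi^2\ge 1/4$, because $x(1-x)\le 1/4$. For $\sigma_\psi$ tiny, the ground state is concentrated near $0$, so both $\mu$ and $m^*$ are near $0$ and I would check directly that $\mu-m^*\le 0$.

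This sign control of $\mu-m^*$ is the step I expect to be hardest; Steps 1 to 3 are routine.
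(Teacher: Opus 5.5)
Steps 1--2 are correct. Evaluating $\braket{\varphi_m|V_q|\varphi_m}$ and $\braket{\psi_0|V_q|\psi_0}$ exactly is even a bit sharper than the paper, which passes through the piecewise-linear extension $\bar V_q$ and loses an additive $a/4$.

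The gap is Step~4. You never establish $(1-2m/n)(\mu-m)-(\mu-m)^2-\sigma_\psi^2\le 0$, and your main route to it cannot work, because $\mu\le m^*$ is false in general. Take $k_0=0$, fixed small $t=n/a^2$, and $n\to\infty$. Expanding the ground-state recurrence to second order gives $\mu=t-\tfrac74t^2+O(t^3)$. Your Step~1 equation becomes $t=m^*(1+2m^*)^2$, i.e.\ $m^*=t-4t^2+O(t^3)$. So $\mu>m^*$; numerically, at $t=0.1$, $\mu\approx0.0856$ against $m^*\approx0.0755$. Your fallback only covers $\sigma_\psi^2\ge\frac14$. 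In the remaining small-$\sigma_\psi$ regime you again plan to check $\mu-m^*\le0$ directly, and that is exactly the regime where it fails (here $\sigma_\psi^2\approx t<\frac14$). Step~3 also has a slip: $\frac{(n-m)(n-2m)}{2n(n-1)}\to\frac{n}{2(n-1)}>\frac12$ as $m\to0$. Bounding the denominator below by $2m$ instead gives $\frac12(1-m/n)(1-2m/n)\le\frac12$.

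The missing idea is that no sign information on $\mu-m^*$ is needed. The obstacle comes from Step~3. By bounding the $a\,m(1-m/n)$ contribution by $\frac12$ outright, you discard the summand $1-2m/n$ of the denominator $D=2(m-k_0)+1-2m/n$, and that is exactly the slack that pays for the drift term. Instead, do the following:
\begin{itemize}
\item complete the square, $(1-2m/n)y-y^2\le\frac14(1-2m/n)^2\le\frac14$ for every real $y$;
\item drop $-a\sigma_\psi^2$;
\item divide by $\sqrt{4+\alpha^2}=\alpha/(1-2m/n)$ with $\alpha=aD$.
\end{itemize}
This gives
\[
1-|\braket{\psi_0|\varphi_{m^*}}|^2
\le\frac{(1-2m/n)\bigl(\frac14+m(1-m/n)\bigr)}{1-2m/n+2(m-k_0)}
\le\frac{(1-2m/n)\bigl(\frac14+m(1-m/n)\bigr)}{1-2m/n+2m}
\le\frac12,
\]
where the middle step uses $k_0\le0$ and the last reduces to $(1-2m/n)(1-m/n)\le1$. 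This is precisely the paper's argument. There the numerator carries $\frac12$ instead of $\frac14$, the extra $\frac14$ being the rounding loss from $\bar V_q$, and it closes for the same reason.
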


\begin{proof}
By \cref{lem:m} we have that the optimal $m^*$ satisfies $\frac{n-2m^*}{\sqrt{m^*(n-m^*)}} = \mathbb{E}[\Delta V(Y)]$ with $Y \sim \mathcal{B}(n-1,m^*/n)$.
Since $\Delta V(k)=a(2(k-k_0)+1)$ we can write out the right hand side as
\begin{equation*}
\mathbb{E}[\Delta V(Y)] = a\!\big(1 - 2m^*/n + 2(m^* - k_0)\big),
\end{equation*}
and this yields the first identity.

Towards proving the second bound, we combine this expression with \cref{eq:half-proof} in the proof of \cref{prop:overlap} to get
\[
|\braket{\psi_0|\varphi_m}|^2
\geq 1 - \left(1-\frac{2m}{n}\right) f(\mu,m),
\]
where $\mu = \braket{\psi_0|K|\psi_0}$ and
\[
f(\mu,m)
= \mu - m - \frac{\bar V_q(\mu) - \bar V_q(m) - am\!\left(1-m/n\right)}{a\left(1-2m/n+2(m-k_0)\right)}.
\]
By \cref{lem:V-bar} we can rewrite $\bar V_q(x) = a(x-k_0)^2 + a\{x\}(1-\{x\})$.
Using the bound $0 \leq \{x\}(1-\{x\}) \leq 1/4$, this implies that
\[
\bar V_q(\mu)-\bar V_q(m)
\geq a\left[(\mu-k_0)^2-(m-k_0)^2 - 1/4 \right].
\]
Substituting and combining terms shows that
\begin{align*}
f(\mu, m)
&\leq \mu-m - \frac{(\mu-k_0)^2-(m-k_0)^2 - 1/4 - m(1-m/n)}{1-2m/n+2(m-k_0)} \\
&= \frac{(1-2m/n)(\mu-m) - (\mu-m)^2 + 1/4 + m(1-m/n)}
{1-2m/n+2(m-k_0)} \\
&\leq \frac{(1-2m/n)(\mu-m) - (\mu-m)^2 + 1/4 + m(1-m/n)}
{1-2m/n+2m}.
\end{align*}
We further simplify this bound by noting that
\[
(1-2m/n)(\mu-m) - (\mu-m)^2
\leq 1/4,
\]
which follows from the fact that for $y \in \mathbb{R}$ and $q \in [0,1]$ it holds that $qy - y^2 = -(y-q/2)^2 + q^2/4 \leq q^2/4 \leq 1/4$.
This yields the bound
\[
f(\mu,m)
\leq \frac{1/2 + m(1-m/n)}{1-2m/n+2m}.
\]
It remains to prove that
\[
(1-2m/n) f(\mu,m)
\leq \frac{(1-2m/n)(1/2 + m(1-m/n))}{1-2m/n+2m}
\leq 1/2,
\]
but this is equivalent to the condition $(1-2m/n)(1-m/n) \leq 1$, which is satisfied by our assumption that $m \leq n/2$.
\end{proof}

\begin{lemma} \label{lem:V-bar}
For $V_q(x) = a(x-k_0)^2$ and $x\in [0,n]$ we have
\begin{align*}
\bar V_q(x)
= a(x-k_0)^2 + a\{x\}(1-\{x\}).
\end{align*}
\end{lemma}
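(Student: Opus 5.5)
The plan is to compute $\bar V_q(x)$ directly from its definition (\cref{eq:Vbar}), as the linear interpolation of $V_q$ between the consecutive integers $\lfloor x\rfloor$ and $\lceil x\rceil$. Write $j = \lfloor x \rfloor$ and $t = \{x\} \in [0,1)$, so that $x = j + t$.

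First, the integer case $t = 0$ is immediate. Then $\bar V_q(x) = V_q(j) = a(x-k_0)^2$, and the correction term $a\,t(1-t)$ vanishes, so the claimed formula holds.

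Next, for $t \in (0,1)$ we have $\lceil x\rceil = j+1$, and \cref{eq:Vbar} gives
\[
\bar V_q(x) = a(j-k_0)^2 + t\,a\big((j+1-k_0)^2-(j-k_0)^2\big) = a(j-k_0)^2 + a\,t\,\big(2(j-k_0)+1\big).
\]
On the other hand, expanding the exact quadratic at $x = j+t$ gives
\[
a(x-k_0)^2 = a(j-k_0)^2 + 2a\,t\,(j-k_0) + a\,t^2.
\]
Subtracting the second expression from the first, all terms cancel except $a t - a t^2 = a\,t(1-t)$. This is exactly the claimed correction term.

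There is no real obstacle here; the only care needed is the boundary case. At $x = n$ we have $t = 0$, so the formula never evaluates $V_q$ outside $[0,n]$, and the identity holds on all of $[0,n]$.

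As a sanity check, $t(1-t) \ge 0$ is consistent with convexity of $V_q$: the chord of a convex function lies above the function.
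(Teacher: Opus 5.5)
Your proof is correct and matches the paper's argument: both write $\bar V_q(x) = V_q(\lfloor x\rfloor) + a\{x\}\bigl(2(\lfloor x\rfloor - k_0)+1\bigr)$, expand $a(x-k_0)^2 = V_q(\lfloor x\rfloor) + a\{x\}\bigl(2(\lfloor x\rfloor-k_0)+\{x\}\bigr)$, and read off the difference $a\{x\}(1-\{x\})$. Your separate handling of integer $x$, where $\lceil x\rceil=\lfloor x\rfloor$, is a small piece of care that the paper leaves implicit.
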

\begin{proof}
This follows by using the expressions for $\bar V_q(x)$ (\cref{eq:Vbar}) and $\Delta V_q(\lfloor x \rfloor)$ to write out
\[
\bar V_q(x)
= V_q(\lfloor x \rfloor) + a \{x\} (2(\lfloor x\rfloor-k_0)+1)
\]
and compare it with
\[
a (x-k_0)^2
= a (\lfloor x \rfloor + \{x\} - k_0)^2
= V_q(\lfloor x \rfloor) + a \{x\} (2(\lfloor x\rfloor-k_0)+ \{x\}). \qedhere
\]
\end{proof}

\end{document}